\newcommand{\READING}
\newcommand{\journaldtImageWidth}{120mm}
\newcommand{\journaldtImageHeight}{70mm}
\newcommand{\journaldtImageWidth}{80mm}
\newcommand{\journaldtImageHeight}{50mm}
\theoremstyle{plain}
\newtheorem{theorem}{Theorem}[section]
\newtheorem{lemma}[theorem]{Lemma}
\newtheorem{proposition}[theorem]{Proposition}
\newtheorem{definition}[theorem]{Definition}
\newtheorem{remark}[theorem]{Remark}
\newcommand{\ceiling}[1]{\left\lceil{#1}\right\rceil}
\newcommand{\floor}[1]{\left\lfloor{#1}\right\rfloor}
\newcommand{\nfrac}[2]{\left( \frac{#1}{#2} \right)}
\newcommand{\brac}[1]{\left\{ #1 \right\}}
\newcommand{\bras}[1]{\left[ #1 \right]}
\newcommand{\brap}[1]{\left( #1 \right)}
\newcommand{\fpow}[3]{\left(\frac{#1}{#2}\right)^{#3}}
\newcommand{\dep}{\frac{\delta\epsilon_{a}}{\delta\epsilon_{a} + \epsilon_{V}}}
\newcommand{\depn}{\frac{\delta\epsilon_{a}}{\delta\epsilon_{a} + d}}
\newcommand{\Exp}{\mathbb{E}}
\newcommand{\Expp}{\mathbb{E}_{\pi}}
\newcommand{\sZ}{\mathbb{Z}_{+}}
\newcommand{\sR}{\mathbb{R}_{+}}
\newcommand{\ExpH}{\mathbb{E}_{\pi_{H}}}
\newcommand{\Expsqh}{\mathbb{E}_{S \vert Q, H}}
\newcommand{\Deq}{\stackrel{\Delta} = }
\newcommand{\Qg}{\overline{Q}(\gamma)}
\newcommand{\Pg}{\overline{P}(\gamma)}
\newcommand{\Qgq}{\overline{Q}(\gamma, q_{0})}
\newcommand{\Pgq}{\overline{P}(\gamma, q_{0})}
\newcommand{\Ag}{\overline{A}(\gamma)}
\newcommand{\Sg}{\overline{S}(\gamma)}
\newcommand{\Qgk}{\overline{Q}(\gamma_{k})}
\newcommand{\Pgk}{\overline{P}(\gamma_{k})}
\newcommand{\Ugk}{\overline{U}(\gamma_{k})}
\newcommand{\Agk}{\overline{A}(\gamma_{k})}
\newcommand{\Sgk}{\overline{S}(\gamma_{k})}
\newcommand{\sq}{\overline{s}(q)}
\newcommand{\sQ}{\overline{s}(Q)}
\DeclareMathOperator*{\mini}{minimize}
\newcommand*{\qeda}{\hfill\ensuremath{\blacksquare}}
\begin{document}

\title{On the tradeoff of average delay \\ and average power for fading point-to-point links \\ with monotone policies}

\author{Vineeth B. S. and Utpal Mukherji \\ Department of Electrical Communication Engineering, \\Indian Institute of Science, Bangalore - 560012.\\\{vineeth,utpal\}@ece.iisc.ernet.in}

\maketitle

\begin{abstract}
We consider a fading point-to-point link with packets arriving randomly at rate $\lambda$ per slot to the transmitter queue.
We assume that the transmitter can control the number of packets served in a slot by varying the transmit power for the slot.
We restrict to transmitter scheduling policies that are monotone and stationary, i.e., the number of packets served is a non-decreasing function of the queue length at the beginning of the slot for every slot fade state.
For such policies, we obtain asymptotic lower bounds for the minimum average delay of the packets, when average transmitter power is a small positive quantity $V$ more than the minimum average power required for transmitter queue stability.
We show that the minimum average delay grows either to a finite value or as $\Omega\brap{\log(1/V)}$ or $\Omega\brap{1/V}$ when $V \downarrow 0$, for certain sets of values of $\lambda$.
These sets are determined by the distribution of fading gain, the maximum number of packets which can be transmitted in a slot, and the transmit power function of the fading gain and the number of packets transmitted that is assumed.
We identify a case where the above behaviour of the tradeoff differs from that obtained from a previously considered approximate model, in which the random queue length process is assumed to evolve on the non-negative real line, and the transmit power function is strictly convex.
We also consider a fading point-to-point link, where the transmitter, in addition to controlling the number of packets served, can also control the number of packets admitted in every slot.
We obtain asymptotic lower bounds for the minimum average delay of the packets under a constraint on the average throughput, when average transmitter power is a small positive quantity $V$ more than the minimum average power required for transmitter queue stability.
We show that the minimum average delay grows either to a finite value or as $\Omega\brap{\log(1/V)}$ when $V \downarrow 0$, for certain sets of values of $\lambda$.
Our approach, which uses bounds on the stationary probability distribution of the queue length, also leads to an intuitive explanation of the asymptotic behaviour of average delay in the regime where $V \downarrow 0$.
\end{abstract}

\section{Introduction}

We study the optimal tradeoff of average delay with average power for point-to-point communication links with random arrivals and fading.
In this paper, we obtain asymptotic lower bounds on the minimum average delay in the asymptotic regime where average power is made arbitrarily close to the minimum average power which is required for stability of the transmitter queue.
Such asymptotic lower bounds are significant since they can be used to quantify how \emph{close} the performance of scheduling and power control policies are to the optimal.

We consider a slotted time queueing model with random arrival of packets.
The packets are assumed to be buffered in an infinite length queue.
The random fade state is assumed to be constant in each slot.
In each slot, the transmitter schedules a batch, with say $s$ packets, to be transmitted over the point-to-point fading link.
We assume that the transmitter expends a power of $P(h, s)$ watts when transmitting $s$ packets when the fade state is $h$.
For a particular policy of operation, the performance measures that we consider are the time average power and the time average queue length.
We note that for cases of interest, the average delay can be obtained from the time average queue length.
In this paper, we consider the tradeoff of average queue length with the average power over all possible policies.
Equivalently, our objective is to characterize the minimum average queue length subject to a constraint on the average power over all possible policies.
We mainly consider two queueing models, I-model and R-model, for which we characterize the tradeoff.
For I-model, we assume that the queue length evolution is on the set of non-negative integers, while for R-model, we assume that the queue length evolution is on the set of non-negative real numbers, with $P(h, s)$ being a strictly convex function in $s$.
We note that R-model is usually used as a tractable approximation to I-model.

We note that interesting solutions to the tradeoff problem have finite average queue length.
It is intuitive that if the average queue length has to be finite, the time-average service rate has to be equal to the average arrival rate $\lambda$.
Therefore, there is a minimum positive average power that is expended for the average service rate to be equal to $\lambda$.
It turns out that this minimum average power is a function of $\lambda$ for a given fade distribution, maximum batch service size $S_{max}$, and power cost function $P(h, s)$.
For the following discussion, this minimum average power is denoted as $c(\lambda)$ for I-model and $c_{R}(\lambda)$ for R-model.
In this paper, we consider the tradeoff problem in an asymptotic regime, denoted $\Re$, in which the difference, $V$, between the average power constraint and $c(\lambda)$ (or $c_{R}(\lambda)$ for R-model) is made arbitrarily close to zero.
We obtain and compare the asymptotic lower bounds on the minimum average delay in the asymptotic regime $\Re$ for both I-model and R-model.
We show that R-model with the strictly convex $P(h, s)$ function is not an appropriate approximation for I-model.

\subsection{Related work}
In this section, we review existing results for the tradeoff problem, on the basis of the approach used to obtain such results.
The most common approach which has been used to address the tradeoff problem has been to formulate it as a constrained Markov decision problem (CMDP), as in \cite{sennott} or \cite{altman}.

The CMDP is further analyzed by considering an equivalent Markov decision problem (MDP) which is obtained via a Lagrange relaxation \cite{ma}.
The MDP for models which are quite similar to I-model has been studied by Berry and Gallager \cite{berry}, Collins and Cruz \cite{collins}, and Goyal et al. \cite{munish}.
Agarwal et al. \cite{agarwal} studies the MDP for R-model.
In all these papers, the authors show that a stationary deterministic optimal policy exists, which is independent of the initial queue length.
This policy prescribes an optimal batch size to be used for service, when the queue length is $q$ and the fade state is $h$, which is a non-decreasing function of $q$ for every fade state $h$.
This monotonicity property of any stationary deterministic optimal policy motivates the definition of admissible policies in this paper.

Another approach has been to obtain an asymptotic order characterization of the minimum average queue length in the asymptotic regime $\Re$ where $V \downarrow 0$ for R-model.
Berry and Gallager \cite{berry} obtained an asymptotic lower bound for the minimum average queue length in the regime $\Re$ for the R-model with a strictly convex transmit power function.
Neely \cite{neely_mac} extended this asymptotic lower bound to a multiuser downlink model, where the transmit power function is a strictly convex function of the vector of fade states and the vector of service batch sizes for the users.
Other asymptotic bounds were also obtained by Neely in \cite{neely_mac}, \cite{neely_utility}, and \cite{huang_neely}.
Extensions to more general networks and other performance measures can be found in \cite{neely}.
A summary of these results is given in Table \ref{table:asymptotic_bounds}.
\begin{table*}[th!]
  \centering
  \begin{tabular}{|l|l|c|c|}
    \hline
     & 
    \textbf{Model details} & 
    \begin{minipage}{0.25\textwidth}
      \textbf{Asymptotic upper bound (Regime $\Re$)}
    \end{minipage} &
    \begin{minipage}{0.25\textwidth}
      \textbf{Asymptotic lower bound (Regime $\Re$)}
    \end{minipage} \\
    \hline
    1 &
    \begin{minipage}{0.4\textwidth}
      \vspace{0.2em}
      Berry-Gallager power delay tradeoff \cite{berry}; R-model with $S_{max} = \infty$
      \vspace{0.2em}
    \end{minipage} & $\mathcal{O}\brap{\frac{1}{\sqrt{V}}\log\nfrac{1}{V}}$ & $\Omega\nfrac{1}{\sqrt{V}}$ \\
    \hline
    2 &
    \begin{minipage}{0.4\textwidth}
      \vspace{0.2em}
      Multiuser Berry-Gallager power delay tradeoff \cite{neely_mac}; Multiuser R-model
      \vspace{0.2em}
    \end{minipage} & $\mathcal{O}\brap{\frac{1}{\sqrt{V}}\log\nfrac{1}{V}}$ & $\Omega\nfrac{1}{\sqrt{V}}$ \\
    \hline
    3 &
    \begin{minipage}{0.4\textwidth}
      \vspace{0.2em}
      Multiuser Berry-Gallager power delay tradeoff \cite{neely_mac}; Multiuser R-model, but with piecewise linear $P(h,s)$ and $\lambda$ such that $c_{R}(\lambda)$ is on a piecewise linear portion of $c_{R}(.)$
      \vspace{0.2em}
    \end{minipage} & $\mathcal{O}\brap{\log\nfrac{1}{V}}$ &
    \begin{minipage}{0.25\textwidth}
      \vspace{0.2em}
      $\Omega\brap{\log\nfrac{1}{V}}$ shown for a specific example, not known in general
      \vspace{0.2em}
    \end{minipage} \\
    \hline
    4 &
    \begin{minipage}{0.4\textwidth}
      \vspace{0.2em}
      Multiuser Berry-Gallager power delay tradeoff \cite{neely_mac}; Multiuser R-model, but with piecewise linear $P(h,s)$, $\lambda$ is  any abscissa at which the slope of $c_{R}(.)$ changes
      \vspace{0.2em}
    \end{minipage} & $\mathcal{O}\nfrac{1}{V}$ &
    \begin{minipage}{0.2\textwidth}
      \begin{center}
        Not known
      \end{center}
    \end{minipage} \\
    \hline
    5 &
    \begin{minipage}{0.4\textwidth}
      \vspace{0.2em}
      Power delay tradeoff with lower bound constraint on average throughput \cite{neely_utility}
      \vspace{0.2em}
    \end{minipage} & $\mathcal{O}\brap{\log\nfrac{1}{V}}$ &
    \begin{minipage}{0.25\textwidth}
      $\Omega\brap{\log\nfrac{1}{V}}$ but with single fade state
    \end{minipage} \\
    \hline
    6 &
    Utility delay tradeoff \cite{neely_superfast} & $\mathcal{O}\brap{\log\nfrac{1}{V}}$ & $\Omega\brap{\log\nfrac{1}{V}}$ \\
    \hline
    7 & 
    \begin{minipage}{0.4\textwidth}
      \vspace{0.2em}
      Power delay tradeoff with Markov arrival and fading process \cite{huang_neely}
      \vspace{0.2em}
    \end{minipage} & $\mathcal{O}\nfrac{1}{V}$ & 
    \begin{minipage}{0.2\textwidth}
      \begin{center}
        Not known
      \end{center}
    \end{minipage} \\    
    \hline
  \end{tabular}
  \caption{Available asymptotic lower bounds (with upper bounds) on the minimum average queue length; except for case 7 all other models assume that the arrival process and the fade process are IID, and except for cases 5 and 6 all models do not have admission control. Also, all lower bounds are derived under the assumption that the queue length can take real values.}
  \label{table:asymptotic_bounds}
\end{table*}
We note that asymptotic lower bounds are not known in many cases.
Such asymptotic lower bounds are significant, since they may help in determining the best possible tradeoff.
We also note that known asymptotic lower bounds have been derived under the assumption that the queue length evolution is on $\sR$.

Monotonicity properties of optimal policies have also been obtained for continuous time queueing models in \cite{weber}, \cite{george}, \cite{ata_pcstatic}, and \cite{atamm1}.
Order optimality has also been explored for finite buffer systems in \cite[Chapter 6]{berry_thesis}.
Asymptotic order bounds have been obtained in a variety of other cases also, as in \cite{venkialtman}, \cite{chaporkar_alex}, and \cite{botan}.

\subsection{Overview of the paper}
In this paper, we (i) obtain asymptotic lower bounds for cases, for which they are not known, for a restricted class of monotone stationary policies called admissible policies, (ii) obtain an intuitive explanation of the behaviour of the tradeoff in the regime $\Re$ via bounds on the stationary probability of the queue length, and (iii) compare the behaviour of the tradeoff for I-model and R-model, showing that R-model with the strictly convex power cost function may be inappropriate as an approximate model.
We first formulate the tradeoff problem as a constrained Markov decision problem (CMDP) in Section \ref{sec:cmdp_formulation}, wherein we show that there exists an optimal stationary policy.
Then we consider a Markov decision process (MDP) obtained from a Lagrangian relaxation of the above CMDP in Section \ref{sec:mdp_formulation}.
We review the structural properties of the optimal policy for the above MDP in Section \ref{sec:mdp_formulation}.
These structural properties are then used to motivate the definition of admissible policies in Section \ref{sec:admissible_policies_definition}.

We formulate the tradeoff problem for admissible policies in Section \ref{sec:tradeoffproblem_admissible_policies}.
We then define the asymptotic regime $\Re$ for I-model and R-model rigorously in Section \ref{sec:imodelrmodel_asymp_regime}.
In Section \ref{sec:asymptotic_prelims}, preliminary results which lead to the asymptotic lower bounds are discussed.
We present bounds on the stationary probability distribution for admissible policies in Section \ref{sec:upperbounds}.
These bounds are useful in obtaining an intuitive explanation of the behaviour of the tradeoff for admissible policies in the regime $\Re$.

In Section \ref{sec:asymp_analysis} we first present an intuitive explanation of the behaviour of the tradeoff for admissible policies in the regime $\Re$ and then obtain asymptotic lower bounds for the minimum average delay.
We note that an overview of results in this paper was presented earlier in \cite{vineeth_ncc13_dt}.
In this paper, we provide a complete analysis, as well as extend our results to a queueing model with admission control and an additional constraint on the utility of average throughput. 
We study the queueing models with admission control (I-model-U and R-model-U) in Section \ref{sec:systemmodel_modelus}.
We conclude the paper in Section \ref{sec:conclusions}.

\subsection{Notation and conventions}
We use the following notation for the asymptotic bounds: (i) $f(x)$ is $\mathcal{O}(g(x))$ if there exists a $c > 0$ such that $\lim_{x \rightarrow 0} \frac{f(x)}{g(x)} \leq c$; $f(x), g(x) \geq 0$, (ii) $f(x)$ is $\Omega(g(x))$ if there exists a $c > 0$ such that $\lim_{x \rightarrow 0} \frac{f(x)}{g(x)} \geq c$; $f(x), g(x) \geq 0$, and (iii) $f(x)$ is $\omega(g(x))$ if $\lim_{x \rightarrow 0} \frac{f(x)}{g(x)} = \infty$.
All logarithms are natural logarithms unless specified otherwise.
Sequences which are monotonically increasing to a limit point are denoted as $\uparrow$, while those monotonically decreasing are denoted as $\downarrow$.
The stationary version of a random process is denoted by dropping the time index, e.g. $Q \sim Q[m]$.
We denote the set of non-negative integers and non-negative real numbers by $\sZ$ and $\sR$ respectively.
In this paper, we use the following notation for the limits of integration, for integrals of functions with respect to probability measures: $\int_{a^{+}}^{b^{-}} f(x) dP(x) = \int_{-\infty}^{\infty} f(x)\mathbb{I}\brac{a < x < b} dP(x)$.

\section{System model}
\subsection{System model - Integer valued queue length evolution}
\label{sysmodel:intval}
We consider a discrete time system with slots indexed by $m \in \brac{1, 2, \cdots}$, as in Figure \ref{fig:system}.
In each slot $m$, a random number of packets, $R[m] \in \sZ$, where each packet is of the same size, arrive into the transmitter queue.
The arrival sequence $(R[m], m \geq 1)$ is assumed to be independently and identically distributed (IID) with $R[1] \leq A_{max}$, batch arrival rate $\mathbb{E}R[1] = \lambda < \infty$, $var(R[1]) = \sigma^{2} < \infty$.
We initially consider a case without admission control.
Then the number of packets admitted $A[m] = R[m]$.
The packets are assumed to arrive into an infinite buffer, in which they wait until they are transmitted over a point to point fading channel.
\begin{figure}
  \centering
  \includegraphics[width=\journaldtImageWidth,height=\journaldtImageHeight]{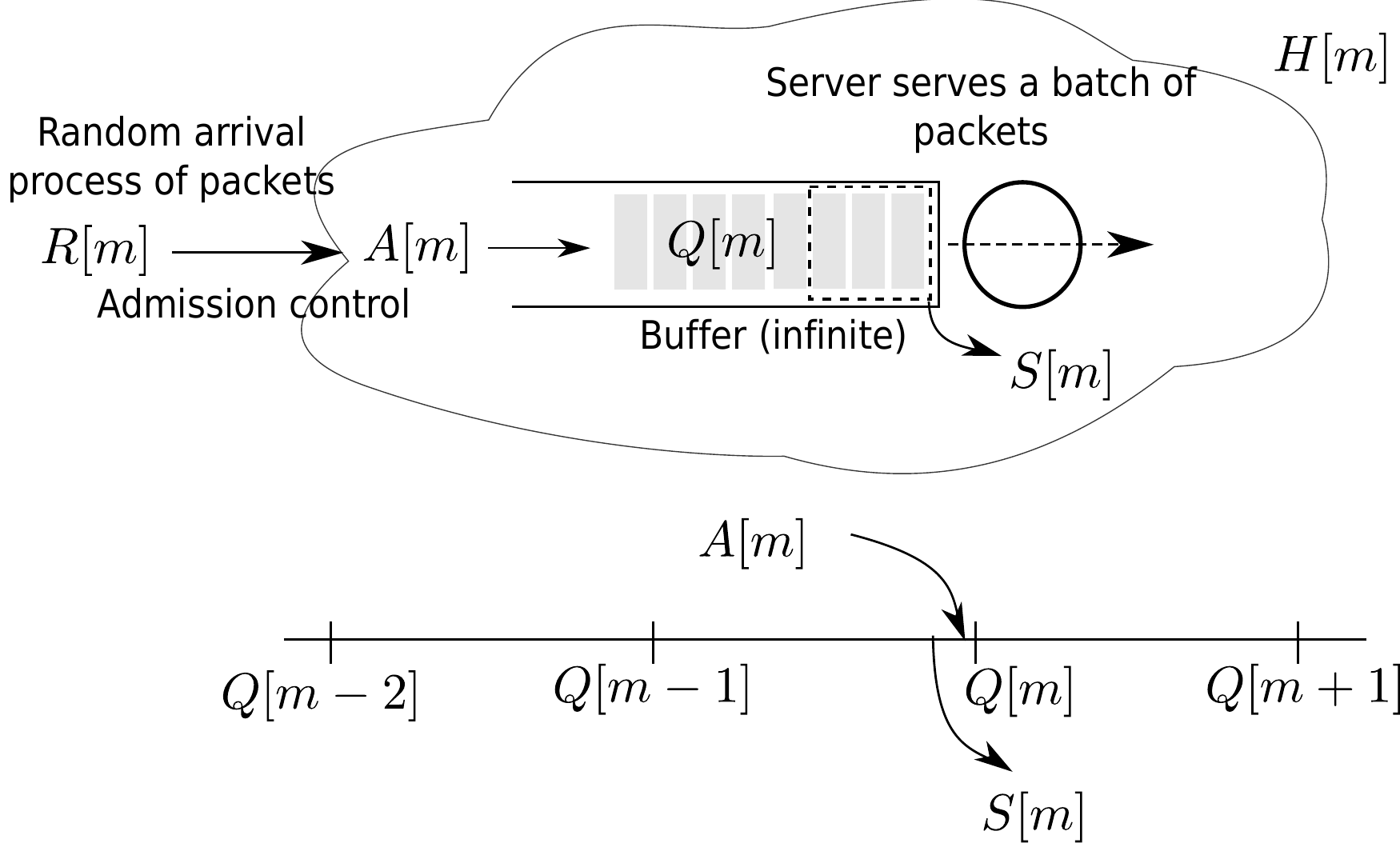}
  \caption{The discrete time single server queueing model with a single queue}
  \label{fig:system}
\end{figure}

The fade state is assumed to be constant in a slot.
The fade state takes values in a finite set $\mathcal{H}$, with $\min\brac{\mathcal{H}} > 0$, and the fade state process $(H[m], m \geq 1)$, is assumed to be IID, with $H[1] \sim \pi_{H}$.
The expectation with respect to $\pi_{H}$ is denoted by $\Exp_{\pi_H}$.
The processes $(R[m])$ and $(H[m])$ are assumed to be independent of each other.

The number of customers in the queue at the start of the $(m + 1)^{th}$ slot is denoted by $Q[m]$.
The system is assumed to start with $Q[0] = q_{0} \in \mathbb{Z}_{+}$ customers.
At the end of slot $m$, a batch with $S[m] \in \sZ$ packets is removed from the transmitter queue just before the $A[m]$ new packets which arrive in the $m^{th}$ slot are admitted.
We assume that $S[m] \leq \min\brap{Q[m - 1], S_{max}}$, where $S_{max}$ is the maximum batch size that can be served.
The queue evolution sampled at the slots is given by:
\begin{equation}
  Q[m + 1] = Q[m] - S[m + 1] + A[m + 1].
  \label{eq:evolution}
\end{equation}
The evolution of the queue length is also illustrated in \mbox{Figure \ref{fig:system}}.

At the start of slot $m$, the history of the system is defined as $\sigma[m] \Deq (q_{0}, H[1], S[1], Q[1], H[2], S[2], Q[2], \dots,$ \mbox{$Q[m - 2]$}, \mbox{$H[m - 1]$}).
At the beginning of slot $m$, the transmitter scheduler observes  $H[m]$ and chooses a batch service size $S[m] \in \sZ$ as a randomized function of the history $\sigma[m]$, the current queue length $Q[m - 1]$, and the current fade state $H[m]$.
We define a policy $\gamma$ to be the sequence of such mappings (service batch sizes) $(S[1], S[2], \dots)$.
The set of all policies is denoted by $\Gamma$.
If $\gamma$ is such that $S[m] = S(Q[m - 1], H[m])$, where $S(q, h)$ is a randomized function, then $\gamma$ is a stationary policy.
The set of all stationary policies is denoted as $\Gamma_{s}$.
We note that since $H[m]$ is assumed to be IID, if $\gamma \in \Gamma_{s}$ then the process $(Q[m], m \geq 0)$ is a discrete time Markov chain (DTMC).

When the fade state is $h$, the transmitter expends $P(h, s)$ units of power when transmitting $s$ bits.
We note that $P(h, s)$ is a function of the fading gain $h^{2}$, when the fade state is $h$.
Motivated by many examples (see \cite{berry} and \cite{elif}) of the form for $P(h,s)$, we assume that $\forall h \in \mathcal{H}$, $P(h,s)$ satisfies the following properties:
\begin{description}
\item[C1 : ]$P(h,0) = 0$, and 
\item[C2 : ]$P(h,s)$ is non-decreasing and convex in $s$, for $s \in \brac{0, \dots, S_{max}}$, for every $h \in \mathcal{H}$.
\end{description}

The average power for $\gamma \in \Gamma$ is
\begin{equation}
  \overline{P}(\gamma, q_{0}) \stackrel{\Delta} = \limsup_{M \rightarrow \infty} \frac{1}{M} \Exp \bras{\sum_{m = 1}^{M} P(H[m], S[m]) \middle \vert Q[0] = q_{0}}.
  \label{chap5fading:eq:avgpower}
\end{equation}
The average queue length for $\gamma \in \Gamma$ is
\begin{equation}
  \overline{Q}(\gamma, q_{0}) \stackrel{\Delta} = \limsup_{M \rightarrow \infty} \frac{1}{M} \Exp \bras{\sum_{m = 0}^{M - 1} Q[m] \middle \vert Q[0] = q_{0}}.
  \label{chap5fading:eq:avgqlength}
\end{equation}
We consider the optimal tradeoff of $\overline{P}(\gamma, q_{0})$ with $\overline{Q}(\gamma, q_{0})$ for this model.
Using Little's law, the optimal tradeoff between $\overline{P}(\gamma, q_{0})$ and average delay can then be obtained.

\subsection{System model - Real valued queue length evolution}
\label{sysmodel:realval}
In this section, we describe a queueing model, which is usually used as an analytically tractable approximation for the model discussed above.

We state only the differences from the model discussed in the previous section.
We assume that for $m \geq 1$, $A[m] \in [0, A_{max}]$, $S[m] \in [0, S_{max}]$, and $q_{0} \in \mathbb{R}_+$.
Hence, the queue length $Q[m] \in \mathbb{R}_+, \forall m \geq 0$.
The function $P(h,s)$ is assumed to satisfy the following properties:
\begin{description}
\item[RC1 : ]{$P(h,0) = 0,$ for every $h \in \mathcal{H}$,}
\item[RC2 : ]{$P(h,s)$ is non-decreasing and strictly convex in $s$, for $s \in [0, S_{max}]$, for every $h \in \mathcal{H}$.}
\end{description}
The average power and average queue length are defined as in \eqref{chap5fading:eq:avgpower} and \eqref{chap5fading:eq:avgqlength} respectively.

We note that this model is similar to that considered by Berry and Gallager \cite{berry}, except that in \cite{berry}, $S_{max} = \infty$.

In the following, the model in the previous section, where the queue length evolution was assumed to be on the non-negative integers, is called the I-model, while the model described here, where the queue length evolution is assumed to be on the non-negative real numbers, is called the R-model.
We note that R-model with fading and $P(h,s)$ being strictly convex is usually used as an analytically tractable approximation for I-model\footnote{The strict convexity of $P(h,s)$ helps in algebraic manipulations, which are used in obtaining approximations for value functions and optimal policies, such as in \cite{munish}, \cite{bettesh}, and \cite{chen}.}.
When used as an approximation, $P(h,s)$ for R-model coincides with $P(h,s)$ for I-model for $s \in \brac{0, \dots, S_{max}}, \forall h$.
We note that there are also scenarios where it is natural to model the queue evolution as evolving on real numbers, with a strictly convex cost function, such as when the queue is assumed to buffer a certain amount of error exponent as in \cite{berry}.

\subsection{The tradeoff problem}
\label{sec:problem}
Our objective is to characterize the minimum average queue length for a given constraint $P_{c}$ on the average transmit power.
We note that the following statement is for both the I-model and the R-model.
The tradeoff problem is:
\begin{equation}
  \mini_{\gamma \in \Gamma} \Qgq, \text{ such that } \Pgq \leq P_{c}.
  \label{eq:tradeoffproblem_init}
\end{equation}
If an optimal policy exists for the above problem, then it is denoted as $\gamma^*(q_{0}, P_{c})$.
We note that one of the ways in which the above constrained optimization problem can be solved is by considering its Lagrange dual, the dual function of which is as follows:
\begin{equation}
  \mini_{\gamma \in \Gamma} \brac{ \Qgq + \beta \bras{\Pgq - P_{c}} }.
  \label{eq:tradeoffproblem_init_dual}
\end{equation}
where $\beta \geq 0$ can be interpreted as a Lagrange multiplier.
If an optimal policy exists for the above problem, then it is denoted as $\gamma^*_{\beta}(q_{0})$.

\subsection{An example }
\label{sec:simpleeg}
Throughout this paper, to illustrate the results for I-model and R-model, we use the following example.
We assume that packets of size $100$-bits arrive in a random process $(A[m])$ to the transmitter queue.
The number of packets $A[1]$ which arrive in a slot is assumed to be distributed according to a Binomial$(A_{max},p)$ distribution, with arrival rate $r_{a} kb/s$.
The rate of service $r$ in $kb/s$ is assumed to be $200\log_{10}\brap{1 + SNR}$, where $SNR$ is the received signal to noise ratio.
We assume that $SNR = \frac{h^{2} P}{L}$, where $h^{2}$ is the fading gain, $P$ is the transmit power, and $L$ encompasses the loss due to attenuation as well as noise power.

We assume that the slots are of duration $2ms$.
Then the arrival rate of packets in a slot is $\lambda = \frac{r_{a}}{50}$.
We assume that if $h^{2} = 1$ and $P = 1W$ then $r = 50$.
Therefore, if $P(h, r)$ is the transmit power as a function of the fade state and the rate, we have that $P(h, r) = \frac{1.28}{h^{2}}\brap{10^{r/200} - 1}$.
We note that in one slot, the number of bits served is $2r$.
We assume that the transmitter, in each slot, can choose its transmission rate in the set $\brac{0, 50, 100} kb/s$.
To fit this example to our model, we express the queue length in units of $100$ bits.
Then in each slot, we have a Binomial arrival process of $100$-bit packets and service of $s$ $100$-bit packets, where $s \in \brac{0, 1, 2}$.
The transmit power as a function of $h$ and $s$ is $P(h,s) = \frac{1.28}{h^{2}}\brap{10^{50s/200} - 1}, s \in \brac{0, 1, 2}$.
We note that the average queue length, as defined, is in units of $100$ bits.

For the R-model, the set of possible batch sizes is assumed to be $[0, 2]$.
The transmit power as a function of $h$ and $s$ is assumed to be $P(h,s) = \frac{1.28}{h^{2}}\brap{10^{50s/200} - 1}$ but for $s \in [0, 2]$.

\section{Problem formulation}
\subsection{Formulation as a CMDP}
\label{sec:cmdp_formulation}

In this section, we formulate \eqref{eq:tradeoffproblem_init} as a CMDP with an average cost objective and an average cost constraint to conclude that it is sufficient to consider \eqref{eq:tradeoffproblem_init} for the class of stationary policies.

The state space $\mathcal{X}$ of the CMDP is $\sZ \times \mathcal{H}$ for I-model and $\sR \times \mathcal{H}$ for R-model.
The action space at each $(q,h) \in \mathcal{X}$ is the discrete set $\brac{0,\dots,\min(q, S_{max})}$ for I-model and the interval $[0, \min(q,S_{max})]$ for R-model.
The probabilistic evolution of the process is as given in \eqref{eq:evolution} for both I-model and R-model.
Associated with the CMDP, there are two single stage costs: (i) the queue length cost $q$, and \mbox{(ii) the} power cost $P(h, s)$, where $s$ is the chosen service batch size.
We note the above single stage costs correspond to the average cost objective and the average cost constraint respectively.

For \eqref{eq:tradeoffproblem_init}, if (i) $\lambda < S_{max}$ and (ii) the constraint $P_{c}$ is such that $P_{c} > c(\lambda)$, then from \cite{hernandez} and \cite{hernandez_2}, it is possible to show that \eqref{eq:tradeoffproblem_init} has an optimal solution and there exists an optimal policy $\gamma \in \Gamma_{s}$ for I-model.
For the R-model, it can similarly be shown that if $\lambda < S_{max}$ and $P_{c} > c_{R}(\lambda)$, then 
\eqref{eq:tradeoffproblem_init} has an optimal solution and there exists an optimal policy $\gamma \in \Gamma_{s}$.
Hence, in the following, we restrict to $\gamma \in \Gamma_{s}$ for studying \eqref{eq:tradeoffproblem_init}.
In the following, we assume that $\lambda < S_{max}$, $P_{c} > c(\lambda)$ for I-model, and $P_{c} > c_{R}(\lambda)$ for R-model.

\subsection{Formulation as a MDP}
\label{sec:mdp_formulation}
We note that \eqref{eq:tradeoffproblem_init_dual} is an MDP with average cost criterion.
We consider the I-model first.
The state space of the MDP is $\sZ \times \mathcal{H}$.
The action space at each $(q, h) \in \sZ \times \mathcal{H}$ is $\brac{0, \dots, \min(q, S_{max})}$.
The probabilistic evolution of the process is as in \eqref{eq:evolution}.
The single state cost associated with the MDP is $q + \beta P(h, s)$, where $\beta \geq 0$ and $s$ is the action.

It is shown (e.g. \cite{berry}, \cite{munish}) that if $\lambda < S_{max}$, then there exists an optimal solution for \eqref{eq:tradeoffproblem_init_dual}, and there exists an optimal stationary deterministic policy $\gamma^*_{\beta}(q_{0})$.
Let $g^*_{\beta}(q_{0})$ be the optimal value of \eqref{eq:tradeoffproblem_init_dual}.

We now make the following assumptions about the arrival process $(A[m])$: (A1) $Pr\brac{A[1] > S_{max}} > \epsilon_{a} > 0$, and (A2) $Pr\brac{A[1] = a} > 0$, for all $a \in \brac{0,\dots, A_{max}}$.
The analysis of the MDP leads to the following observations: (O1) the optimal average cost $g^*_{\beta} = g^*_{\beta}(q_{0})$ and any stationary deterministic optimal policy $\gamma^*_{\beta} = \gamma^*_{\beta}(q_{0})$ are independent of the initial state $q_{0}$, and $g^*_{\beta} < \infty$;
(O2) any stationary deterministic optimal policy $\gamma^*_{\beta}$, which serves $s^*_{\beta}(q, h)$ in state $(q, h)$, is such that $s^*_{\beta}(q,h)$ is non-decreasing in $q$, $\forall h \in \mathcal{H}$;
(O3) For any policy, from assumptions A1 and A2, we note that from state 0 it is possible to reach any other state $q$.
From O2, we obtain that any stationary deterministic optimal policy has a single recurrence class $\mathcal{R}_{\gamma^*_{\beta}}$, of the form $\brac{q_{m},\dots}$, where $q_{m} = \min\brac{q : \exists q' > q, Pr\brac{Q[m + 1] = q \vert Q[m] = q'} > 0}$ for the optimal policy under consideration;
(O4) We note that $s^*_{\beta}(q_{m}, h) = 0$ by definition. From A2, we have that $q_{m}$ is an aperiodic state, and therefore the class $\mathcal{R}_{\gamma^*_{\beta}}$ of the Markov chain under $\gamma^*_{\beta}$ is aperiodic.

We note that both O1 and O2 have been obtained in \cite{berry_thesis} and \cite{munish}.
For R-model, the state space of the MDP is $\sR \times \mathcal{H}$.
The action space at each $(q, h) \in \sR \times \mathcal{H}$ is $[0,S_{max}]$.
We note that O2 has been obtained for R-model in \cite{agarwal}.
The above observations are used to motivate the definition of admissible policies in this paper.

\subsection{Correspondence between the solutions of \eqref{eq:tradeoffproblem_init} and \eqref{eq:tradeoffproblem_init_dual}}
\begin{definition}[The set $\mathcal{O}^{u}$]
  Let $\Gamma^*_{\beta}$ be the set of all optimal policies for \eqref{eq:tradeoffproblem_init_dual} for a $\beta \geq 0$.
  Let $P_{c}(\beta) \Deq \brac{ \Pg, \gamma \in \Gamma^*_{\beta} }$.
  Then, we define $\mathcal{O}^{u}$ as
  \begin{eqnarray*}
    \mathcal{O}^{u} \Deq \bigcup_{\beta \geq 0} P_{c}(\beta)
  \end{eqnarray*}
\end{definition}
We note that $\mathcal{O}^{u}$ is the set of average power values for optimal policies for \eqref{eq:tradeoffproblem_init_dual} for all $\beta \geq 0$.

\begin{definition}[The set $\mathcal{O}_{d}^{u}$]
  Let $\Gamma^*_{d,\beta}$ be the set of all stationary deterministic optimal policies for \eqref{eq:tradeoffproblem_init_dual} for a $\beta \geq 0$.
  Let $P_{c,d}(\beta) \Deq \brac{ \Pg, \gamma \in \Gamma^*_{d,\beta} }$.
  Then, we define $\mathcal{O}_{d}^{u}$ as
  \begin{eqnarray*}
    \mathcal{O}_{d}^{u} \Deq \bigcup_{\beta \geq 0} P_{c,d}(\beta)
  \end{eqnarray*}
\end{definition}
We note that $\mathcal{O}_{d}^{u}$ is the set of average power values for stationary deterministic optimal policies for \eqref{eq:tradeoffproblem_init_dual} for all $\beta \geq 0$.

From Ma et al. \cite{ma}, if the constraint $P_{c}$ in \eqref{eq:tradeoffproblem_init} is such that $P_{c} \in \mathcal{O}^{u}$, then the solutions of \eqref{eq:tradeoffproblem_init} and \eqref{eq:tradeoffproblem_init_dual} coincide.
Furthermore, if $P_{c} \in \mathcal{O}_{d}^{u}$, and if A1 and A2 hold, then there exists a stationary deterministic optimal policy for \eqref{eq:tradeoffproblem_init} for I-model which possesses the properties O1, O2, O3, and O4.
For R-model, if $P_{c} \in \mathcal{O}_{d}^{u}$, then there exists a stationary deterministic policy for \eqref{eq:tradeoffproblem_init}, which possesses the properties O1 and O2.

\subsection{Admissible policies}
\label{sec:admissible_policies_definition}
In this section, we define the set of admissible policies, for which we obtain an asymptotic characterization of the tradeoff in the regime $\Re$.
The following definitions are for both the I-model as well as the R-model.
\begin{definition}[Stable policies]
A stationary policy $\gamma$ is \emph{stable} if: (i) the Markov chain $(Q[m], m \geq 0)$ under $\gamma$ is irreducible, aperiodic, and positive Harris recurrent with stationary distribution $\pi$, and (ii) $\overline{Q}(\gamma, q_{0}) < \infty$.
\end{definition}

\begin{definition}[Admissible policies]
A policy $\gamma$ is \emph{admissible} if: (G1) it is stable, and, (G2) the average service rate in state $q$, $\overline{s}(q) \stackrel{\Delta} = \Exp_{\pi_{H}} \Exp_{S|q,H} S(q, H)$ is non-decreasing in $q$.
The set of admissible policies is denoted as $\Gamma_{a}$.
\end{definition}

We note that the stability property of admissible policies has been motivated by O1, O3, and O4.
It can be shown that (\cite[Lemma 4.3.2]{vineeth_thesis}) if $|\mathcal{H}| = 1$ then any stationary admissible policy for \eqref{eq:tradeoffproblem_init_dual} is non-idling, i.e., $s^*_{\beta}(q,h) > 0$ for $q > 0$.
This motivates the assumption of irreducibility.
Property G2 has been motivated by O2.
We note that whenever $P_{c} \in \mathcal{O}_{d}^{u}$, there exists an optimal admissible policy for \eqref{eq:tradeoffproblem_init}.
For an admissible policy $\gamma$, we have that the performance measures $\overline{Q}(\gamma, q_{0})$ and $\overline{P}(\gamma, q_{0})$ are independent of the initial queue state $q_{0}$ and exist as limits.
We note that our definition of admissible policies includes G2 in addition to the properties of admissible policies in \cite{berry}.
Therefore, in the following, these performance measures are denoted by $\Qg$ and $\Pg$ respectively.

\subsection{Tradeoff problem for admissible policies}
\label{sec:tradeoffproblem_admissible_policies}
The TRADEOFF problem for admissible policies is:
\begin{equation}
  \mini_{\gamma \in \Gamma_{a}} \overline{Q}(\gamma), \text{ such that } \overline{P}(\gamma) \leq P_{c}.
  \label{chap5fading:eq:tradeoffprob}
\end{equation}
The optimal value of TRADEOFF is denoted as $Q^*(P_{c})$.

For an admissible policy $\gamma$, we note that since the arrival rate is constant, from Little's law the average delay for $\gamma$ is $\frac{\overline{Q}(\gamma)}{\lambda}$.
The minimum average delay as a function of the average power constraint $P_{c}$ for admissible policies is $\frac{Q^*(P_{c})}{\lambda}$.

We note that the tradeoff problem can be formulated for a larger class of policies $\Gamma_{a,M}$, which is obtained by mixing or time sharing of policies in $\Gamma_{a}$.
Let $Q^*_{M}(P_{c})$ denote the optimal value of \eqref{chap5fading:eq:tradeoffprob}, but with the minimization carried out over the set $\Gamma_{a,M}$.
The asymptotic behaviour for $Q^*_{M}(P_{c})$ can be directly obtained from $Q^*(P_{c})$ (e.g. \cite[Proposition 2.3.9]{vineeth_thesis}).
Therefore, in the following we consider the asymptotic characterization of $Q^*(P_{c})$ only.

We note that there may not exist an optimal admissible policy for TRADEOFF.
But by definition, we have that there exists admissible policies which are arbitrarily \emph{good} when $P_{c}$ is such that \mbox{TRADEOFF} is feasible.
\begin{definition}[$\epsilon$-optimal policies]
If $P_{c}$ is such that \mbox{TRADEOFF} is feasible, then by definition there exists a feasible admissible policy $\gamma$ such that $\Qg \leq Q^*(P_{c}) + \epsilon$.
Such a policy is defined to be $\epsilon$-optimal.
\end{definition}

\subsection{Asymptotic regime $\Re$}
\label{sec:imodelrmodel_asymp_regime}
We note that for any admissible policy $\gamma$, $\overline{Q}(\gamma) = \Exp_{\pi}Q$ and $\overline{P}(\gamma) = \Expp\Exp_{H | Q}  \Exp_{S|Q,H} P(H, S(Q,H))$.
Since $Q$ and $H$ are independent, we further have that $\overline{P}(\gamma) = \Expp\Exp_{\pi_{H}}  \Exp_{S|Q,H} P(H, S(Q,H))$.
For any $\gamma \in \Gamma_{a}$, we note that the average arrival rate $\lambda$ is  equal to the average service rate, i.e., $\lambda = \Expp \Exp_{H|Q}  \Exp_{S|Q,H} S(Q,H) = \Expp \overline{s}(Q)$.

Therefore, for $\gamma \in \Gamma_{a}$, $\overline{P}(\gamma)$ is lower bounded by the optimal value of
\begin{eqnarray}
  \mini_{\gamma \in \Gamma_{a}} & \Expp \Exp_{{H}|Q}  \Exp_{S|Q,H} P(H, S(Q, H)), \nonumber \\
  \text{such that } & \Expp \Exp_{{H}|Q} \Exp_{S|Q,H} S(Q, H) = \lambda,
  \label{eq:tradeoffprobratec}
\end{eqnarray}
since the only constraint is on the average service rate.
We note that $\Expp \Exp_{{H}|Q} \Exp_{S|Q,H} S(Q, H) = \Exp_{\pi_{H}} \Exp_{Q|H} \Exp_{S|Q,H} S(Q, H)$.
Then, we have that 
\[\Exp_{Q|H} \Exp_{S|Q,H} S(Q, H) = \int_{q} \int_{s} s. dp_{s|q,H}. d\pi(q),\]
\[ = \int_{s} \int_{q} s .dp_{s,q|H} = \int_{s} s \int_{q} dp_{s,q|H},\] 
we have that $\Exp_{\pi_{H}} \Exp_{Q|H} \Exp_{S|Q,H} S(Q, H) = \Exp_{\pi_{H}} \Exp_{S|H} S$ where the conditional distribution of $S$ given $H$ depends upon the policy.
A similar procedure can be carried out on $\Expp \Exp_{H|Q}  \Exp_{S|Q,H} P(H, S(Q, H))$ which leads to $\Expp \Exp_{H|Q}  \Exp_{S|Q,H} P(H, S(Q, H)) = \Exp_{\pi_{H}} \Exp_{S|H} P(H, S)$.
Then the optimal value of \eqref{eq:tradeoffprobratec} is bounded below by the optimal value of 
\begin{eqnarray}
  \mini & \Exp_{\pi_{H}}  \Exp_{S|H} P(H, S),
  \label{eq:app_bg_0} \\
  \text{such that } & \Exp_{\pi_{H}}  \Exp_{S|H} S = \lambda, \nonumber
\end{eqnarray}
where we minimize over all possible conditional distributions for the batch size $S$ given $H$, irrespective of the policy.
For the I-model, we denote the optimal value of \eqref{eq:app_bg_0} by $c(\lambda)$, while for the R-model we denote the optimal value of \eqref{eq:app_bg_0} by $c_{R}(\lambda)$\footnote{We note that $c(\lambda)$ and $c_{R}(\lambda)$ are the minimum average powers required for mean rate stability for the I-model and R-model respectively, see \cite{neely_mac}.}.
We note that for the R-model, the conditional distribution of the batch size has support on $[0, S_{max}]$, while for the I-model the conditional distribution has support on $\brac{0, \dots, S_{max}}$.
Hence, $c_{R}(\lambda) \leq c(\lambda), \forall \lambda \in [0, S_{max}]$.
We note that feasible solutions exist for the above problem only if $\lambda \leq S_{max}$.

We have that $\forall \gamma \in \Gamma_{a}$, $\overline{P}(\gamma) \geq c(\lambda)$ for the I-model, and $\overline{P}(\gamma) \geq c_{R}(\lambda)$ for the R-model.
From \cite[Theorem 1]{neely_mac}, we have that if $\lambda < S_{max}$, then for a sequence $P_{c,k} \downarrow c(\lambda)$ (or $P_{c,k} \downarrow c_{R}(\lambda)$ for the R-model), there exists a sequence of admissible policies $(\gamma_{P_{c,k}})$, such that $\overline{P}(\gamma_{P_{c,k}}) \leq P_{c,k}$, and $\overline{Q}(\gamma_{P_{c,k}})$ grows without bound.
Since, for an arrival rate of $\lambda$, $c(\lambda)$ (or $c_{R}(\lambda)$ for the R-model) can be approached arbitrarily closely by admissible policies, $c(\lambda)$ (or $c_{R}(\lambda)$ for the R-model) is the infimum of $\Pg, \gamma \in \Gamma_{a}$.

\begin{definition}[The regime $\Re$]
The asymptotic regime $\Re$ for TRADEOFF is defined as the regime in which $P_{c} \downarrow c(\lambda)$ for the I-model and $P_{c} \downarrow c_{R}(\lambda)$ for the R-model.
\end{definition}

\section{Asymptotic analysis - Preliminaries}
\label{sec:asymptotic_prelims}
\subsection{Properties of $c(\lambda)$ and $c_{R}(\lambda)$}

For I-model, since properties (C1) and (C2) are assumed to hold, from \cite[Section VII]{neely_mac}, we have that $c(\lambda)$ is a piecewise linear, non-decreasing convex function, for $\lambda \in [0, S_{max}]$, with $c(0) = 0$.
Again from \cite{neely_mac}, $c_{R}(\lambda)$ is a non-decreasing, strictly convex function of $\lambda \in [0, S_{max}]$, with $c_{R}(0) = 0$.
For the example discussed in Section \ref{sec:simpleeg}, the function $c(\lambda)$ and $c_{R}(\lambda)$ are illustrated in Figure \ref{fig:clambdaeg_2} with $\mathcal{H} = \brac{0.1, 1}$ and $\pi_{H}(0.1) = 0.6$.
In this case, $c_{R}(\lambda) < c(\lambda)$ for all $\lambda \not \in \brac{0.4, 0.8, 1.4}$.
Let $\Lambda$ be the set of $\lambda$ at which the slope of $c(\lambda)$ changes.
We note that $c_{R}(\lambda) \leq c(\lambda)$.
From similar examples, we have observed that $c_{R}(\lambda) = c(\lambda)$ for $\lambda \in \Lambda$ and $c_{R}(\lambda) < c(\lambda)$ for $\lambda \not \in \Lambda$.

\begin{figure}[h]
  \centering
  \includegraphics[width=\journaldtImageWidth,height=\journaldtImageHeight]{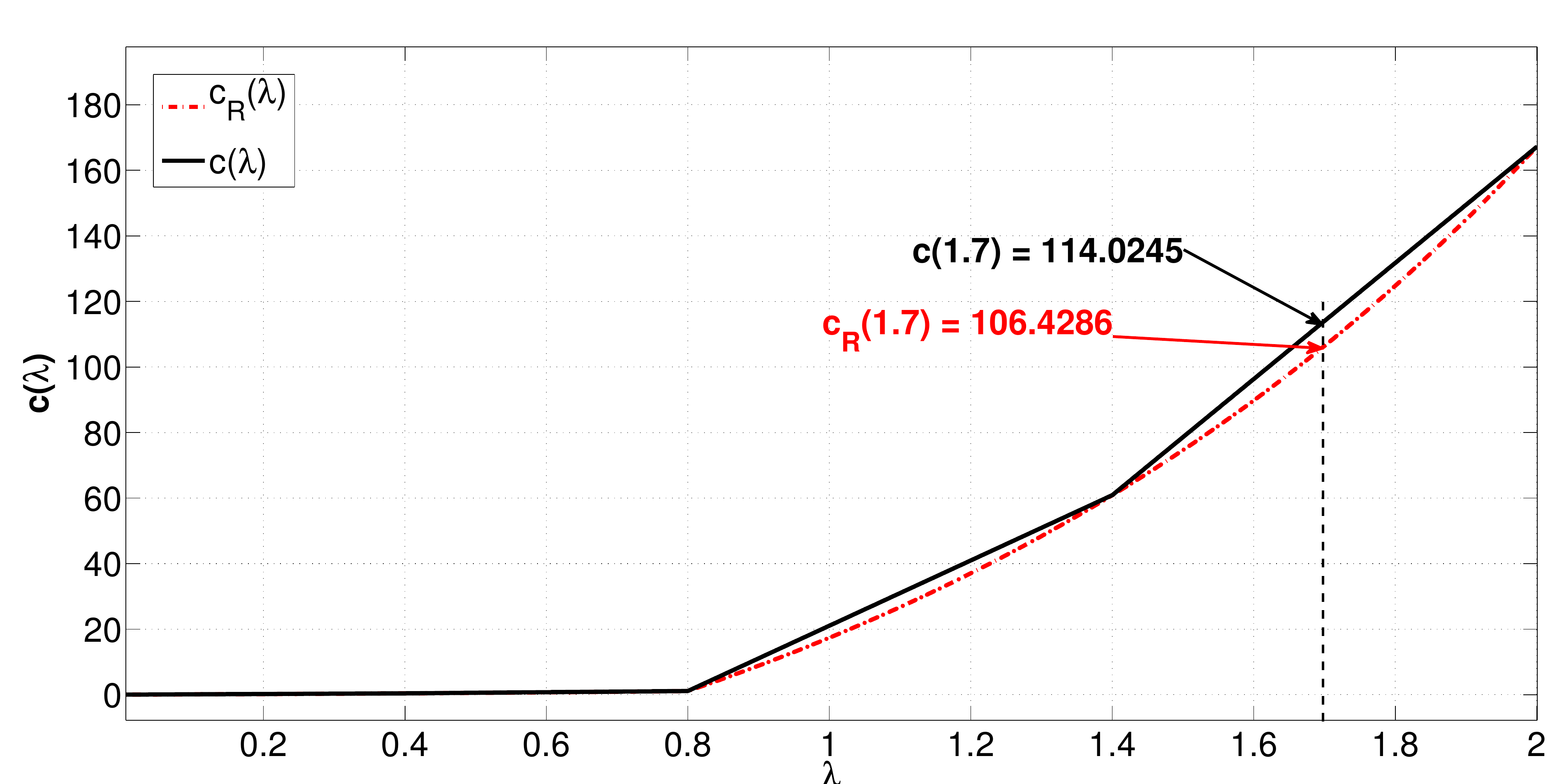}
  \caption{\small{The optimal value of problem \eqref{eq:app_bg_0}: $c(\lambda)$ for I-model and $c_{R}(\lambda)$ for R-model, with $H\in \brac{0.1, 1}$ and $\pi_{H}(0.1) = 0.6$; $c(1.7)$ is $107\%$ of $c_{R}(1.7)$.}}
  \label{fig:clambdaeg_2}
\end{figure}

We will observe from the asymptotic analysis, that the asymptotic growth rate of $Q^*(P_{c})$ in the regime $\Re$, suggested by the R-model and the I-model for $\lambda \in \Lambda$ are different.
We note that, in general, the set $\Lambda$ is not known analytically.
However, if $|\mathcal{H}| = 1$, it is clear that $\Lambda = \brac{0, \dots, S_{max}}$, from the construction of $P(h,s)$ for the R-model.

\subsection{Cases for I-model and R-model}
\label{sec:differentcases}
The behaviour of ${Q}^*(P_{c})$ as $P_{c} \downarrow c(\lambda)$ is observed to depend on the form of $c(s), s \in [0, S_{max}]$ in the \emph{neighbourhood} of $\lambda$.
Since $c(s)$ is piecewise linear, we can define a sequence of intervals $[a_{p}, b_{p}]$, $p \in \brac{1,\dots, P}$, with $a_{p + 1} = b_{p}$, $a_{1} = 0$, and $b_{P} = S_{max}$.
The sequence of intervals is such that for $s \in [a_{p}, b_{p}]$, $c(s)$ is linear.
The following three cases arise:
\begin{enumerate}
\item $a_{1} = 0 < \lambda < b_{1}$,
\item $a_{p} < \lambda < b_{p}$, $p > 1$, and,
\item $\lambda = a_{p}, p > 1$.
\end{enumerate}
We note that $c_{R}(a_{p}) = c(a_{p})$ in Figure \ref{fig:clambdaeg_2} and $\Lambda = \brac{a_{p}, p \in \brac{1, \dots, P}}$.
\begin{figure*}[thbp!]
  \centering
  \includegraphics[width=160mm,height=35mm]{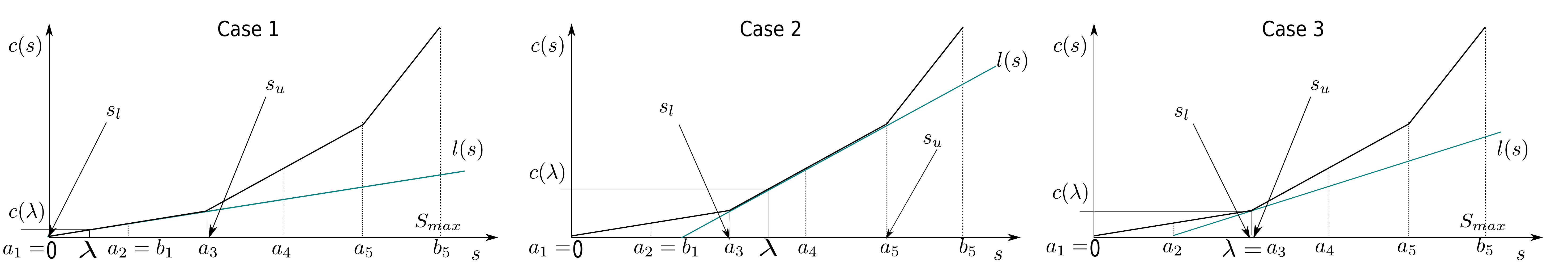}
  \caption{\small{Illustration of $c(s)$ and $l(s)$, along with the relationship between $\lambda, s_{l}$, and $s_{u}$ for the three cases for I-model.}}
  \label{fig:threecases}
\end{figure*}

For Cases 1 and 2, let $s_{l} \stackrel{\Delta} = a_{p}$ and $s_{u} \stackrel{\Delta} = b_{p}$, while for Case 3 let $s_{l} = s_{u} \stackrel{\Delta} = a_{p}$.
An example is shown in Figure \ref{fig:threecases}.

We define the line $l(s) : [0,S_{max}] \rightarrow \mathbb{R}_+$ as follows:
\begin{enumerate}
\item If $s_{l} < \lambda < s_{u}$, then $l(s)$ is the line through $(s_{l},c(s_{l}))$ and $(s_{u},c(s_{u}))$.
\item If $s_{l} = \lambda = s_{u} = a_{p}$ for some $p > 1$, then $l(s)$ is a line through $(\lambda, c(\lambda))$ with slope $m$ chosen such that $\frac{c(a_{p}) - c(a_{p - 1})}{a_{p} - a_{p - 1}} < m < \frac{c(a_{p + 1}) - c(a_{p})}{a_{p + 1} - a_{p}}$.
\end{enumerate}
We note that $\Expp l(\overline{s}(Q)) = c(\lambda)$.

For the R-model, since $c_{R}(s)$ is strictly convex in $s \in [0, S_{max}]$, we consider the cases where $\lambda$ is such that $c_{R}(s)$ has a positive second derivative at $s = \lambda$.

\subsection{A lower bound which is dependent on $\gamma$, $\gamma \in \Gamma_{a}$}
\label{sec:gammadep_lowerbound}

Unlike $c(\lambda)$ ($c_{R}(\lambda)$), in this section, we obtain a lower bound on $\Pg$ for I-model (for R-model) which is dependent on $\gamma$, for any $\gamma \in \Gamma_{a}$.
This lower bound will prove to be useful in the derivation of the asymptotic lower bounds.

We consider the I-model first.
We note that the average power used when the queue length is $q$ is $\Exp_{\pi_{H}}  \Exp_{S|q,H} P(H, S(q, H))$, which is bounded below by the optimal value of
\begin{eqnarray*}
  \mini & & \Exp_{\pi_{H}}  \Exp_{S|H} P(H, S), \\
  \text{such that } & & \Exp_{\pi_{H}}  \Exp_{S|H} S = \bar{s}(q),
\end{eqnarray*}
where we have considered all possible conditional distributions on the batch size with support on $\brac{0, \dots, S_{max}}$, subject only to the constraint that the average service rate is $\bar{s}(q)$.
The above optimization problem is the same as \eqref{eq:app_bg_0} except that the constraint is now $\bar{s}(q)$ instead of $\lambda$.
Therefore, the average power used when the queue length is $q$ is bounded below by $c(\overline{s}(q))$.
Then for that $\gamma$, $\Expp c(\overline{s}(Q)) \leq \overline{P}(\gamma)$.

We note that any feasible policy $\gamma$ for TRADEOFF has $\overline{P}(\gamma) \leq P_{c}$.
Therefore, $\Expp c(\overline{s}(Q)) \leq P_{c}$.
We also note that from the convexity of $c(s)$, $\Expp c(\overline{s}(Q)) \geq c(\lambda)$, since $\Expp \overline{s}(Q) = \lambda$.
Now as $P_{c} \downarrow c(\lambda)$ in the regime $\Re$, for any sequence of feasible policies for TRADEOFF, $\Exp c(\overline{s}(Q)) \downarrow c(\lambda)$.

Similar to the I-model, it can be shown that the average power used when the queue length is $q$ is bounded below by $c_{R}(\overline{s}(q))$ for R-model.
Then any feasible policy $\gamma$ for TRADEOFF has $\Expp c_{R}(\overline{s}(Q)) \leq \overline{P}(\gamma)$.
Since $\gamma$ is feasible, $\overline{P}(\gamma) \leq P_{c}$, and $\Expp c_{R}(\overline{s}(Q)) \leq P_{c}$.
We also note that from the convexity of $c_{R}(s)$, $\Expp c_{R}(\overline{s}(Q)) \geq c_{R}(\lambda)$, since $\Expp \overline{s}(Q) = \lambda$.
Now as $P_{c} \downarrow c_{R}(\lambda)$, for any sequence of feasible policies for TRADEOFF, $\Exp c_{R}(\overline{s}(Q)) \downarrow c_{R}(\lambda)$.

\subsection{Bounds on the stationary probability for admissible policies}
\label{sec:upperbounds}
In this section, we first present two bounds on the stationary probability $\pi$ for $\gamma \in \Gamma_{a}$.
For R-model, we assume : (RA1) $Pr\brac{A[1] - S_{max} > \delta_{a}} > \epsilon_{a}$, for positive $\delta_{a}$ and $\epsilon_{a}$. 
We note that RA1 is similar to A1.

The first bound that we derive is obtained via properties of the transition probability of the DTMC $(Q[m])$ for $\gamma \in \Gamma_{a}$.
\begin{proposition}
  Let $\pi$ denote the stationary probability distribution for $(Q[m])$ under $\gamma \in \Gamma_{a}$.
  Let $Q \sim \pi$.
  We assume that $Pr\brac{A[1] = 0} > 0$ for both I-model and R-model.
  Let $q_{1} = \sup\brac{q : \overline{s}(q) < s_{1}}$, where $0 < s_{1} < S_{max}$.
  For R-model, for any $\Delta$ and $k$ such that $0 < \Delta <s_{1}$ and $k \geq 0$, we have that 
  \small
  \begin{eqnarray*}
     Pr\brac{Q \in [q_{1} + k \Delta, q_{1} + (k + 1) \Delta)} \leq Pr\brac{Q < q_{1}}\frac{ \brap{1 + \frac{1}{\rho_{d}}}^{k}}{\rho_{d}},
  \end{eqnarray*}
  \normalsize
  where $\rho_{d} \stackrel{\Delta} = \nfrac{s_{1} - \Delta}{S_{max} - \Delta}Pr\brac{A[1] = 0}$.
  For I-model, for any $k \geq 0$ and $q = q_{1} + k$,
  \begin{equation*}
    \pi(q) \leq Pr\brac{Q < q_{1}} \frac{\left(1 + \frac{1}{\rho_{d}}\right)^{k}}{\rho_{d}},
  \end{equation*}
  where $\rho_{d} \stackrel{\Delta} = \nfrac{s_{1}}{S_{max}}Pr\brac{A[1] = 0}$.
  \label{prelimresults:txprob_real}
\end{proposition}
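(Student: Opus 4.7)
The plan is to combine a one-step ``downward drift'' lower bound (valid whenever $\overline{s}(q)\ge s_{1}$) with the stationary level-crossing identity to produce a linear recursion on the CDF of $Q$, and then iterate.

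First I would establish the drift lemma: for any state $q$ with $q>q_{1}$, monotonicity of $\overline{s}$ gives $\overline{s}(q)\ge s_{1}$. For the R-model, since the conditional law of $S$ given $Q=q$ is supported in $[0,S_{max}]$ with mean $\overline{s}(q)\ge s_{1}$, the reverse Markov inequality yields
\[
\Pr\brac{S\ge\Delta\mid Q=q}\;\ge\;\frac{\overline{s}(q)-\Delta}{S_{max}-\Delta}\;\ge\;\frac{s_{1}-\Delta}{S_{max}-\Delta},
\]
provided $\Delta<s_{1}$. Using independence of arrivals from $(Q,H)$ and the assumption $\Pr\brac{A[1]=0}>0$, the event $\brac{A=0,\,S\ge\Delta}$ forces $Q[m+1]\le q-\Delta$, hence
\[
\Pr\brac{Q[m+1]\le q-\Delta\mid Q[m]=q}\;\ge\;\rho_{d},\qquad q>q_{1}.
\]
The I-model version is the integer analogue ($\Delta=1$), where the elementary bound $\Pr\brac{S\ge 1\mid Q=q}\ge\overline{s}(q)/S_{max}$ replaces the reverse Markov step.

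Next I would invoke the stationary up-crossing equals down-crossing identity at each level $x=q_{1}+k\Delta$ (resp.\ the integer cut $q_{1}+k-\tfrac12$). The up-crossing rate is at most $\Pr\brac{Q\le x}$ since the one-step probability is at most one. The down-crossing rate is at least the contribution from states lying in $(x,x+\Delta]$, each of which, by the drift lemma, crosses below $x$ in a single step with probability at least $\rho_{d}$. Equating gives the basic inequality
\[
\Pr\brac{Q\in(q_{1}+k\Delta,\,q_{1}+(k+1)\Delta]}\;\le\;\frac{1}{\rho_{d}}\Pr\brac{Q\le q_{1}+k\Delta}.
\]
Writing $b_{k}\Deq\Pr\brac{Q\le q_{1}+k\Delta}$ and splitting $b_{k+1}=b_{k}+\Pr\brac{Q\in(q_{1}+k\Delta,q_{1}+(k+1)\Delta]}$ then yields the linear recursion $b_{k+1}\le(1+1/\rho_{d})\,b_{k}$. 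Iterating from $b_{0}$ produces $b_{k}\le b_{0}(1+1/\rho_{d})^{k}$, and feeding this back into the displayed one-step inequality delivers the claimed bound, once one identifies $b_{0}$ with $\Pr\brac{Q<q_{1}}$ (for the R-model this is the continuity of $\pi$ at the crossing point $q_{1}$; for the I-model it amounts to aligning the indexing so the geometric exponent matches the statement).

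The main obstacle will be the careful treatment of the boundary near $q_{1}$. The drift lemma is only guaranteed for $q$ strictly above $q_{1}$, because at $q_{1}$ itself the definition $q_{1}=\sup\brac{q:\overline{s}(q)<s_{1}}$ leaves the uniform $\rho_{d}$ lower bound out of reach. In the R-model this is harmless whenever the stationary distribution places no atom at $q_{1}$, so that $\Pr\brac{Q\le q_{1}}=\Pr\brac{Q<q_{1}}$; in the I-model one must shift the level-cut argument by one (using the fact that $\overline{s}(q_{1}+1)\ge s_{1}$ since $q_{1}$ is the supremum) and then verify that the prefactor and exponent still agree with the stated bound. Beyond this accounting, the argument is entirely mechanical.
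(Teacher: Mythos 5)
Your proposal is correct and follows essentially the same route as the paper: the reverse Markov inequality on $S$ given $Q=q$ combined with $Pr\brac{A[1]=0}>0$ gives the one-step probability $\rho_{d}$ of dropping below a level, and your level-crossing identity is just a rephrasing of the paper's direct use of the stationarity equation $\pi(A)=\int P(q,A)\,d\pi(q)$ restricted to the interval one step above, after which the same CDF recursion $b_{k+1}\le(1+1/\rho_{d})b_{k}$ is iterated. The boundary subtlety you flag at $q_{1}$ itself (where $\overline{s}(q_{1})$ may still be below $s_{1}$) is present but glossed over in the paper's own proof as well, so your treatment is, if anything, slightly more careful.
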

The proof is given in Appendix \ref{app:prelimresults:txprob_real}.
For R-model, another upper bound on $\pi$, which uses a weaker assumption, $Pr\brac{A[1] \leq \epsilon} > 0$, with $\epsilon > 0$, is presented in \cite[Lemma 5.7.2]{vineeth_thesis}.
The second bound that we derive below is obtained using a Lyapunov drift based method.
This bound is an extension of the geometric lower bound on the stationary probability of DTMCs presented in \cite{gamarnik_1} to the cases with state dependent drift ($\lambda - \sq$) and state space being $\sR$.

\begin{proposition}
  Let $\pi$ be the stationary probability distribution for $(Q[m])$ under $\gamma \in \Gamma_{a}$.
  Let $Q \sim \pi$.
  Suppose there exists a finite $q_{d}$ such that \[ \forall q, 0 \leq q \leq q_{d}, \mathbb{E}\bras{Q[m + 1] - Q[m] \middle \vert Q[m] = q} \geq -d,\] where $d$ is positive.
  Let $\epsilon_{a}$ and $\delta_{a}$ be as in assumption RA1.
  For R-model, for any $q_{1}$, $k \geq 0$, $\Delta > 0$, $\delta > 0$, $\Delta + \delta < \delta_{a}$, and $0 \leq q_{1} + k\Delta \leq q_{d}$, we have 
  \begin{eqnarray*}
    & Pr\brac{Q \geq q_{1} + k\Delta} \geq \brap{\depn}^{k} Pr\brac{Q \geq q_{1}} + \\
    & \brap{1 - \brap{\depn}^{k}} \bras{ Pr\brac{Q > q_{d}} - \frac{1}{d}\int_{q_{d}^{+}}^{\infty} (\sq - \lambda)d\pi(q)}.
  \end{eqnarray*}
  Let $\epsilon_{a}$ be as defined in assumption A1.
  For I-model, for any $q_{1}$, $k \geq 0$ such that $0 \leq q_{1} + k \leq q_{d}$, we have
  \begin{eqnarray*}
    & Pr\brac{Q \geq q_{1} + k } \geq \brap{\frac{\epsilon_{a}}{\epsilon_{a} + d}}^{k} Pr\brac{Q \geq q_{1}} + \\
    & \brap{1 - \brap{\frac{\epsilon_{a}}{\epsilon_{a} + d}}^{k}}\bras{Pr\brac{Q > q_{d}} - \frac{1}{d}\sum_{q_{d} + 1}^{\infty} (\sq - \lambda)\pi(q)}.
  \end{eqnarray*}
  \label{prelimresults:drift_real}
\end{proposition}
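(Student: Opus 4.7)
The plan is to prove the bound via a one-step recursive inequality and then iterate. Writing $u_k \Deq Pr\brac{Q \geq q_1 + k\Delta}$, $\alpha \Deq \depn$, and $B \Deq Pr\brac{Q > q_d} - \frac{1}{d}\int_{q_d^{+}}^{\infty} (\sq - \lambda) d\pi(q)$, the target is the one-step inequality $u_k \geq \alpha u_{k-1} + (1-\alpha) B$ for each $k$ with $q_1 + k\Delta \leq q_d$. Unrolling this recursion from $k$ down to $0$ gives
\begin{equation*}
u_k \geq \alpha^k u_0 + (1-\alpha)(1 + \alpha + \cdots + \alpha^{k-1}) B = \alpha^k u_0 + (1 - \alpha^k) B,
\end{equation*}
which is exactly the bound in the statement.

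For the one-step inequality I would use stationarity, $\pi = \pi P$, applied to the indicator of $\brac{q \geq L_k}$ with $L_k \Deq q_1 + k\Delta$. This yields $u_k = \int Pr\brac{q + X \geq L_k \mid q} d\pi(q)$ with $X \Deq A - S$, and I would decompose the integral over the four regions $\brac{q < L_{k-1}}$, $\brac{L_{k-1} \leq q < L_k}$, $\brac{L_k \leq q \leq q_d}$, and $\brac{q > q_d}$, lower-bounding each piece. The band $[L_{k-1}, L_k)$ contributes at least $\epsilon_a (u_{k-1} - u_k)$: RA1 gives $Pr\brac{X > \delta_a \mid q} \geq \epsilon_a$ irrespective of the policy (since $X \geq A - S_{max}$ pathwise), and the assumption $\Delta + \delta < \delta_a$ then implies that on this event $q + X > L_k + \delta$, so the upward transition probability from this band is at least $\epsilon_a$ with a strict overshoot beyond $L_k$ of at least $\delta$.

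The contribution from $\brac{q \geq L_k}$ is handled using the drift assumption. Since $\Exp_\pi X = 0$ by stationarity, splitting this identity at $q_d$ and using $\Exp\bras{X \mid q} \geq -d$ on $\brac{q \leq q_d}$ together with $\Exp\bras{X \mid q} = \lambda - \sq$ on $\brac{q > q_d}$ produces the drift balance
\begin{equation*}
d \cdot Pr\brac{Q \leq q_d} + \int_{q_d^{+}}^{\infty} (\sq - \lambda) d\pi(q) \geq 0,
\end{equation*}
which is precisely the identity that introduces the constant $B$ (equivalently, $B \leq 1$). Combining the local overshoot-by-$\delta$ estimate on the middle band with this global drift balance, and weighting so that the upward contribution carries the factor $\delta\epsilon_a$ while the downward leakage from $\brac{q \geq L_k}$ carries the factor $d$, is what produces the ratio $\delta\epsilon_a/(\delta\epsilon_a + d) = \alpha$ together with the claimed $B$.

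The main obstacle, which I would tackle by carefully tracking $\delta$-weighted flows between the $q$-bands, is the bookkeeping of the residual terms from $\brac{q \geq L_k}$ so that they collapse cleanly to $(1-\alpha) B$; this is where the constraint $\Delta + \delta < \delta_a$ is essential, as it forces a strictly positive overshoot and lets $\delta$ enter the numerator of $\alpha$. The I-model statement is handled by the same decomposition with $\Delta = 1$, using A1 in place of RA1: since a net upward jump is by at least one integer unit with probability at least $\epsilon_a$, the role of the $\delta$-overshoot is played by the trivial unit step, yielding $\alpha = \frac{\epsilon_a}{\epsilon_a + d}$ as stated and the integral replaced by the corresponding sum.
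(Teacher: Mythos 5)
Your high-level skeleton matches the paper's: establish a one-step inequality $u_{k} \geq \alpha u_{k-1} + (1-\alpha)B$ and iterate, with RA1 supplying the upward-crossing probability from the band $[L_{k-1}, L_{k})$ and the drift hypothesis controlling the region above; your unrolling of the recursion is also correct. The gap is in the derivation of the one-step inequality itself, which you flag as "the main obstacle" but for which your proposed starting point cannot work. Stationarity applied to the indicator of $\brac{q \geq L_{k}}$ yields an identity purely about crossing \emph{probabilities}: the band contributes at least $\epsilon_{a}(u_{k-1}-u_{k})$ with no factor of $\delta$ (the overshoot \emph{magnitude} is invisible to an indicator test function), and the region $\brac{q \geq L_{k}}$ contributes $u_{k}$ minus a downward-leakage term $\int_{L_{k}}^{\infty} Pr\brac{q + X < L_{k} \mid q}\, d\pi(q)$. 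The drift hypothesis $\Exp\bras{X \mid q} \geq -d$ is a statement about conditional \emph{expectations} and does not bound this downward-crossing probability (a kernel can have drift $\geq -d$ yet place mass near $1$ on a downward jump of size $\Delta$, compensated by rare large upward jumps), so the indicator identity degenerates into a flow-balance relation from which neither the factor $\delta\epsilon_{a}/(\delta\epsilon_{a}+d)$ nor the term $(1-\alpha)B$ can be extracted. Likewise, your global balance $d\, Pr\brac{Q \leq q_{d}} + \int_{q_{d}^{+}}^{\infty}(\sq - \lambda)\,d\pi(q) \geq 0$ only shows $B \leq 1$; it is not the mechanism by which $B$ enters the recursion.

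The missing idea is the choice of test function. The paper applies the zero-expected-increment identity not to an indicator but to the truncated identity $\widehat{Q} = \max(q_{1}+\Delta, Q)$. Since $\widehat{Q}[m+1]-\widehat{Q}[m] \geq 0$ whenever $Q[m] < q_{1}+\Delta$, Markov's inequality on the band converts the RA1 event (probability at least $\epsilon_{a}$, overshoot at least $\delta$ because $\Delta+\delta < \delta_{a}$) into an expected increment of at least $\delta\epsilon_{a}$ per unit of band mass --- this is how $\delta$ legitimately enters the numerator of $\alpha$. On $[q_{1}+\Delta, q_{d}]$ the inequality $\widehat{Q} \geq Q$ lets the drift hypothesis contribute $-d\,Pr\brac{q_{1}+\Delta \leq Q \leq q_{d}}$, and on $(q_{d},\infty)$ the exact drift $\lambda - \sq$ appears; writing $Pr\brac{q_{1}+\Delta \leq Q \leq q_{d}} = Pr\brac{Q \geq q_{1}+\Delta} - Pr\brac{Q > q_{d}}$ and rearranging is precisely what produces $\alpha$ and $B$ at each step. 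To complete your argument you would need to replace the indicator by this truncated test function (or an equivalent $\delta$-weighted combination of your band identity with the increment identity, which amounts to the same thing); the I-model case then follows with $\Delta = 1$ and the unit overshoot, as you indicate.
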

The proof is given in Appendix \ref{app:prelimresults:drift_real}.

We note that the bounds on $\pi(q)$ derived above, are dependent on either the stationary probability $Pr\brac{Q < q_{1}}$ or  the \emph{tail-drift} $\sum_{q = q_{d} + 1}^{\infty} \brap{\lambda - \sq}\pi(q)$.
We note that $\Qg = \sum_{q = 0}^{\infty} q d\pi(q)$.
In order to obtain bounds on $\Qg$ in terms of $\Pg$, we relate $\pi(q)$ to $\Pg$.
One of ways in which this can be done is to relate $Pr\brac{Q < q_{1}}$ or $\sum_{q = q_{d} + 1}^{\infty} \brap{\sq - \lambda}\pi(q)$ to $\Pg$.
In the following lemma, we obtain upper bounds on $Pr\brac{Q < q_{1}}$ as a function of $\Pg$.
The following result will also turn out to be useful in providing intuition for the asymptotic behaviour of $Q^*(P_{c})$ in the regime $\Re$.
\begin{lemma}
Let $\pi$ be the stationary probability distribution for $(Q[m])$ under $\gamma \in \Gamma_{a}$.
Let $Q \sim \pi$.
Let $V \Deq \Pg - c(\lambda)$ for I-model, and $V \Deq \Pg - c_{R}(\lambda)$ for R-model.
Let $\mathcal{S} \subseteq [0, S_{max}]$, and $\mathcal{Q}_{S} \Deq \brac{q : \sq \in \mathcal{S}}$.
Let $\epsilon_{V}$ be positive.
Then for the I-model, we have
\begin{eqnarray*}
  Pr\brac{Q \in \mathcal{Q}_{S}} = Pr\brac{\sQ \in \mathcal{S}} \leq \frac{V}{m\epsilon_{V}},
\end{eqnarray*}
where $\mathcal{S} \subseteq [0, s_{l} - \epsilon_{V}) \bigcup (s_{u} + \epsilon_{V}, S_{max}]$.
For the R-model, $\exists a > 0$, such that
\begin{eqnarray*}
  Pr\brac{Q \in \mathcal{Q}_{S}} = Pr\brac{\sQ \in \mathcal{S}} \leq \frac{V}{a\epsilon^{2}_{V}},
\end{eqnarray*}
where $\mathcal{S} \subseteq [0, \lambda - \epsilon_{V}) \bigcup (\lambda + \epsilon_{V}, S_{max}]$.
\label{lemma:prelimresults:serviceratebound}
\end{lemma}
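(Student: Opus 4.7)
The plan is to combine the $\gamma$-dependent lower bound $\Expp c(\sQ) \leq \Pg$ from Section \ref{sec:gammadep_lowerbound} with a Markov-type argument driven by a supporting affine minorant of $c$ (respectively $c_{R}$) at $\lambda$.

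For the I-model, let $l$ be the affine function introduced in Section \ref{sec:differentcases}. By convexity and piecewise linearity of $c$, I have $l(s) \leq c(s)$ on $[0, S_{max}]$, with equality on $[s_{l}, s_{u}]$ (or only at $s = \lambda$ in Case 3). Since $l$ is affine and $\Expp \sQ = \lambda$, $\Expp l(\sQ) = l(\lambda) = c(\lambda)$. Subtracting this from $\Expp c(\sQ) \leq \Pg$ yields
\[ \Expp \bras{c(\sQ) - l(\sQ)} \leq \Pg - c(\lambda) = V. \]
Since $c(s)-l(s)$ is convex, non-negative, and vanishes at $s_{l}$ and $s_{u}$, and $c$ is linear on the pieces immediately adjacent to those breakpoints, for $s \leq s_{l} - \epsilon_{V}$ the gap satisfies $c(s) - l(s) \geq (m_{l} - m_{-})\epsilon_{V}$, where $m_{l}$ is the slope of $l$ and $m_{-}$ the slope of $c$ on the piece immediately left of $s_{l}$; convexity of $c$ forces $m_{l} - m_{-} > 0$. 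A symmetric argument at $s_{u}$ gives the slope difference $m_{+} - m_{l} > 0$. Setting $m \Deq \min(m_{l} - m_{-}, m_{+} - m_{l}) > 0$, I obtain $c(s) - l(s) \geq m\epsilon_{V}$ for every $s \in \mathcal{S}$. Markov's inequality applied to the non-negative random variable $c(\sQ) - l(\sQ)$ then gives
\[ Pr\brac{\sQ \in \mathcal{S}} \cdot m\epsilon_{V} \leq \Expp\bras{c(\sQ) - l(\sQ)} \leq V, \]
which is the stated I-model bound, and the equality $Pr\{Q \in \mathcal{Q}_{S}\} = Pr\{\sQ \in \mathcal{S}\}$ is just the definition of $\mathcal{Q}_{S}$.

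For the R-model, the same outline applies with $c$ replaced by $c_{R}$ and with $l(s) \Deq c_{R}(\lambda) + c_{R}'(\lambda)(s - \lambda)$, the tangent line of $c_{R}$ at $\lambda$. Strict convexity of $c_{R}$ and the assumption that $c_{R}''(\lambda) > 0$ yield, by a second-order Taylor expansion near $\lambda$ combined with continuity and compactness of $[0, S_{max}]$, a constant $a > 0$ with $c_{R}(s) - l(s) \geq a(s - \lambda)^{2}$ on $[0, S_{max}]$. Hence for $s \in \mathcal{S}$, $c_{R}(s) - l(s) \geq a\epsilon_{V}^{2}$, and Markov's inequality delivers $Pr\{\sQ \in \mathcal{S}\} \leq V/(a\epsilon_{V}^{2})$.

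The main technical point is uniformity of the constants. In the I-model one must check that $m$ is strictly positive in each of Cases 1, 2, and 3; in Case 3 this reduces to the fact that $l$ was chosen with slope strictly between the two one-sided slopes of $c$ at $\lambda$, so both slope differences are positive by construction. In the R-model one must exhibit a single $a$ valid on all of $[0, S_{max}]$, which is straightforward because $c_{R} - l$ is continuous, non-negative, vanishes only at $\lambda$, and is locally quadratic there, so away from $\lambda$ it is bounded below by a positive constant. Apart from this, the argument is just convexity plus Markov's inequality.
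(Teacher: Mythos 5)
Your proposal is correct and follows essentially the same route as the paper's proof in Appendix~\ref{app:prelimresults:serviceratebound}: both establish $\Expp\bras{c(\sQ) - l(\sQ)} \leq V$ from the $\gamma$-dependent bound of Section~\ref{sec:gammadep_lowerbound}, lower-bound the nonnegative gap $c - l$ (resp.\ $c_{R} - l$) by $m\epsilon_{V}$ via the slope differences at the breakpoints $s_{l}, s_{u}$ (resp.\ by $a\epsilon_{V}^{2}$ via the Berry--Gallager quadratic minorant $G$), and conclude by Markov's inequality, which the paper carries out by splitting the sum over $Q_{s_{l}}$ and $Q_{s_{u}}$. The only differences are cosmetic (the paper phrases the slope gaps as tangents of angles with $l(s)$ and handles the degenerate endpoints $s_{l}=0$, $s_{u}=S_{max}$ in a footnote).
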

The proof of this lemma is given in Appendix \ref{app:prelimresults:serviceratebound}.
We note that in the asymptotic regime $\Re$, $\epsilon_{V}$ is usually chosen as a sequence depending on $V$.
By choosing $\epsilon_{V}$ to be $\omega(V)$ and $\omega(\sqrt{V})$ for I-model and R-model respectively, the above bounds are observed to be decreasing to zero, as $V \downarrow 0$.
We note that an upper bound on $Pr\brac{Q < q_{1}}$ as a function of $\Pg$ can be obtained by choosing $q_{1}$ to be such that $\overline{s}(q_{1}) \in [0, s_{l} - \epsilon_{V}]$ for both I-model and R-model.

\section{Asymptotic analysis}
\label{sec:asymp_analysis}
In this section, we present intuitive explanations for the behaviour of $Q^*(P_{c})$ for I-model and R-model.
Then, we present and compare the asymptotic lower bounds on $Q^*(P_{c})$ for I-model and R-model.

We note that the behaviour of $Q^*(P_{c})$ for Case 1 is different from all other cases.
Hence, we consider Case 1 separately.
For Case 1, we have analytical results only for the case where $|\mathcal{H}| = 1$.
Then, we have that $c(s)$ is the lower convex envelope of $P(h_{0}, s)$, where $h_{0}$ is the single fade state, with $a_{p} \in \Lambda = \brac{0, \dots, S_{max}}$.
For Case 1, from Lemma \ref{lemma:prelimresults:serviceratebound}, for any sequence of feasible $\gamma_{k} \in \Gamma_{a}$, we have that $Pr\brac{\sQ > s_{u} + \epsilon_{V}} \downarrow 0$, for $\epsilon_{V} = \sqrt{V}$ and $V \Deq P_{c} - c(\lambda)$.
Therefore, any $\gamma \in \Gamma_{a}$ with $\Pg = c(\lambda)$ has $Pr\brac{S[m] > s_{u}} = 0, \forall m \geq 1$.

Let the policy $\gamma_{u}$ be such that $S[m + 1] = \min(Q[m], s_{u})$.
For any given realization of the arrival process and randomization of batch sizes, let $q^*[m]$ and $q[m]$ be the evolution of the queue process under $\gamma_{u}$ and another admissible policy $\gamma$ respectively, with $\Pg = c(\lambda)$.
Then we note that $q^*[m] \leq q[m], \forall m$, and therefore $\gamma_{u}$ has the least average queue length over all $\gamma \in \Gamma_{a}$ such that  $\Pg = c(\lambda)$.
Since $\overline{Q}(\gamma_{u})$ is finite, for Case 1 $Q^*(P_{c})$ does not grow to infinity as $P_{c}$ approaches $c(\lambda)$.
For $V > 0$, we note that service rates which are greater than $s_{u}$ could be used, i.e., for all $q \in \brac{0, \dots, q_{u}}$, $\sq \leq s_{u}$, while $\sq > s_{u}$ for $q > q_{u}$.
Intuitively, this implies that $Q^*(P_{c}) < \overline{Q}(\gamma_{u})$ for $P_{c} > c(\lambda)$.
We note that $P_{c} - c(\lambda)$ would constrain $q_{u}$, and as $P_{c} \downarrow c(\lambda)$, $q_{u} \uparrow \infty$.
Instead of finding the minimum average queue length over all admissible policies with $\Pg \leq P_{c}$, we find the minimum average queue length over all policies with a particular value of $q_{u}$, which is constrained by $P_{c}$.
In the following, we present an asymptotic lower bound to $Q^*(P_{c})$ which is strictly less than $\overline{Q}(\gamma_{u})$ for $P_{c} > c(\lambda)$ and approaches $\overline{Q}(\gamma_{u})$ as $P_{c} \downarrow c(\lambda)$, using the above constraint relaxation.

We now discuss the behaviour of $Q^*(P_{c})$ for Cases 2 and 3 for I-model and for R-model, in the regime $\Re$.
For all these cases, we find that $Q^*(P_{c})$ increases to infinity in the regime $\Re$ for admissible policies.
From Lemma \ref{lemma:prelimresults:serviceratebound}, it can be shown that $Pr\brac{\sq \leq \epsilon} = \mathcal{O}(V)$, for a small positive $\epsilon$.
This implies that $Pr\brac{Q \leq \epsilon} = \mathcal{O}(V)$, i.e., the probability of the queue being (almost) empty goes to zero as $V \downarrow 0$.
Since the queue length process is a Markov chain for admissible policies, we expect that the stationary probability of any queue length which is reached from any $q \leq \epsilon$ in any finite number of slots (which does not depend on $V$) would also decrease to zero.
Then, as $V \downarrow 0$, one would expect that the stationary probability of any finite queue length should decrease to zero.
Therefore, intuitively, the average queue length has to increase.

The exact nature of $Q^*(P_{c})$ for the different cases depends on the \emph{shape} of $\pi(q)$ in the regime $\Re$.
Intuition about $\pi(q)$ is obtained by considering a simplified state dependent M/M/1 queueing model \cite[Chapters 2 and 3]{vineeth_thesis}.
The state dependent M/M/1 queueing model is a birth death process evolving on $\sZ$, with the state representing the queue length, controllable death rates $\mu(q), q \geq 1$ representing service rates, and controllable birth rates $\lambda(q), q \geq 0$ representing arrival rates.
We associate a cost rate function $\tilde{c}(\mu)$ with the service rate $\mu$ for this model, where $\tilde{c}(\mu)$ is the function $c(\mu)$ (or $c_{R}(\mu)$ if we need intuition about R-model).
We also note that $\mu(q) \in [0, S_{max}]$ and $\mu(0) = 0$.
Since for I-model and R-model, we do not have admission control, we choose $\lambda(q) = \lambda$.
We consider admissible policies $\Gamma_{a}$ for this simplified model, where a policy is the choice of $\mu(q)$.
An admissible policy is such that $\mu(q)$ is a non-decreasing function of $q$.

Let $\Delta(q) = \lambda - \mu(q)$ be the \emph{drift} in state $q$.
For admissible policies, since $\Delta(q)$ is a monotonically non-increasing function of $q$, with $\Delta(0) = \lambda > 0$ and $\lim_{q \uparrow \infty} \Delta(q) < 0$, the stationary probability distribution has the following \emph{shape} (as (S1) in Figure \ref{introduction:figure:intuition_2}). The stationary probability distribution $\pi(q)$ is monotonically increasing, then may or may not be constant for a set of queue lengths, and then is monotonically decreasing.
For I-model, let $\mathcal{Q}_{h} = \brac{q : \mu(q) \in [s_{l} - \epsilon_{V}, s_{u} + \epsilon_{V}]}$, where $\epsilon_{V}$ is $\omega(V)$ as $V \downarrow 0$.
Similarly for R-model, let $\mathcal{Q}_{h} = \brac{q : \mu(q) \in [\lambda - \epsilon_{V}, \lambda + \epsilon_{V}]}$.
Then proceeding as in Lemma \ref{lemma:prelimresults:serviceratebound} it can be shown that $Pr\brac{Q \in \mathcal{Q}_{h}} \uparrow 1$ as $V \downarrow 0$, i.e., as $V \downarrow 0$, the service rates have to be chosen from the set $[s_{l} - \epsilon_{V}, s_{u} + \epsilon_{V}]$ or $[\lambda - \epsilon_{V}, \lambda + \epsilon_{V}]$ for I-model and R-model respectively.

We consider Case 2 for I-model, we note that $s_{l} < s_{u}$, therefore as $V \downarrow 0$, the stationary probability distribution $\pi(q)$ is monotonically increasing (since $\lambda > s_{l}$), then may or may not be constant for a set of queue lengths, and then is monotonically decreasing (since $\lambda < s_{u}$).
Furthermore, if $q_{1} \Deq \min \mathcal{Q}_{h}$, then it can be shown that $\pi(q_{1} - 1) = \mathcal{O}(V)$.
We note that $\Qg \geq \frac{\overline{q}}{2}$, where $\overline{q} \in \mathcal{Q}_{h}$ is the largest queue length such that $Pr\brac{Q \leq \overline{q}} \leq \frac{1}{2}$.
We note that the stationary distribution increases geometrically from $\pi(q_{1} - 1) = \mathcal{O}(V)$, with the geometric factor being at least $\frac{\lambda}{s_{l}}$, i.e., $\pi(q) \leq \pi(q_{1} - 1) \fpow{\lambda}{s_{l}}{q - q_{1} + 1}$ for $q \in \mathcal{Q}_{h}$.
Then it can be shown that $\overline{q}$ is atleast $\Omega\brap{\log\nfrac{1}{V}}$.

We note that as $V \downarrow 0$, $\mu(q) \rightarrow \lambda$, for $q \in \mathcal{Q}_{h}$ for both Case 3 (I-model) and R-model.
Then, as $V \downarrow 0$, $\pi(q)$ is a constant for $q \in \mathcal{Q}_{h}$ ((S2) in Figure \ref{introduction:figure:intuition_2}).
It can be shown that $\pi(q_{1} - 1)$ is $\mathcal{O}(V)$ for Case 3 (I-model) and $\mathcal{O}(\sqrt{V})$ for R-model ((P1) and (P2) respectively in Figure \ref{introduction:figure:intuition_2}).
Then we note that $\pi(q)$ for $q \in \mathcal{Q}_{h}$ is equal to $\pi(q_{1} - 1)$.
Hence, $\overline{q}$, as defined above, is $\Omega\nfrac{1}{V}$ and $\Omega\nfrac{1}{\sqrt{V}}$ for Case 3 (I-model) and R-model respectively.
We have been unable to rigorously relate the distribution $\pi(q)$ for the simplified M/M/1 model discussed above to the discrete time queueing model considered in this paper.
However, the proofs for the asymptotic lower bounds for I-model and R-model are guided by the above intuition.

\begin{figure}
  \centering
  \includegraphics[width=\journaldtImageWidth,height=\journaldtImageHeight]{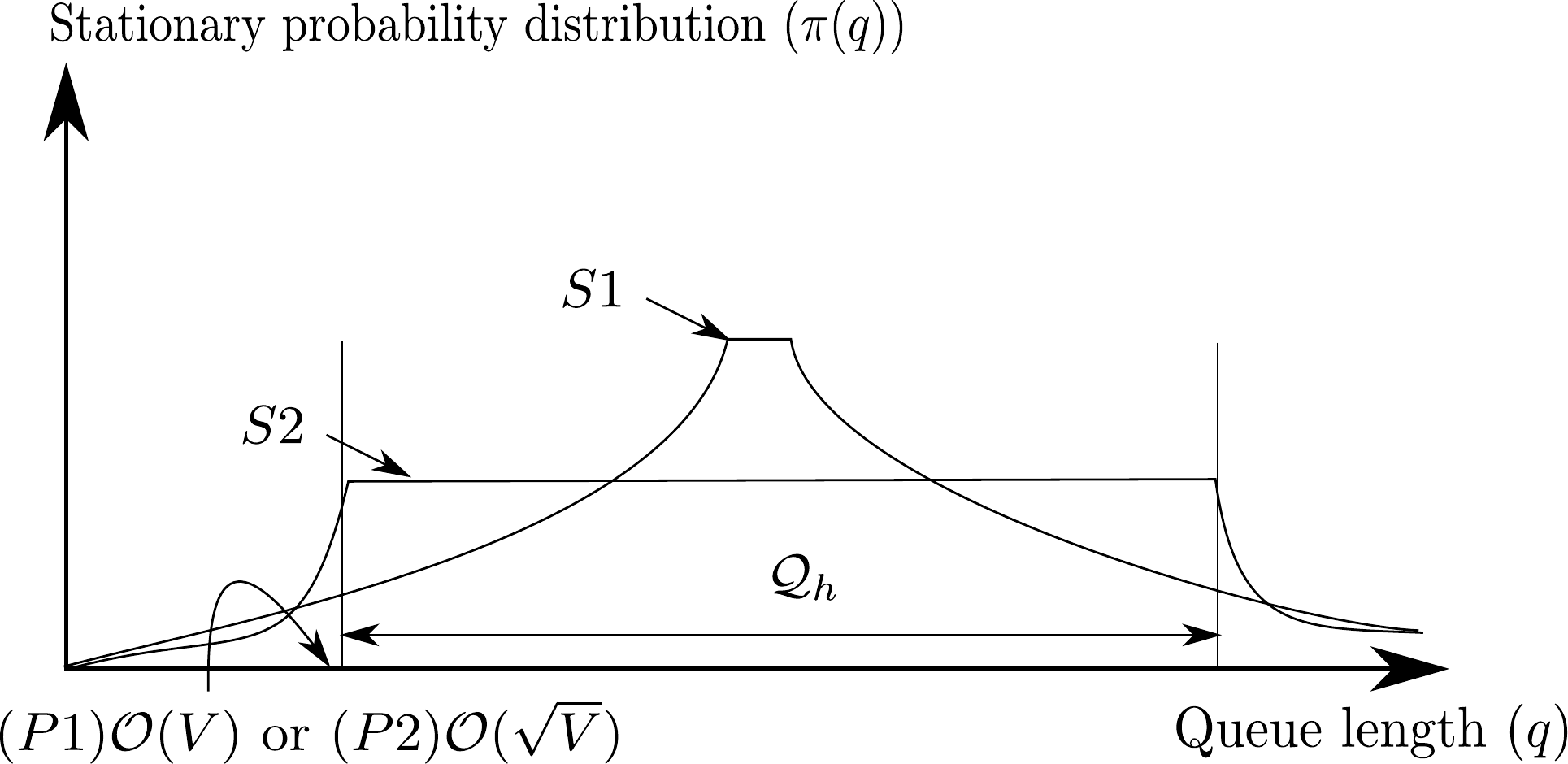}
  \caption{Possibilities for the behaviour of the stationary probability distribution in the regime $\Re$}
  \label{introduction:figure:intuition_2}
\end{figure}

\subsection{Asymptotic lower bounds for I-model}

We are able to obtain an analytical lower bound for Case 1 only under the additional assumption that $s_{u} = 1$ and by restricting to non-idling deterministic admissible policies.

\begin{proposition}
  For Case 1, with $|\mathcal{H}| = 1$, $s_{u} = 1$, and for any sequence of non-idling deterministic $\gamma_{k} \in \Gamma_{a}$, with $\Pgk - c(\lambda) = V_{k} \downarrow 0$, we have that $\overline{Q}(\gamma_{k}) = \frac{\sigma^{2}}{2(s_{u} - \lambda)} + \frac{\lambda}{2} - \mathcal{O}\brap{V_{k}\log\nfrac{1}{V_{k}}}$. Therefore, $Q^*(P_{c}) = \frac{\sigma^{2}}{2(s_{u} - \lambda)} + \frac{\lambda}{2} - \mathcal{O}\brap{\brap{P_{c} - c(\lambda)}\log\nfrac{1}{P_{c} - c(\lambda)}}$.
\end{proposition}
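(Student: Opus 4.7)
The plan is to compare any non-idling deterministic admissible $\gamma_k$ to the reference policy $\gamma_u$ that serves $S[m+1] = \min\brap{Q[m], 1}$ at every slot, and to bound the gap in average queue length by $\mathcal{O}\brap{V_k \log\nfrac{1}{V_k}}$. Under $\gamma_u$ the chain satisfies $Q_u[m+1] = (Q_u[m]-1)^+ + A[m+1]$, and the stationary second-moment identity combined with flow balance ($Pr\brac{Q_u = 0} = 1-\lambda$) yields, after routine algebra, $\overline{Q}(\gamma_u) = \lambda/2 + \sigma^2/(2(1-\lambda))$, which matches the proposed leading term since $s_u = 1$.

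Next, I would exploit the structure of $\gamma_k$. Writing its deterministic service function as $s_k(\cdot)$ and setting $q_u \Deq \sup\brac{q : s_k(q) \le 1}$, non-idling and monotonicity force $s_k(q) = 1$ on $\brac{1, \dots, q_u}$ and $s_k(q) \ge 2$ on $\brac{q_u+1, \dots}$. Because $s = 1$ is a break-point of $c(\cdot)$ in Case 1, $c(s) = P(1)s$ on $[0,1]$ and so $c(\lambda) = \lambda P(1)$; combined with flow balance $\Exp s_k(Q_k) = \lambda$, this rearranges the power formula to
\[
V_k = \Pgk - c(\lambda) = \Exp\bras{\brap{P(s_k(Q_k)) - s_k(Q_k) P(1)}\mathbb{I}\brac{Q_k > q_u}} \ge \delta\, Pr\brac{Q_k > q_u},
\]
where $\delta \Deq P(2) - 2P(1) > 0$ (positive because $s = 1$ is a slope change of $c$; convexity of $P$ extends this to $P(s) - s P(1) \ge \delta$ for every $s \ge 2$). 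Hence $Pr\brac{Q_k > q_u} \le V_k/\delta$.

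The second ingredient is a sample-path coupling: running $\gamma_k$ and $\gamma_u$ on the same arrival sequence from $Q_k[0] = Q_u[0]$, a one-line induction (two cases on $Q_u[m]$) gives $Q_k[m] \le Q_u[m]$ almost surely for every $m$. Therefore $\Exp\bras{Q_k \mathbb{I}\brac{Q_k > q}} \le \Exp\bras{Q_u \mathbb{I}\brac{Q_u > q}}$ for every $q \ge 0$, and since $(Q_u[m])$ has bounded increments with negative drift $\lambda - 1$ away from the boundary, a Foster--Lyapunov exponential-moment argument yields the geometric-tail bound $\Exp\bras{Q_u \mathbb{I}\brac{Q_u > q}} \le C(q+1)\rho^q$ for some $\rho \in (0,1)$.

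To finish, I would apply the stationary identity $\Exp Q_k^2 = \Exp\brap{Q_k - s_k(Q_k) + A}^2$, split on $\brac{Q_k \le q_u}$ versus $\brac{Q_k > q_u}$, and substitute $Pr\brac{1 \le Q_k \le q_u} = \lambda - \Exp\bras{s_k(Q_k)\mathbb{I}\brac{Q_k > q_u}}$ to obtain the clean identity
\[
\overline{Q}(\gamma_u) - \Qgk = \frac{\Exp\bras{(s_k(Q_k)-1)(2Q_k - s_k(Q_k))\mathbb{I}\brac{Q_k > q_u}}}{2(1-\lambda)} \le \frac{S_{max}-1}{1-\lambda}\,\Exp\bras{Q_k \mathbb{I}\brac{Q_k > q_u}}.
\]
The main obstacle is bounding this truncated mean: Cauchy--Schwarz from $Pr\brac{Q_k > q_u} \le V_k/\delta$ alone only gives $\mathcal{O}(\sqrt{V_k})$. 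The fix is to play the weak power-based probability bound on $\brac{q_u < Q_k \le q^*}$ against the strong coupling-inherited geometric tail on $\brac{Q_k > q^*}$: for any $q^* \ge q_u$,
\[
\Exp\bras{Q_k \mathbb{I}\brac{Q_k > q_u}} \le q^* V_k/\delta + \Exp\bras{Q_u \mathbb{I}\brac{Q_u > q^*}} \le q^* V_k/\delta + C(q^*+1)\rho^{q^*}.
\]
Picking $q^* = \ceiling{K \log\nfrac{1}{V_k}}$ with $K \log\nfrac{1}{\rho} > 1$ makes both terms $\mathcal{O}\brap{V_k \log\nfrac{1}{V_k}}$ (the case $q_u > q^*$ falls entirely in the geometric-tail regime and is easier). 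The matching upper bound $Q^*(P_c) \le \overline{Q}(\gamma_u)$ then follows from feasibility of $\gamma_u$, since $\overline{P}(\gamma_u) = c(\lambda) < P_c$.
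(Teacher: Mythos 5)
Your argument is correct, but note that the paper itself gives no proof of this proposition --- it defers entirely to Lemmas 4.3.17--4.3.19 of the cited thesis --- so there is no in-paper derivation to compare against line by line. What you have written is a complete, self-contained proof that fleshes out exactly the skeleton the paper sketches informally in Section V (the reference policy $\gamma_{u}$, the threshold $q_{u}$ above which rates exceed $s_{u}$, and the fact that $P_{c}-c(\lambda)$ constrains $q_{u}$). All the key steps check out: the identity $\overline{Q}(\gamma_{u})=\frac{\sigma^{2}}{2(1-\lambda)}+\frac{\lambda}{2}$ follows from the stationary second-moment balance with $\Exp\min(Q_{u},1)=\lambda$; the bound $Pr\brac{Q_{k}>q_{u}}\leq V_{k}/\delta$ with $\delta=P(2)-2P(1)>0$ is valid because $c(\lambda)=\lambda P(1)$ on the first linear piece, flow balance gives $\Exp s_{k}(Q_{k})=\lambda$, the terms with $s\in\brac{0,1}$ cancel exactly, and convexity gives $P(s)-sP(1)\geq(s-1)\delta$ for $s\geq 2$ (with $\delta>0$ precisely because $s_{u}=1$ is a slope-change point of $c$); the pathwise domination $Q_{k}[m]\leq Q_{u}[m]$ follows from non-idling by the two-case induction you indicate; and your exact expression $\overline{Q}(\gamma_{u})-\Qgk=\Exp\bras{(s_{k}(Q_{k})-1)(2Q_{k}-s_{k}(Q_{k}))\mathbb{I}\brac{Q_{k}>q_{u}}}/(2(1-\lambda))$ is what the second-moment identity yields after the substitutions you describe. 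The final splitting at $q^{*}=\ceiling{K\log(1/V_{k})}$, trading the $\mathcal{O}(V_{k})$ probability bound below $q^{*}$ against the inherited geometric tail above it, is the genuinely non-obvious step and it does deliver the $\mathcal{O}\brap{V_{k}\log\nfrac{1}{V_{k}}}$ rate; Cauchy--Schwarz alone would indeed only give $\mathcal{O}(\sqrt{V_{k}})$, so identifying this as the crux is right. Two small points you should make explicit in a final write-up: the transfer of the pathwise ordering to the stationary expectations $\Exp\bras{Q_{k}\mathbb{I}\brac{Q_{k}>q}}\leq\Exp\bras{Q_{u}\mathbb{I}\brac{Q_{u}>q}}$ needs a one-line Ces\`{a}ro/ergodicity argument, and the degenerate cases $q_{u}=\infty$ (where $\gamma_{k}=\gamma_{u}$) and $S_{max}=1$ should be dispatched as trivial.
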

This result is derived using a series of steps, which are not presented here for brevity.
The complete derivation can be found in \cite[Lemmas 4.3.17, 4.3.18, 4.3.19]{vineeth_thesis}.
We note that if $s_{u} = 1$, then $\overline{Q}(\gamma_{u}) = \frac{\sigma^{2}}{2(s_{u} - \lambda)} + \frac{\lambda}{2}$, from \cite{denteneer}.
Thus as $V_{k} \downarrow 0$, we have that the asymptotic lower bound has $\overline{Q}(\gamma_{u})$ as the limit point.
We now consider Case 2.
\begin{proposition}
  For Case 2, given any sequence of admissible policies $\gamma_{k}$ with $\overline{P}(\gamma_{k}) - c(\lambda) = V_{k} \downarrow 0$, we have that $\overline{Q}(\gamma_{k}) = \Omega\brap{\log\nfrac{1}{V_{k}}}$. Therefore, $Q^*(P_{c}) = \Omega\brap{\log\nfrac{1}{P_{c} - c(\lambda)}}$.
  \label{lemma:case2}
\end{proposition}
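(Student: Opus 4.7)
The plan is to read an $\Omega(\log(1/V))$ lower bound on the median of $Q$ under $\pi$ off of the preliminary results of Section \ref{sec:asymptotic_prelims}, and then convert this to a lower bound on $\overline{Q}(\gamma)$ via a one-line Markov inequality. Specifically, I would combine Lemma \ref{lemma:prelimresults:serviceratebound} (to show that almost none of the mass of $\pi$ sits below the queue length at which $\overline{s}$ first reaches $s_l$) with Proposition \ref{prelimresults:txprob_real} (to show that $\pi$ can only grow geometrically above this threshold).

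Fix $\gamma \in \Gamma_a$ and let $V \Deq \overline{P}(\gamma) - c(\lambda) > 0$. With $\epsilon_V = \sqrt{V}$ and $\mathcal{S} = [0, s_l - \epsilon_V)$, monotonicity of $\overline{s}$ identifies $\mathcal{Q}_{\mathcal{S}} = \brac{0,\dots,q_1-1}$ for some integer $q_1 = q_1(\gamma, V) \geq 0$, and Lemma \ref{lemma:prelimresults:serviceratebound} gives $Pr\brac{Q < q_1} \leq V/(m\epsilon_V) = \sqrt{V}/m$, where $m > 0$ is the slope-gap constant appearing in that lemma. Next, applying Proposition \ref{prelimresults:txprob_real} (I-model) with $s_1 = s_l - \epsilon_V$ yields
\begin{equation*}
\pi(q_1 + j) \leq \frac{Pr\brac{Q < q_1}}{\rho_d}\brap{1 + \frac{1}{\rho_d}}^j \quad \text{for all } j \geq 0,
\end{equation*}
with $\rho_d = ((s_l - \epsilon_V)/S_{max})\, Pr\brac{A[1]=0}$. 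Since $s_l > 0$ in Case 2 and $\epsilon_V \downarrow 0$, there is a positive $\rho^*$ such that $\rho_d \geq \rho^*$ for all sufficiently small $V$, and hence $C \Deq 1 + 1/\rho_d$ is uniformly bounded above by a constant.

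Summing the geometric bound I obtain, for every $k \geq 0$,
\begin{equation*}
Pr\brac{Q \leq q_1 + k} \leq D\sqrt{V}\, C^{k+1}
\end{equation*}
for constants $D, C$ that depend only on the problem data, not on $\gamma$ or on small $V$. Choose $k_V$ as the largest integer with $D\sqrt{V}\, C^{k_V + 1} \leq 1/2$; boundedness of $C$ forces $k_V \geq c_1 \log(1/V)$ for some constant $c_1 > 0$. Setting $\overline{q} \Deq q_1 + k_V$, we then have $Pr\brac{Q > \overline{q}} \geq 1/2$, so Markov's inequality applied to the integer-valued $Q$ gives
\begin{equation*}
\overline{Q}(\gamma) = \Exp_{\pi} Q \geq (\overline{q}+1)\, Pr\brac{Q > \overline{q}} \geq \frac{\overline{q}+1}{2} \geq \frac{c_1}{2}\log\nfrac{1}{V}.
\end{equation*}
Specializing to $\gamma = \gamma_k$ with $V = V_k$ yields $\overline{Q}(\gamma_k) = \Omega(\log(1/V_k))$; taking the infimum over feasible admissible $\gamma$ with $\overline{P}(\gamma) \leq P_c$ gives $Q^*(P_c) = \Omega(\log(1/(P_c - c(\lambda))))$.

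The main obstacle I foresee is the bookkeeping needed to show that $m$, $\rho^*$, $D$, and $c_1$ can be chosen uniformly across $\gamma$ and across small $V$. The dependence on $\gamma$ enters only through $q_1 \geq 0$, which merely shifts $\overline{q}$ to the right and so only helps; the dependence on $V$ via $\epsilon_V = \sqrt{V}$ is continuous, and because Case 2 forces $s_l > 0$, the quantity $\rho_d$ stays bounded away from $0$ as $V \downarrow 0$. Beyond this uniformity check, the remaining work is routine: the geometric-versus-$\sqrt{V}$ race in the cdf bound produces the logarithmic threshold cleanly.
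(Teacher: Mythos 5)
Your argument is correct and follows essentially the same route as the paper's own proof: bound $Pr\brac{Q < q_{1}}$ via Lemma \ref{lemma:prelimresults:serviceratebound}, propagate the geometric growth bound of Proposition \ref{prelimresults:txprob_real} above $q_{1}$ to show the median of $\pi$ is $\Omega\brap{\log\nfrac{1}{V}}$, and finish with Markov's inequality. The only cosmetic difference is your choice $\epsilon_{V} = \sqrt{V}$ where the paper fixes a constant $\epsilon_{1} < s_{l}$ independent of $V$ (yielding $Pr\brac{Q < q_{1}} = \mathcal{O}(V)$ rather than $\mathcal{O}(\sqrt{V})$); both choices give the same logarithmic order.
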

The proof is given in Appendix \ref{appendix:proof:case2lb}.
The geometrically increasing bound on $\pi(q)$ from Proposition \ref{prelimresults:txprob_real} is used in deriving the above asymptotic lower bound.
We note that the geometric nature of the bound is similar to that which has been observed for the M/M/1 queueing model discussed above.
\begin{remark}
  We note that a similar asymptotic lower bound has been derived in \cite[Theorem 2]{neely_utility} (for a model with admission control) and in \cite[Theorem 2]{botan}.
  Assumption G3 has not been used in both papers.
  Although the above result has been derived independently, we note that underlying all the three derivations, there is the idea of bounding the probability of an event by a particular sequence of transitions for a Markov chain, i.e., a sequence of transitions in which the state of the Markov chain becomes successively smaller.
  Furthermore, in our proof, using assumption G3, we obtain geometric bounds on the stationary probability of any queue length, which is not available in \cite{neely_utility} as well as \cite{botan}.
\end{remark}

For Case 2, a tight asymptotic upper bound can be obtained from the sequence of Tradeoff Optimal Control Algorithm (TOCA) policies in \cite{neely_mac}.
However, TOCA policies are not admissible, since $S[m]$ depends on an auxiliary state variable, other than $Q[m - 1]$ and $H[m]$.
Therefore, in \cite[Chapter 5, Lemma 5.3.2]{vineeth_thesis}, we present a sequence of admissible policies that achieve the above asymptotic lower bound.

We now consider Case 3.
\begin{proposition}
  For Case 3, given any sequence of admissible policies $\gamma_{k}$ with $\overline{P}(\gamma_{k}) - c(\lambda) = V_{k} \downarrow 0$, we have that $\overline{Q}(\gamma_{k}) = \Omega\nfrac{1}{V_{k}}$. Therefore, $Q^*(P_{c}) = \Omega\nfrac{1}{P_{c} - c(\lambda)}$.
  \label{lemma:case3lb}
\end{proposition}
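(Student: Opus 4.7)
The plan is to show that the stationary distribution of $Q$ under any admissible $\gamma_k$ with $V_k = \Pgk - c(\lambda) \downarrow 0$ must be spread over a range of length $\Omega(1/V_k)$. The argument combines Lemma \ref{lemma:prelimresults:serviceratebound} with Proposition \ref{prelimresults:drift_real}, using the scaling $\epsilon_{V_k} = \Theta(V_k)$; this linear scaling (sharper than the $\Theta(\sqrt{V_k})$ natural for the R-model) is available because of the piecewise-linear kink of $c(\cdot)$ at $\lambda = a_p$.

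First I would exploit the kink. Writing $l(\cdot)$ for the affine minorant of $c$ through $(\lambda, c(\lambda))$ with $\Expp l(\bar{s}(Q)) = c(\lambda)$ (as in Section \ref{sec:differentcases}), the one-sided slope gap yields $c(s) - l(s) \geq \alpha |s - \lambda|$ for some $\alpha > 0$. Combined with the bound $\Expp c(\bar{s}(Q)) \leq \Pgk$ from Section \ref{sec:gammadep_lowerbound}, this gives the $L^1$ concentration
\[\Expp |\bar{s}(Q) - \lambda| \leq V_k/\alpha.\]
Taking $\epsilon_{V_k} = c V_k$ for a constant $c$ to be fixed large, Lemma \ref{lemma:prelimresults:serviceratebound} (with $s_l = s_u = \lambda$) gives $Pr\brac{\bar{s}(Q) \notin [\lambda - \epsilon_{V_k},\lambda + \epsilon_{V_k}]} \leq 1/(mc)$. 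Let $q_1^k = \inf\brac{q : \bar{s}(q) \geq \lambda - \epsilon_{V_k}}$ and $q_d = q_2^k = \sup\brac{q : \bar{s}(q) \leq \lambda + \epsilon_{V_k}}$; on $[0, q_d]$ the conditional drift $\lambda - \bar{s}(q)$ is bounded below by $-\epsilon_{V_k}$.

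Applying Proposition \ref{prelimresults:drift_real} with $q_1 = q_1^k$, $d = \epsilon_{V_k}$, and $q_d$ as above, the geometric factor $r = \epsilon_a/(\epsilon_a + \epsilon_{V_k})$ satisfies $r^j \geq 1/2$ for all $j$ up to $(\log 2)\,\epsilon_a/\epsilon_{V_k} = \Theta(1/V_k)$. The critical step is bounding the correction term in that lemma: by stationarity,
\[\sum_{q > q_d}(\bar{s}(q) - \lambda)\pi(q) = \sum_{q \leq q_d}(\lambda - \bar{s}(q))\pi(q) \leq \Expp |\bar{s}(Q) - \lambda| \leq V_k/\alpha,\]
so $\tfrac{1}{d}\sum_{q > q_d}(\bar{s}(q) - \lambda)\pi(q) \leq 1/(\alpha c)$, a constant that can be made arbitrarily small by fixing $c$ large. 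This yields $Pr\brac{Q \geq q_1^k + j}$ bounded below by a positive constant for all $j$ up to $\min(q_d - q_1^k,\ \Theta(1/V_k))$.

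I would then split into two cases on the width of the good interval. In the wide case $q_2^k - q_1^k = \Omega(1/V_k)$, summing the tail bound directly gives $\Qgk \geq \sum_j Pr\brac{Q \geq q_1^k + j} = \Omega(1/V_k)$. In the narrow case $q_2^k - q_1^k = o(1/V_k)$ (a policy in which $\bar{s}$ transitions sharply across $\lambda$), the chain concentrates near a focal point in $[q_1^k, q_d]$ but must still spread above $q_d$, where the drift magnitude is at least $\epsilon_{V_k}$; a supplementary Lyapunov-style tail estimate for the queue above $q_d$, exploiting the bounded increments $A - S \in [-S_{max}, A_{max}]$, shows that the conditional spread of $\pi$ above $q_d$ is $\Theta(1/\epsilon_{V_k}) = \Theta(1/V_k)$, again forcing $\Qgk = \Omega(1/V_k)$. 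The main obstacle is this narrow-policy subcase: Proposition \ref{prelimresults:drift_real} is restricted to $k \leq q_d - q_1$, so bounding the upper tail (where the drift-lemma hypothesis fails) requires a separate argument, such as reapplying a drift-based bound starting from a shifted base or a Lyapunov-style bound exploiting the bounded increments.
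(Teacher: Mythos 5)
Your main line of argument is essentially the paper's own proof of this proposition: exploit the kink of $c$ at $\lambda=a_p$ to get $\Expp\bras{c(\sQ)-l(\sQ)}\le V_k$ and hence the tail-drift bound $\sum_{q>q_d}(\overline{s}(q)-\lambda)\pi(q)\le V_k/\alpha$; choose $\epsilon_{V_k}=cV_k$ linearly in $V_k$ so that the correction term $D_t$ in Proposition \ref{prelimresults:drift_real} is a small constant while the geometric factor $\brap{\epsilon_a/(\epsilon_a+\epsilon_{V_k})}^{j}$ stays bounded away from zero for $j=\Theta(1/V_k)$ steps; conclude that the median of $Q$ is $\Omega(1/V_k)$. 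The paper does exactly this (with $q_1=0$ rather than your $q_1^k$, which changes nothing material), so in your ``wide'' case the argument is correct and coincides with the paper's.

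The gap is in your narrow subcase. You are right that Proposition \ref{prelimresults:drift_real} only applies for $q_1+k\le q_d$ (a restriction the paper's own write-up glosses over, addressing only $q_d=\infty$), but your proposed resolution does not work: above $q_d$ the drift magnitude is only bounded \emph{from below} by $\epsilon_{V_k}$ --- nothing prevents $\overline{s}(q)=S_{max}$ for all $q>q_d$, in which case the drift there is the constant $\lambda-S_{max}$ and the stationary mass above $q_d$ is confined to an $O(1)$ window, so the claimed ``conditional spread $\Theta(1/\epsilon_{V_k})$ above $q_d$'' is false in general. The subcase should instead be shown to be vacuous: if $q_d-q_1^k$ were $o(1/V_k)$, applying the drift bound at $k=q_d-q_1^k$ gives $Pr\brac{Q\ge q_d}$ bounded below by a positive constant, while feasibility gives $Pr\brac{Q>q_d}\le \epsilon_{V_k}^{-1}\sum_{q>q_d}(\overline{s}(q)-\lambda)\pi(q)\le 1/(\alpha c)$, so $\pi(q_d)$ itself is bounded below by a constant; assumption A1 and stationarity then force $Pr\brac{Q>q_d}\ge \epsilon_a\,\pi(q_d)$, which contradicts the previous bound once the constant $c$ is fixed large enough. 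With that replacement your proof closes.
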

The proof is given in Appendix \ref{appendix:proof:case3lb}.

\begin{remark}
  The derivation of the relationship between the difference of the $\Pg$ and $c(\lambda)$ and the tail-drift defined as $\sum_{q =  q_{d} + 1}^{\infty} \bras{\lambda - \sq} \pi(q)$, in the proof of Proposition \ref{lemma:case3lb}, is motivated by the approach in \cite{berry}.
  We note that in our proof, $q_{d}$ can be chosen arbitrarily by the choice of $\epsilon_{V}$ and then $\epsilon_{V}$ can be chosen so as to obtain the tightest asymptotic lower bound.
  However, in \cite{berry}, $q_{d}$ cannot be chosen arbitrarily. 
  In fact, $q_{d}$ is the queue length which has the maximal stationary probability of all queue lengths in the set $\brac{0,\dots,\ceiling{2\Qg}}$ for a policy.
  The freedom in the choice of $q_{d}$ enables us to derive the $\Omega\nfrac{1}{V}$ asymptotic lower bound.
\end{remark}

For Case 3, a sequence of admissible policies can be obtained as in \cite{neely_mac}, which achieve the above asymptotic lower bound.

We note that the asymptotic lower bounds, derived above for admissible policies, hold for \eqref{eq:tradeoffproblem_init} for $P_{c} \in \mathcal{O}^{u}_{d}$.
We note that the optimal solution of \eqref{eq:tradeoffproblem_init_dual} is a lower bound for \eqref{eq:tradeoffproblem_init}.
In \cite[Proposition 4.3.22]{vineeth_thesis}, we obtain asymptotic lower bounds for the optimal solution of \eqref{eq:tradeoffproblem_init_dual} using the asymptotic lower bounds in Propositions \ref{lemma:case2} and \ref{lemma:case3lb}.
Then these asymptotic lower bounds for \eqref{eq:tradeoffproblem_init_dual} are used to obtain asymptotic lower bounds on \eqref{eq:tradeoffproblem_init}, for a set of $P_{c}$ which includes $\mathcal{O}^{u}_{d}$.
We note that the minimum average delay as a function of the power constraint $P_{c}$, can be obtained from Little's law as $\frac{Q^*(P_{c})}{\lambda}$.

\subsection{Numerical examples}

\label{sec:numerical}
To illustrate the results obtained in the previous section, we plot the optimal tradeoff curve for the example in Section \ref{sec:simpleeg}.
We assume that $A_{max} = 5$, $\mathcal{H} = \brac{0.1, 1}$ and $\pi_{H}(0.1) = 0.6$.
Each point in the tradeoff curves is obtained by numerical solution of \eqref{eq:tradeoffproblem_init_dual} via policy iteration for a model with buffer size truncated appropriately.
Each tradeoff curve is obtained by varying $\beta$.
From the asymptotic characterization of $Q^*(P_{c})$, we have that for $\lambda = 0.80$, $Q^*(P_{c})$ increases as $1/(P_{c} - 1.1071)$, while for $\lambda = 0.78$ and $0.82$, $Q^*(P_{c})$ increases as $\log\nfrac{1}{P_{c} - c(\lambda)}$.
\begin{figure}[h]
\centering
  \includegraphics[width=\journaldtImageWidth,height=\journaldtImageHeight]{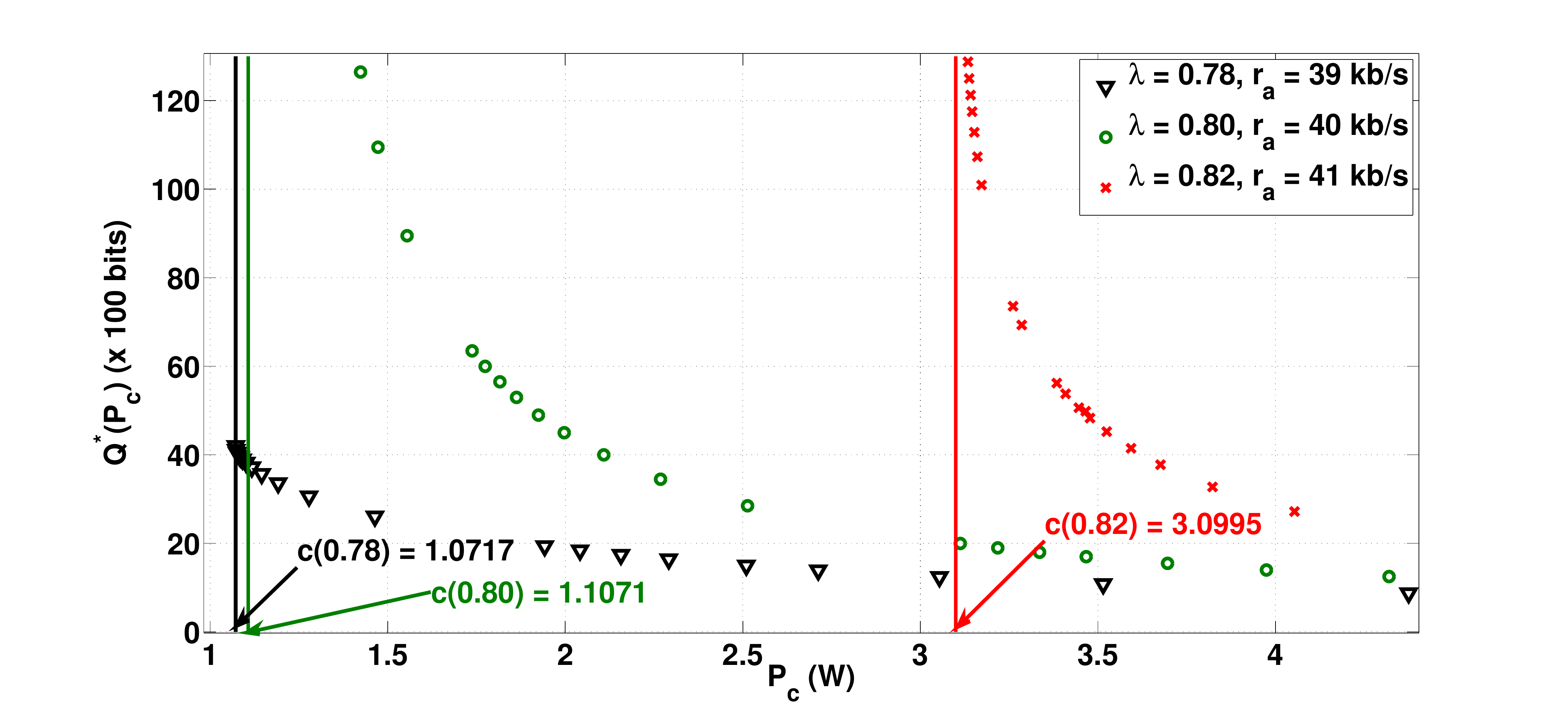}
  \caption{\small{The optimal tradeoff $Q^*(P_{c})$ for the system in Section \ref{sec:simpleeg} with $\mathcal{H} = \brac{0.1, 1}$ and $\pi_{H}(0.1) = 0.6$, for $\lambda \in \brac{0.78, 0.80, 0.82}$ with $c(0.78) = 1.0717$, $c(0.80) = 1.1071$, and $c(0.82) = 3.0995$.}}
  \label{fig:comparelambda_2}
\end{figure}

We consider the case $\lambda = 0.2$ in Figure \ref{fig:case1}, which corresponds to Case 1 for the example in Section \ref{sec:simpleeg}, for both $\mathcal{H} = \brac{0.5,1}$ and $\mathcal{H} = \brac{0.1, 1}$.
We note that $c(0.2)$ is $0.1992$ for both $\mathcal{H} = \brac{0.5,1}$ and $\mathcal{H} = \brac{0.1, 1}$.
We observe that $Q^*(P_{c})$ approaches a finite value in both cases.

\begin{figure}[h]
\centering
  \includegraphics[width=\journaldtImageWidth,height=\journaldtImageHeight]{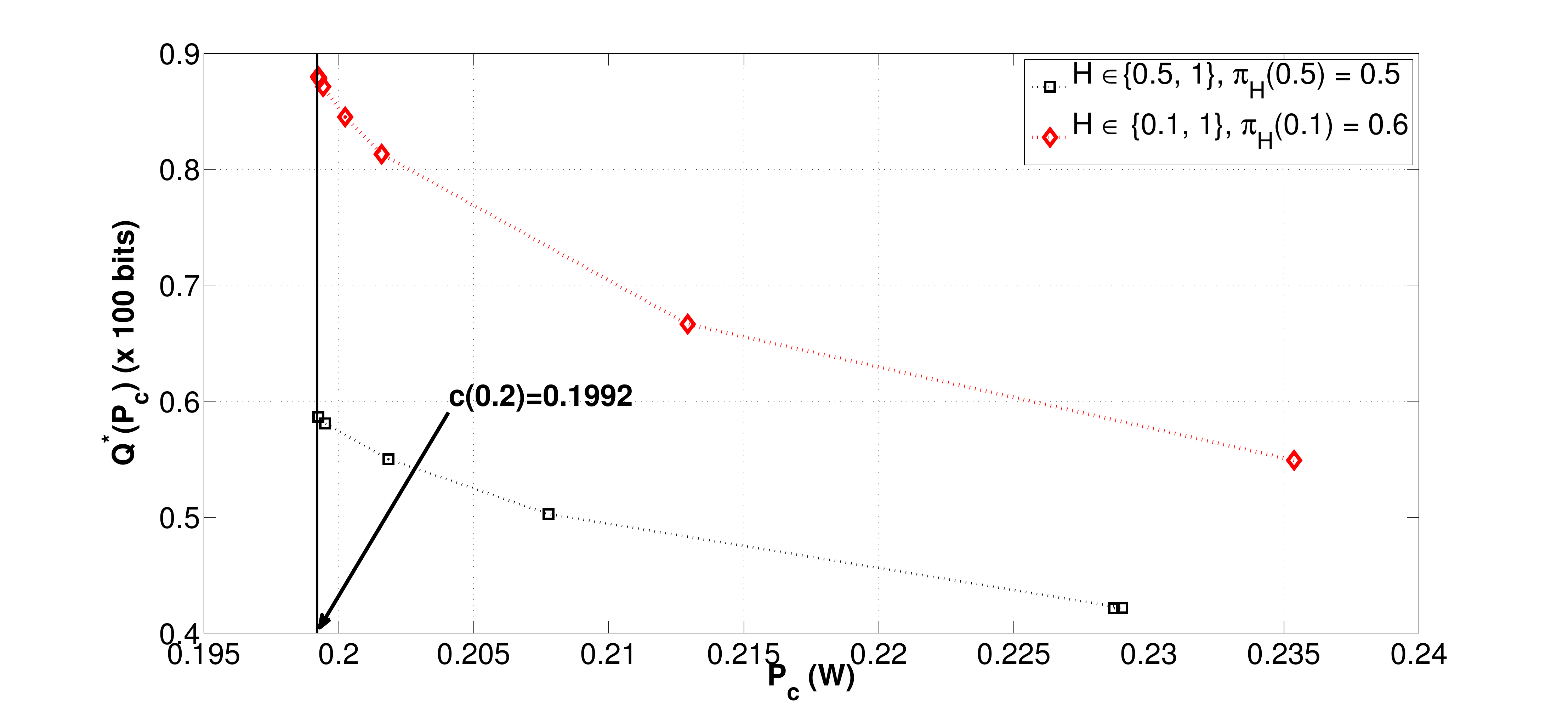}
  \caption{\small{The optimal tradeoff $Q^*(P_{c})$ for the system in Section \ref{sec:simpleeg}, for $\lambda  = 0.2$, for two cases of $\mathcal{H}$; $Q^*(P_{c})$ approaches a finite value in both cases.}}
  \label{fig:case1}
\end{figure}
\subsection{Asymptotic lower bounds for R-model}

We now consider the asymptotic behaviour of $Q^*(P_{c})$ in the asymptotic regime $\Re$ as $P_{c} \downarrow c_{R}(\lambda)$ for the R-model.
We note that the following result is similar to the Berry-Gallager lower bound, but is derived with the extra assumption G2.

\begin{proposition}
  For any sequence of admissible policies $\gamma_{k}$ with $\overline{P}(\gamma_{k}) - c_{R}(\lambda) = V_{k} \downarrow 0$, we have that $\overline{Q}(\gamma_{k}) = \Omega\nfrac{1}{\sqrt{V_{k}}}$. Therefore, $Q^*(P_{c}) = \Omega\nfrac{1}{\sqrt{P_{c} - c(\lambda)}}$ as $P_{c} \downarrow c_{R}(\lambda)$. 
\label{lemma:realvalued}
\end{proposition}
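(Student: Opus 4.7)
The plan is to adapt the Berry--Gallager lower-bound argument to admissible policies for the R-model. Let $q^{*} \Deq \inf\{q \geq 0 : \sq \geq \lambda\}$; this is finite by admissibility, since $\Expp \sQ = \lambda$ and $\sq$ is non-decreasing in $q$ by G2. The target $\Qgk = \Omega(1/\sqrt{V_{k}})$ will follow from (i) a quadratic lower bound $\mathbb{E}(Q - q^{*})^{2} \geq \Omega(1/V_{k})$, together with (ii) a first-moment extraction via tail control of the stationary distribution $\pi$.

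\textbf{Steps toward (i).} First, the $\gamma$-dependent lower bound of Section \ref{sec:gammadep_lowerbound} combined with the strict convexity of $c_{R}$ at $\lambda$ (i.e., $c_{R}(s) - c_{R}(\lambda) - c_{R}'(\lambda)(s - \lambda) \geq a(s-\lambda)^{2}$ for some $a > 0$ on $[0, S_{max}]$) and $\Expp \sQ = \lambda$ yield $\Expp(\sQ - \lambda)^{2} \leq V_{k}/a$; this is essentially Lemma \ref{lemma:prelimresults:serviceratebound}. Next, the Lyapunov identity with $f(q) = q^{2}$, applied in stationarity, gives
\begin{equation*}
\mathbb{E}\bras{Q(\sQ - \lambda)} = \tfrac{1}{2}\mathbb{E}\bras{(A[m+1] - S[m+1])^{2}} \geq \tfrac{\sigma^{2}}{2},
\end{equation*}
where the last inequality uses $\text{Var}(A) = \sigma^{2}$ and independence of $A[m+1]$ from $(Q[m], H[m+1])$. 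Property G2 together with $\Expp(\sQ - \lambda) = 0$ lets one re-center: $\mathbb{E}\bras{(Q - q^{*})(\sQ - \lambda)} = \mathbb{E}\bras{Q(\sQ - \lambda)} \geq \sigma^{2}/2$, and since $(Q - q^{*})(\sQ - \lambda) \geq 0$ on the support of $\pi$, Cauchy--Schwarz gives $\sigma^{2}/2 \leq \sqrt{\mathbb{E}(Q - q^{*})^{2}} \cdot \sqrt{V_{k}/a}$, so $\mathbb{E}(Q - q^{*})^{2} \geq \sigma^{4}a/(4V_{k}) = \Omega(1/V_{k})$.

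\textbf{Step (ii) and the main obstacle.} To convert (i) into a first-moment bound, I would control the tail-drift $D_{+} \Deq \int_{q^{*}}^{\infty}(\sq - \lambda)d\pi(q)$: by Cauchy--Schwarz on $\Expp(\sQ - \lambda)^{2} \leq V_{k}/a$, $D_{+} \leq \sqrt{V_{k}/a}$. Combining this with Proposition \ref{prelimresults:drift_real}, applied with $q_{d}$ just above $q^{*}$ and $d$ of order $\sqrt{V_{k}}$, yields an exponential upper tail for $\pi$, which in turn implies $\mathbb{E}Q^{2} \leq C\Qgk^{2} + D$ for constants $C, D$; together with (i) this forces $\Qgk^{2} \geq \Omega(1/V_{k})$, i.e., $\Qgk = \Omega(1/\sqrt{V_{k}})$. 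The hard part is precisely this conversion: the quadratic lower bound must be translated into a linear lower bound on $\Qgk$, ruling out configurations (e.g.\ most mass near $q = 0$ with rare excursions to a large $q^{*}$) that would satisfy (i) without making $\Qgk$ large. The reason the resulting bound is $\Omega(1/\sqrt{V_{k}})$ rather than the $\Omega(1/V_{k})$ obtained in Case 3 of the I-model (Proposition \ref{lemma:case3lb}) is that strict convexity of $c_{R}$ yields a tail-drift bound of order $\sqrt{V_{k}}$, whereas the piecewise-linear structure of $c$ in Case 3 forces the tail-drift to be $O(V_{k})$.
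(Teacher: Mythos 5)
Your step (i) is essentially sound, and it is in fact closer in spirit to Berry and Gallager's original argument than to the paper's proof: the stationarity identity $\Expp\bras{Q(\sQ - \lambda)} = \tfrac{1}{2}\Exp(A-S)^2 \geq \sigma^2/2$ (which needs a truncation argument, since admissibility only guarantees a finite \emph{first} moment of $Q$), combined with $\Expp(\sQ-\lambda)^2 \leq V_k/a$, the re-centering at $q^*$ permitted by G2, and Cauchy--Schwarz, does give $\Expp(Q - q^*)^2 = \Omega(1/V_k)$. Your diagnosis of why the exponent is $\tfrac{1}{2}$ here versus $1$ in Case 3 of the I-model (tail-drift of order $\sqrt{V}$ under strict convexity versus $O(V)$ under piecewise linearity) is also correct and matches the paper's remarks.

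The genuine gap is in step (ii), and it is not a technicality. First, Proposition \ref{prelimresults:drift_real} is a \emph{lower} bound on the tail probabilities $Pr\brac{Q \geq q_1 + k\Delta}$, valid only on the region $[0,q_d]$ where the drift exceeds $-d$; it cannot ``yield an exponential upper tail for $\pi$,'' and Proposition \ref{prelimresults:txprob_real} gives an upper bound that \emph{grows} geometrically in $k$, so neither available result controls the upper tail. Second, the inequality $\Expp Q^2 \leq C\,\Qg^2 + D$ is not true for general admissible policies and cannot follow from the second-moment bound plus soft tail estimates: a stationary distribution placing mass $1-V$ near $q^* = O(1)$ and mass $V$ near $q = 1/V$ has $\Expp(Q-q^*)^2 = \Omega(1/V)$ while $\Expp Q = O(1)$, so your bound (i) by itself is compatible with a bounded average queue length, and ruling out such configurations requires quantitative control of where the stationary mass sits, not merely of the tail beyond $q_d$. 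The paper avoids the second moment entirely: it sets $q_d = \sup\brac{q : \sq \leq \lambda + \epsilon_V}$ with $\epsilon_V = 4\sqrt{V/a_1}$, uses the Berry--Gallager steps to show the tail-drift correction satisfies $D_t = \frac{1}{\epsilon_V}\int_{q_d^+}^{\infty}(\sq-\lambda)d\pi(q) \leq \frac{1}{\epsilon_V}\sqrt{V/a_1} \leq \frac{1}{4}$, and then reads off from Proposition \ref{prelimresults:drift_real} that $Pr\brac{Q < k\Delta}$ can reach $\tfrac{1}{2}$ only once $\brap{1 + \epsilon_V/(\delta\epsilon_a)}^{k} \geq \tfrac{4}{3}$, i.e.\ $k = \Omega(1/\epsilon_V) = \Omega(1/\sqrt{V})$; the median, and hence $\Qg$, is then $\Omega(1/\sqrt{V})$ directly. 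To salvage your route you would have to supply the missing localization argument converting the second-moment bound into a first-moment bound (essentially what Berry and Gallager accomplish with their specific choice of centering point); as written, step (ii) fails.
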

The proof is given in Appendix \ref{appendix:proof:realvalued}.
We note that an upper bound which achieves the above asymptotic lower bound, upto a logarithmic factor is obtained from a sequence of TOCA policies in \cite{neely_mac}.
Since TOCA policies are not admissible, a sequence of admissible policies that achieve the same asymptotic upper bound as TOCA is presented in \cite{vineeth_thesis}. 

\textbf{R-model with piecewise linear ${P(h,s)}$}:
\label{sec:rmodel_imodel_comparison}
We note that R-model with a strictly convex $P(h, s)$ is usually used as an approximation to I-model.
Usually, the function $P(h, s)$ for R-model coincides with $P(h, s)$ for the I-model for $s \in \brac{0, \dots, S_{max}}$, $\forall h \in \mathcal{H}$.
But then we find that there are differences in the asymptotic behaviour of $Q^*(P_{c})$ for I-model and R-model.
We observe that $c_{R}(\lambda) = c(\lambda), \forall \lambda \in \Lambda$ and $c_{R}(\lambda) < c(\lambda)$ for $\lambda \not \in \Lambda$.
R-model suggests that $Q^*(P_{c})$ increases to infinity for all $\lambda \in (0, S_{max})$ as $P_{c} \downarrow c_{R}(\lambda)$.
However, for Case 1, we see that $Q^*(c(\lambda))$ is finite (we note that $c_{R}(\lambda) < c(\lambda)$ in this case).
For Case 2, as a function of $V$, we have that the minimum average queue length increases as $\Omega\nfrac{1}{\sqrt{V}}$ for R-model, whereas for I-model the average queue length increases only as $\Omega\brap{\log\nfrac{1}{V}}$.
For Case 3, with $c_{R}(\lambda) = c(\lambda)$ we have that $Q^*(P_{c}) = \Omega\nfrac{1}{\sqrt{P_{c} - c(\lambda)}}$ for the R-model, whereas for the I-model $Q^*(P_{c}) = \Omega\nfrac{1}{{P_{c} - c(\lambda)}}$.
So R-model with a strictly convex $P(h, s)$ overestimates the behaviour of $Q^*(P_{c})$ as a function of $V$ for Cases 1 and 2, and underestimates $Q^*(P_{c})$ for Case 3.
In the following, we briefly outline a method to show that a better approximation for I-model, is R-model with a piecewise linear $P(h, s)$.
This piecewise linear transmit power function $P(h, s), s \in [0, S_{max}]$ is chosen as the lower convex envelope of the transmit power function $P(h, s), s \in \brac{0, \dots, S_{max}}$ for the I-model.
With this choice of $P(h,s)$ for R-model, it can be shown that $c_{R}(\lambda) = c(\lambda), \forall \lambda \in [0, S_{max}]$.
Then the asymptotic order behaviour of $Q^*(P_{c})$ is dependent on $\lambda$, and three cases arise.
These three cases are the same as that for I-model, defined in Section \ref{sec:differentcases}.
We now show that the asymptotic behaviour of $Q^*(P_{c})$ for I-model and R-model is the same for Cases 2 and 3.
\begin{proposition}
  For R-model, with the piecewise linear $P(h, s)$ defined as above, for any sequence $\gamma_{k} \in \Gamma_{k}$, with $\Pgk - c_{R}(\lambda) = V_{k} \downarrow 0$, we have that $\Qgk = \Omega\brap{\log\nfrac{1}{V_{k}}}$ for Case 2 and $\Qgk = \Omega\nfrac{1}{V_{k}}$ for Case 3. Therefore, $Q^*(P_{c}) = \Omega\brap{\log\nfrac{1}{P_{c} - c_{R}(\lambda)}}$ for Case 2 and $Q^*(P_{c}) = \Omega\nfrac{1}{P_{c} - c_{R}(\lambda)}$ for Case 3.
  \label{prop:rmodel_piecewisecost}
\end{proposition}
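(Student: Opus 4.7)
The plan is to mirror exactly the proofs of Propositions \ref{lemma:case2} and \ref{lemma:case3lb} for the I-model, exploiting the fact that when $P(h,s)$ is chosen as the lower convex envelope of the I-model transmit power function, $c_R(\lambda) = c(\lambda)$ is piecewise linear and coincides with the I-model cost on $\{0,\dots,S_{max}\}$. Consequently, the set $\Lambda$, the interval endpoints $s_l, s_u$, and the supporting line $l(s)$ from Section \ref{sec:differentcases} are defined identically, and $\Expp l(\sQ) = c_R(\lambda)$ still holds.

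The first step is to establish the R-model analog of the \emph{I-model} form of Lemma \ref{lemma:prelimresults:serviceratebound}, namely $Pr\{\sQ \in \mathcal{S}\} \leq V/(m\epsilon_V)$ for $\mathcal{S} \subseteq [0, s_l - \epsilon_V) \cup (s_u + \epsilon_V, S_{max}]$, with $V \Deq \Pg - c_R(\lambda)$. The argument from Section \ref{sec:gammadep_lowerbound} still gives $\Expp c_R(\sQ) \leq \Pg$, and since $c_R$ is piecewise linear with a strictly positive slope jump at each breakpoint, the gap $c_R(s) - l(s)$ is bounded below by $m\,\mathrm{dist}(s,[s_l,s_u])$ for some $m > 0$ outside $[s_l,s_u]$. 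Markov's inequality then yields the linear-in-$\epsilon_V$ bound, rather than the quadratic one forced by strict convexity in the original R-model lemma. This is the crucial observation that lets the I-model rates of growth carry over.

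For Case 2, I would then invoke Proposition \ref{prelimresults:txprob_real} in its R-model form. Picking $q_1 = \sup\{q : \sq < s_l - \epsilon_V\}$ and $s_1 = s_l - \epsilon_V$, the geometric upper bound on $\pi$ over intervals of width $\Delta$ combined with $Pr\{Q < q_1\} = \mathcal{O}(V/\epsilon_V)$ from the lemma above produces $\pi$-mass of order $V/\epsilon_V$ for every queue length up to some level, growing geometrically with ratio bounded above by $\lambda/(s_l-\epsilon_V) + o(1)$ while $\sq$ stays below $\lambda$. Choosing $\epsilon_V$ to vanish slower than $V$ (say $\epsilon_V = V^{1/2}$) and repeating the median-queue-length argument of Proposition \ref{lemma:case2} gives $\Qgk \geq c\log(1/V_k)$.

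For Case 3, I would use Proposition \ref{prelimresults:drift_real} in place of the upper bound. The R-model form of the drift proposition together with the R-model version of the tail-drift calculation in the proof of Proposition \ref{lemma:case3lb} gives the required lower bound once the tail-drift $\int_{q_d^+}^{\infty}(\sq-\lambda)\,d\pi(q)$ is controlled. Here the piecewise linearity of $c_R$ replaces the role of convexity of $c$ in the I-model proof: $\Expp [c_R(\sQ) - l(\sQ)] \leq V$ forces the tail-drift to be $\mathcal{O}(V)$, and then the geometric lower bound on $Pr\{Q \geq q_1 + k\Delta\}$ yields $\Qgk = \Omega(1/V_k)$ by the same averaging argument as in Appendix \ref{appendix:proof:case3lb}.

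The main technical obstacle, and the only place where the argument is not a direct transcription, is the continuous-state version of the bookkeeping that links $Pr\{\sQ \in \mathcal{S}\}$ to $Pr\{Q < q_1\}$ and to the tail-drift integral when $\sq$ need not be monotone in $q$ strictly, only non-decreasing (property G2) with possible flat portions. Handling a flat portion of $\sq$ at exactly $s_l$ or $\lambda$ requires an infinitesimal perturbation of $q_1$ (taking $q_1 = \sup\{q : \sq < s_l - \epsilon_V\}$ rather than equality), and the integrals in Propositions \ref{prelimresults:txprob_real} and \ref{prelimresults:drift_real} must be handled with the $\int_{a^+}^{b^-}$ convention already set up in the notation. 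Once this is dealt with, both Cases 2 and 3 reduce to the I-model proofs verbatim, with $c_R$ substituted for $c$.
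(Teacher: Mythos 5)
Your proposal is correct and follows essentially the same route as the paper: the paper's outline likewise reuses Proposition \ref{prelimresults:txprob_real} with $Pr\brac{Q < q_{1}} \leq V/(m\epsilon)$ for Case 2 (discretizing $\sR$ into width-$\Delta$ intervals and running the median argument as in Proposition \ref{lemma:tradeoffutilitylb}), and for Case 3 exploits exactly your key observation that piecewise linearity bounds the tail-drift by $V/(m\epsilon_{V})$ so that $\epsilon_{V} = \Theta(V)$ can be chosen in the drift bound of Proposition \ref{prelimresults:drift_real}, yielding $\Omega\nfrac{1}{V}$. Your choice $\epsilon_{V} = \sqrt{V}$ in Case 2 is unnecessary (a fixed $\epsilon$ suffices, as the paper uses) but harmless since $\log(1/\sqrt{V}) = \Theta(\log(1/V))$.
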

We note that the proof of this result is quite similar to that of Propositions \ref{lemma:case2} and \ref{lemma:case3lb}.
Hence, only an outline of the proof is given in Appendix \ref{appendix:proof:rmodel_piecewisecost}.
For Case 1 with a single fade state $h_{0}$, it can be shown that $\overline{Q}(\gamma_{u})$ is finite for R-model, so that $Q^*(c_{R}(\lambda))$ is also finite.
However, we do not have asymptotic lower bounds for this case.

\section{Extension: Models with admission control}
\label{sec:systemmodel_modelus}
We consider the optimal tradeoff of average queue length and average power for a fading point-to-point link, when the packets arriving to the link can be dropped, subject to a constraint on the utility of the time average throughput of the packets which are transmitted.
\subsection{System model}
We indicate only the differences from the models in Sections \ref{sysmodel:intval} and \ref{sysmodel:realval}.
We also denote the expectation with respect to the distribution of $R[1]$ as $\Exp_{R}$.
We assume that just before the end of slot $m$, $A[m] \leq R[m]$ packets are admitted into the infinite length transmitter buffer, while $R[m] - A[m]$ packets are dropped.
For this model, a policy $\gamma$ for operation of the transmitter is the sequence of service and arrival batch sizes $(S[1], A[1], S[2], A[2], \dots)$.
The set of all policies is denoted as $\Gamma$.
If $\gamma$ is such that $S[m] = S(Q[m - 1], H[m])$ and $A[m] = A(Q[m - 1], R[m], H[m])$, where $S(q, h)$ and $A(q, r, h)$ are randomized functions, then $\gamma$ is a stationary policy.
The set of stationary policies is denoted by $\Gamma_{s}$.
Since $(R[m], H[m], m \geq 1)$ is assumed to be IID, we have that for a $\gamma \in \Gamma_{s}$, $(Q[m], m \geq 0)$ is a Markov chain.

If we assume that $A[m], R[m], q_{0}, S[m] \in \sZ$, $\forall m$, then the queue evolution $Q[m] \in \sZ$ and the model is denoted as I-model-U.
On the other hand, if $A[m], R[m], q_{0}, S[m] \in \sR$, $\forall m$, then the queue evolution $Q[m] \in \sR$, and the model is denoted as R-model-U.
Similar to R-model, R-model-U with a strictly convex $P(h, s)$ function is usually used as an analytically tractable approximation for I-model-U.
We note that R-model-U has been studied in \cite{neely_utility}.

We define the average throughput of a policy $\gamma \in \Gamma_{s}$ as
\begin{eqnarray}
  \overline{A}(\gamma, q_{0}) = \liminf_{M \rightarrow \infty} \frac{1}{M} \Exp \bras{\sum_{m = 1}^{M} A[m] \middle \vert Q[0] = q_{0} }.
\end{eqnarray}
Let $u(a) : [0, A_{max}] \rightarrow \sR$ be a strictly concave and increasing function of $a$, with $u(0) = 0$.
The utility of transmitting the packets is $u(\overline{A}(\gamma, q_{0}))$, for a policy $\gamma$.
The average power for a policy $\gamma \in \Gamma_{s}$ is $\overline{P}(\gamma, q_{0})$ and the average queue length is $\overline{Q}(\gamma, q_{0})$, as defined in \eqref{chap5fading:eq:avgpower} and \eqref{chap5fading:eq:avgqlength} respectively.

\subsection{Tradeoff problem}
\label{sec:problem}

We consider the optimal tradeoff between $\overline{Q}(\gamma, q_{0})$ and $\overline{P}(\gamma, q_{0})$ subject to the average utility $u(\overline{A}(\gamma, q_{0}))$ being at least a positive $u_{c} < u(\lambda)$, for the class of stationary policies $\Gamma_{s}$, for I-model-U and R-model-U.
The constraint $u(\overline{A}(\gamma, q_{0})) \geq u_{c}$ is equivalent to having the constraint $\overline{A}(\gamma, q_{0}) \geq u^{-1}(u_{c})$, where $u^{-1}$ is the inverse function of $u$.
We denote $u^{-1}(u_{c})$ by $\rho\lambda$, where $0 < \rho < 1$.
We note that since the arrival rate is not the same for all $\gamma \in \Gamma_{s}$, minimization of the average queue length does not directly correspond to minimizing the average delay of the packets.
Asymptotic bounds on the average delay can be derived using Little's law and are discussed in the following.

The tradeoff problem that we consider is
\begin{eqnarray*}
  \mini_{\gamma \in \Gamma} & & \overline{Q}(\gamma, q_{0}) \\
  \text{ such that } & & \overline{P}(\gamma, q_{0}) \leq P_{c} \text{ and } \overline{A}(\gamma, q_{0}) \geq \rho \lambda.
  \label{eq:tradeoffutility_init}
\end{eqnarray*}
As in Section \ref{sec:cmdp_formulation} we can show that if $P_{c} > c(\rho\lambda)$ for I-model-U or if $P_{c} > c_{R}(\rho\lambda)$ for R-model-U (which are the minimum average powers required for mean rate stability while supporting an arrival rate of $\rho\lambda$ rather than $\lambda$) then there exists an optimal stationary policy $\gamma^*$.
Therefore, we can restrict ourselves to the set of stationary policies.
\subsubsection{Admissible policies}
As for I-model and R-model, we consider the above tradeoff problem for a set of admissible policies $\Gamma_{a}$.
Since there is admission control, it is not reasonable to assume that the Markov chain under a $\gamma \in \Gamma_{s}$ is irreducible (e.g., the dynamic packet dropping (DPD) policies in \cite{neely_utility} drops all packets once the queue length reaches a threshold value). 
So we relax the irreducibility requirement for I-model-U and R-model-U as follows.
For an admissible policy $\gamma$, the Markov chain $(Q[m], m \geq 0)$ has a single positive recurrent class $\mathcal{R}_{\gamma}$ which contains $0$.
Furthermore, the cumulative expected queue cost as well as the cumulative expected power cost starting from any state $q_{0}$ until $\mathcal{R}_{\gamma}$ is hit are finite.

We note that for a $\gamma \in \Gamma_{a}$, $\overline{Q}(\gamma, q_{0}) = \Qg$, $\overline{P}(\gamma, q_{0}) = \Pg$, and $\overline{A}(\gamma, q_{0}) = \Ag$.
A policy $\gamma$ is defined to be admissible if: (i) the Markov process $(Q[m], m \geq 0)$ under $\gamma$ is aperiodic and positive Harris recurrent on a single recurrence class $\mathcal{R}_{\gamma}$ with stationary distribution\footnote{We note that $\pi(A) = 0$ for any $A \not \subset \mathcal{R}_{\gamma}$} $\pi$, (ii) $\overline{Q}(\gamma, q_{0}) < \infty$, and (iii) $\sq$ is non-decreasing in $q$, where $\sq = \ExpH \Exp_{S|q,H}S(q,H)$ is the average service rate at queue length $q$.
We note that for a $\gamma \in \Gamma_{a}$,
\begin{eqnarray*}
  \Qg & = & \Expp Q, \\
  \Pg & = & \Expp \ExpH \Expsqh P(H, S(Q, H)), \\
  \Ag & = & \Expp \ExpH \Exp_{R} \Exp A(Q, R, H).
\end{eqnarray*}
Let the average service rate be $\Sg$, then $\Sg = \Expp \ExpH \Expsqh S(Q, H)$.
\subsubsection{Tradeoff problem for admissible policies}
The problem TRADEOFF that we consider is 
\begin{eqnarray*}
  \mini_{\gamma \in \Gamma_{a}} \Qg \text{ such that } \Pg \leq P_{c} \text{ and } \Ag \geq \rho \lambda.
\end{eqnarray*}
The optimal value of TRADEOFF is denoted as $Q^*(P_{c}, \rho)$.
\subsubsection{The asymptotic regime $\Re$}
Suppose $\gamma$ is feasible for TRADEOFF.
Then
\begin{eqnarray*}
  \Sg = \Ag \geq \rho\lambda.
\end{eqnarray*}
Now we note that $\Pg$ is bounded below by the optimal value of 
\begin{eqnarray}
  \mini_{\gamma \in \Gamma_{a}} & & \Expp \ExpH \Expsqh P(H, S(Q, H)), \nonumber \\
  \text{such that } & & \Expp \ExpH \Expsqh S(Q, H) \geq \rho\lambda.
  \label{eq:mincost_problem1}
\end{eqnarray}
We note that $\Expp \ExpH \Expsqh S(Q,H) = \ExpH \Expp \Expsqh S(Q,H)$.
We have that $\ExpH \Expp \Expsqh S(Q,H) = \ExpH \Exp_{S|H} S$ and $\Expp \ExpH \Expsqh P(H, S(Q,H)) = \ExpH \Exp_{S|H} P(H, S)$, where the conditional distribution of $S$ given $H$ depends on the policy $\gamma$, as in Section \ref{sec:imodelrmodel_asymp_regime}.
Then the optimal value of \eqref{eq:mincost_problem1} is bounded below by the optimal value of
\begin{eqnarray}
  \mini & & \ExpH \Exp_{S|H} P(H, S), \nonumber \\
  \text{such that } & & \ExpH \Exp_{S|H} S \geq \rho\lambda,
  \label{eq:mincost_problem2}
\end{eqnarray}
where we minimize over all possible conditional distributions for $S$ given $h \in \mathcal{H}$, irrespective of the policy $\gamma$.
We note that for I-model-U, these distributions have support on $\brac{0, \dots, S_{max}}$, whereas for R-model-U they have support on $[0, S_{max}]$.

We note that \eqref{eq:mincost_problem2} has feasible solutions only if $\rho\lambda \leq S_{max}$.
The optimal value of the above problem is $c(\rho\lambda)$ for I-model-U and $c_{R}(\rho\lambda)$ for R-model-U, since the constraint is satisfied with equality\footnote{If the distribution which achieves the minimum in \eqref{eq:mincost_problem2} is such that $\ExpH \Exp_{S|H} S > \rho\lambda$, then it is possible to show that there exists another distribution which has a strictly smaller $\ExpH \Exp_{S|H} P(H, S)$.}.
So, we have that for $\gamma \in \Gamma_{a}$, $\Pg \geq c(\rho\lambda)$ for I-model-U and $\Pg \geq c_{R}(\rho\lambda)$ for R-model-U.
Thus, TRADEOFF has feasible solutions only if $P_{c} \geq c(\rho\lambda)$ for I-model-U and $P_{c} \geq c_{R}(\rho\lambda)$ for R-model-U.

We now show that $c(\rho\lambda)$ and $c_{R}(\rho\lambda)$ are both $\inf_{\brac{\gamma : \gamma \in \Gamma_{a}, \Ag \geq \rho\lambda}} \Pg$ for I-model-U and R-model-U respectively.
For I-model-U, we consider a sequence of policies $\gamma_{k}$, where for each $\gamma_{k}$, at each slot $m$, each customer in the batch $R[m]$ is admitted with probability $\rho$ and dropped with probability $1 - \rho$.
Then $\forall \gamma_{k}$ we have that $\Agk \geq \rho\lambda$.
For $\gamma_{k}$, the batch service size $(S[m])$ is chosen (\cite[Theorem 1]{neely_mac}) as for the I-model, with arrival rate $\rho\lambda$, so that $\Pgk = c(\rho\lambda) + V_{k}$ and $\Qgk = \mathcal{O}\nfrac{1}{V_{k}}$.
Thus, we have that there exists a sequence of admissible policies $\gamma_{k}$, such that $\Pgk = c(\rho\lambda) + V_{k}$, $\Agk \geq \rho\lambda$, and $\Qgk = \mathcal{O}\nfrac{1}{V_{k}}$, for a sequence $V_{k} \downarrow 0$.
Similarly, for R-model-U, there exists a sequence of policies $\gamma_{k}$, for which we choose $A[m] = \rho R[m]$, and $(S[m])$ is as for R-model (\cite[Theorem 1]{neely_mac}, with arrival rate $\rho\lambda$), so that $\Pgk = c_{R}(\rho\lambda) + V_{k}$ and $\Qgk = \mathcal{O}\nfrac{1}{V_{k}}$.
Hence, $c(\rho\lambda)$ and $c_{R}(\rho\lambda)$ are $\inf_{\brac{\gamma : \gamma \in \Gamma_{a}, \Ag \geq \rho\lambda}} \Pg$ for I-model-U and R-model-U respectively.

In the following, we obtain an asymptotic characterization of $Q^*(P_{c}, \rho)$ in the asymptotic regimes $\Re$ as $P_{c} \downarrow c(\rho\lambda)$ for I-model-U and $P_{c} \downarrow c_{R}(\rho\lambda)$ for R-model-U, under the assumption that $\rho\lambda < S_{max}$.
We recall that $c(s)$ is a non-decreasing, piecewise linear, and convex function of $s \in [0, S_{max}]$, whereas $c_{R}(s)$ is a non-decreasing strictly convex function of $s \in [0,S_{max}]$, with $c(0)$ and $c_{R}(0)$ both being $0$.

\subsection{Asymptotic lower bound}
\label{sec:lowerbound}
We first present the intuition behind the asymptotic behaviour of $Q^*(P_{c}, \rho)$ for R-model-U in the regime where $V = P_{c} - c_{R}(\rho\lambda) \downarrow 0$.
Then we present an asymptotic lower bound for R-model-U.
We then discuss the asymptotic lower bound for I-model-U, since it can be obtained using very similar techniques as for R-model-U and as in Proposition \ref{lemma:case2}.

Intuition about the asymptotic behaviour of $Q^*(P_{c}, \rho)$ for R-model-U is obtained using the simplified M/M/1 queueing model discussed in Section \ref{sec:asymp_analysis}.
However, since for R-model-U, we have admission control, the birth rate $\lambda(q)$ is also controllable for the M/M/1 queueing model that we consider here.
Let $\mathcal{Q}_{h} = \brac{q : \mu(q) \in [\rho\lambda - \epsilon_{V}, \rho\lambda + \epsilon_{V}]}$.
Similar to R-model, it can be shown that $Pr\brac{Q \in \mathcal{Q}_{h}} \uparrow 1$ as $V \downarrow 0$.
Then we have that as $V \downarrow 0$, $\mu(q)$ for $q \in \mathcal{Q}_{h}$ approaches $\rho\lambda$.
However, we note that since $\lambda(q)$ is also controllable, instead of $\pi(q)$ being constant as in the case of R-model, $\pi(q)$ can in fact be geometrically increasing, constant, and then geometrically decreasing as in Case 2 (I-model).
Therefore, we expect that $Q^*(P_{c}, \rho)$ only grows as $\Omega\brap{\log\nfrac{1}{V}}$.

Let $\gamma$ be an admissible policy with $\Pg - c_{R}(\rho\lambda) = V$ and $\Ag \geq \rho\lambda$.
As in Section \ref{sec:gammadep_lowerbound}, we have that $\Expp c_{R}(\sQ) \leq \Pg$.
Therefore, $\Expp c_{R}(\sQ) - c_{R}(\rho\lambda) \leq \Pg - c_{R}(\rho\lambda) \leq V$.
Since $c_{R}(.)$ is a strictly convex and non-decreasing function, we assume that the second derivative of $c_{R}(s)$ is positive at $s = \rho\lambda$.
Then, we have the following result.

\begin{proposition}
  For any sequence of admissible policies $\gamma_{k}$ such that $\Agk \geq \rho\lambda$ and $\Pgk - c_{R}(\rho\lambda) = V_{k} \downarrow 0$, we have that $\Qgk = \Omega\brap{\log\nfrac{1}{V_{k}}}$. Therefore, $Q^*(P_{c}, \rho) = \Omega\brap{\log\nfrac{1}{P_{c} - c_{R}(\rho\lambda)}}$.
  \label{lemma:tradeoffutilitylb}   
\end{proposition}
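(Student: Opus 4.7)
The plan is to adapt the proof of Proposition \ref{lemma:case2} (Case 2 for I-model) to the admission-control setting of R-model-U, with the piecewise-linear surrogate $l(s)$ replaced by a local quadratic minorant of $c_R$ at $s = \rho\lambda$. Fix any feasible admissible $\gamma_k$ with $V_k \Deq \Pgk - c_R(\rho\lambda)$ and $\Agk \geq \rho\lambda$. From Section \ref{sec:gammadep_lowerbound}, $\Expp c_R(\sQ) \leq \Pgk = c_R(\rho\lambda) + V_k$, while admissibility combined with the throughput constraint gives $\Expp \sQ = \Sgk = \Agk \geq \rho\lambda$. The first step is to turn the power slack $V_k$ into a concentration of $\sQ$ around $\rho\lambda$.

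Because $c_R$ is strictly convex with positive second derivative at $\rho\lambda$, there is a constant $a > 0$ and a neighborhood $\mathcal{N}$ of $\rho\lambda$ on which $c_R(s) - c_R(\rho\lambda) \geq c_R'(\rho\lambda)(s - \rho\lambda) + a(s - \rho\lambda)^2$, while outside $\mathcal{N}$ convexity yields a linear lower bound $c_R(s) - c_R(\rho\lambda) \geq c_R'(\rho\lambda)(s - \rho\lambda) + b$ for some $b > 0$. Integrating against $\pi$ and using $c_R'(\rho\lambda) \geq 0$ together with $\Expp \sQ - \rho\lambda \geq 0$, both the linear and outside-$\mathcal{N}$ contributions are non-negative, leaving $V_k \geq a\,\Expp[(\sQ - \rho\lambda)^2 \mathbb{I}\brac{\sQ \in \mathcal{N}}]$. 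Markov's inequality then yields $Pr\brac{|\sQ - \rho\lambda| \geq \epsilon_V} = \mathcal{O}(V_k/\epsilon_V^2)$, an R-model-U analog of the R-model estimate in Lemma \ref{lemma:prelimresults:serviceratebound}.

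Next, set $q_1 \Deq \sup\brac{q : \sq \leq \rho\lambda - \epsilon_V}$; by monotonicity of $\sq$ (property G2), $Pr\brac{Q \leq q_1} \leq Pr\brac{\sQ \leq \rho\lambda - \epsilon_V} = \mathcal{O}(V_k/\epsilon_V^2)$. Under the natural assumption $Pr\brac{R[1] = 0} > 0$, which gives $Pr\brac{A[1] = 0} \geq Pr\brac{R[1] = 0}$ irrespective of the admission rule (since $A[m] \leq R[m]$), the R-model version of Proposition \ref{prelimresults:txprob_real}, applied with $s_1$ chosen just below $\rho\lambda$, produces a geometric bound $Pr\brac{Q \in [q_1 + j\Delta, q_1 + (j+1)\Delta)} \leq C(1 + 1/\rho_d)^j\,Pr\brac{Q < q_1}$ for a constant $\rho_d > 0$ depending only on $\rho\lambda$, $S_{max}$, and $Pr\brac{R[1] = 0}$.

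Finally, let $\overline{q}$ be the largest $q$ with $Pr\brac{Q \leq q} \leq \tfrac{1}{2}$, so $\Qgk \geq \overline{q}/2$. Summing the geometric bound over $j = 0, \dots, K$ with $K = (\overline{q} - q_1)/\Delta$ gives $\tfrac{1}{2} - \mathcal{O}(V_k/\epsilon_V^2) \leq C'(V_k/\epsilon_V^2)(1 + 1/\rho_d)^{K+1}$, which rearranges to $K \geq c\log(\epsilon_V^2/V_k)$ for $V_k$ sufficiently small. Choosing $\epsilon_V = V_k^{1/3}$ so that $Pr\brac{Q \leq q_1} \downarrow 0$, one obtains $K = \Omega(\log(1/V_k))$, hence $\Qgk \geq \overline{q}/2 \geq (q_1 + \Delta K)/2 = \Omega(\log(1/V_k))$. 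The principal obstacle is the first step: since the throughput constraint yields only $\Expp \sQ \geq \rho\lambda$ rather than equality, the linear part of the convexity bound must be carefully signed, and the local quadratic minorant of $c_R$ extended globally on $[0, S_{max}]$ via the piecewise decomposition described above; once the concentration estimate is in hand, the remainder of the argument closely parallels Proposition \ref{lemma:case2}.
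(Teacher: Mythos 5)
Your proposal follows essentially the same route as the paper's proof: it uses the strict convexity of $c_{R}$ at $\rho\lambda$ together with $\Expp\sQ = \Agk \geq \rho\lambda$ and $c_{R}'(\rho\lambda) \geq 0$ to bound $\Expp G(\sQ-\rho\lambda) \leq V_{k}$ and hence $Pr\brac{Q < q_{1}} = \mathcal{O}(V_{k}/\epsilon^{2})$, then invokes the geometric bound of Proposition \ref{prelimresults:txprob_real} with $Pr\brac{A[1]=0} \geq Pr\brac{R[1]=0}$ and locates the median of $\pi$ to conclude $\Omega\brap{\log\nfrac{1}{V_{k}}}$. The only cosmetic difference is your choice $\epsilon_{V} = V_{k}^{1/3}$, where the paper simply fixes a constant $\epsilon_{1}$ (which suffices since $V_{k}/(a_{1}\epsilon_{1}^{2}) \downarrow 0$); both yield the same order.
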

The proof is given in Appendix \ref{appendix:proof:tradeoffutilitylb}.
We now discuss the asymptotic lower bound for I-model-U in the regime $\Re$.
The analysis for I-model-U proceeds in a similar fashion as in Proposition \ref{lemma:case2}; the piecewise linear function $c(s)$ and the quantities $a_{p}, p \geq 1$ are similarly defined.
The three cases which then arise are : (1) $0 < \rho \lambda < a_{2}$, (2), $a_{p} < \rho \lambda < a_{p + 1}, p > 1$, and (3) $\rho\lambda = a_{p}, p > 1$.
For Cases 2 and 3, proceeding similarly as in the proof of Proposition \ref{lemma:case2}, it is possible to show that, for any sequence of admissible policies $\gamma_{k}$ such that $\Agk \geq \rho\lambda$ and $\Pgk - c(\rho\lambda) = V_{k} \downarrow 0$, we have that $\Qgk = \Omega\brap{\log\nfrac{1}{V_{k}}}$.
We note that we do not have any asymptotic results for Case 1, although numerically it can be shown that $Q^*(P_{c}) < \infty$ even if $P_{c} = c(\rho\lambda)$.
For example, for the examples in Figure \ref{fig:case1}, we use a policy that drops packets with probability $1 - \rho$ and then uses the optimal service batch size for the cases in Figure \ref{fig:case1}.


\subsection{Discussion}
\label{sec:discussion}

\paragraph{Comparison with known results}
The model considered by Neely \cite{neely_utility} is the same as R-model-U.
It is shown in \cite{neely_utility} that there exists a sequence of policies $\gamma_{k} \in \Gamma_{s}$ with a corresponding sequence $V_{k} \downarrow 0$, such that $\overline{A}(\gamma_{k}, q_{0}) \geq \rho \lambda$, $\overline{Q}(\gamma_{k}, q_{0}) = \mathcal{O}\brap{\log\nfrac{1}{V_{k}}}$, and $\overline{P}(\gamma_{k}, q_{0})$ is at most $V_{k}$ more than the minimum average power required for queue stability.
It is also shown in \cite{neely_utility} for $|\mathcal{H}| = 1$, that if $\gamma_{k}$ is any sequence of policies, with $\overline{P}(\gamma_{k}, q_{0})$ at most $V_{k}$ more than the minimum average power required for queue stability and $\overline{A}(\gamma_{k}, q_{0}) \geq \rho \lambda$, then $\overline{Q}(\gamma_{k}, q_{0}) = \Omega\brap{\log\nfrac{1}{V_{k}}}$ as $V_{k} \downarrow 0$.
In Proposition \ref{lemma:tradeoffutilitylb}, we have derived an asymptotic lower bound for $|\mathcal{H}| > 1$, but for admissible policies.
\paragraph{Minimization of average delay}
When average delay is the performance measure under consideration, then the problem that we are interested in is
\begin{eqnarray*}
  \mini_{\gamma \in \Gamma_{a}} \frac{\Qg}{\Ag} \text{ such that } \Pg \leq P_{c} \text{ and } \Ag \geq \rho \lambda,
\end{eqnarray*}
since the average delay for $\gamma \in \Gamma_{a}$ is $\frac{\Qg}{\Ag}$ from Little's law.
Let the optimal value of the above problem be $D^*(P_{c}, \rho)$.

We note that, since $c(s)$ or $c_{R}(s)$ is a convex and non-decreasing function in $s \in [0, S_{max}]$, for any feasible admissible policy $\gamma$, we have that
\begin{eqnarray*}
  \Expp c(\sQ) \leq \Pg \leq P_{c}, \\
  c(\Expp \sQ) \leq P_{c}, \text{ or}, \\
  \Sg = \Expp \sQ \leq c^{-1}(P_{c}),
\end{eqnarray*}
where $c^{-1}$ is the inverse function of $c$ for I-model-U.
Consider any sequence $P_{c,k} \downarrow c(\rho\lambda)$ as $k \uparrow \infty$.
Since $\Ag = \Sg$, the objective function in the above optimization problem can be bounded above by $\frac{\Qg}{\rho\lambda}$ and bounded below by $\frac{\Qg}{c^{-1}(P_{c,1})}$.
A similar bound can be obtained for R-model-U.
Then, it follows that the asymptotic behaviour of $D^*(P_{c,k}, \rho)$ is the same as that of $Q^*(P_{c,k},\rho)$ as $P_{c,k} \downarrow c(\rho\lambda)$ for I-model-U and $P_{c,k} \downarrow c_{R}(\rho\lambda)$ for R-model-U.
\paragraph{Relation to the asymptotic order optimal tradeoff in \cite{neely_superfast}}
Neely \cite{neely_superfast} considers a system, with both admission control and service rate control, in which the arrival rate $\lambda$ is larger than the maximum service rate $S_{max}$.
The objective is to obtain a sequence of policies $\gamma_{k}$ which achieve an order optimal minimum average queue length $\Qgk$ as the average utility $u(\Agk)$ approaches the maximum utility value $u(S_{max})$.
We note that there is no cost associated with the service of packets in \cite{neely_superfast}.
It is shown that for any sequence of policies $\gamma_{k}$ such that $u(S_{max}) - u(\Sgk) = V_{k} \downarrow 0$, $\Qgk = \Omega\brap{\log\nfrac{1}{V_{k}}}$.
A sequence of policies $\gamma_{k}$ such that $\Qgk = \mathcal{O}\brap{\log\nfrac{1}{V_{k}}}$ and $U(\Sgk) = u(S_{max}) - V_{k}$ is also obtained.
We note that as the utility function is assumed to be strictly concave and increasing, the throughput value that maximizes the utility is $S_{max}$ itself.
For any sequence $\gamma_{k}$, if $\Ugk \uparrow u(S_{max})$ it can be shown that the probability of using a service rate less than $S_{max}$ decreases to zero.
That is, with $q_{1} \stackrel{\Delta} = \sup\brac{q : \sq \leq S_{max} - \epsilon}$, $Pr\brac{Q < q_{1}} \downarrow 0$.
Therefore, the proof of Proposition \ref{lemma:tradeoffutilitylb} can be applied to obtain an alternate proof for the asymptotic logarithmic lower bound on the average queue length obtained in \cite{neely_superfast}, but for admissible policies.

\section{Summary and conclusions}
\label{sec:conclusions}
In this paper, we obtained asymptotic lower bounds for some cases, for which lower bounds were not previously available.
Our results are summarized in Table \ref{table:ourlowerbounds}.
For case 3 in Table \ref{table:asymptotic_bounds}, we obtained an $\Omega\brap{\log\nfrac{1}{V}}$ asymptotic lower bound.
We obtained an $\Omega\nfrac{1}{V}$ lower bound for case 4 in Table \ref{table:asymptotic_bounds} which was previously not available.
We obtained an $\Omega\brap{\log\nfrac{1}{V}}$ asymptotic lower bound for case 5 in Table \ref{table:asymptotic_bounds}, which holds for $|\mathcal{H}| > 1$.
We note that these bounds were derived for admissible policies.
In \cite[Lemmas 4.3.23, 5.6.1]{vineeth_thesis}, we have also obtained asymptotic lower bounds for $Q^*(P_{c})$ for I-model, when the arrival process and fading process are ergodic rather than IID.
\begin{table*}
  \centering
  \begin{tabular}{|l|l|l|}
    \hline
    \textbf{Model details} &
    \begin{minipage}{0.7\textwidth}
      \textbf{Results (in the regime $\Re$, for admissible policies)}
    \end{minipage} \\
    \hline
    \begin{minipage}{0.1\textwidth}
      \vspace{0.2em}
        I-model
      \vspace{0.2em}
    \end{minipage} & 
    \begin{minipage}{0.7\textwidth}
      Depending on the arrival rate $\lambda$, minimum average queue length either increases to only a finite value, or is $\Omega\brap{\log\nfrac{1}{V}}$ or is $\Omega\nfrac{1}{{V}}$
    \end{minipage} \\
    \hline
    \begin{minipage}{0.1\textwidth}
      \vspace{0.2em}
      R-model
      \vspace{0.2em}
    \end{minipage} & 
    \begin{minipage}{0.7\textwidth}
      Minimum average queue length is $\Omega\nfrac{1}{\sqrt{V}}$ (previously known \cite{berry} but re-derived here using our method).
    \end{minipage} \\
    \hline
    \begin{minipage}{0.1\textwidth}
      \vspace{0.2em}
      I-model-U
      \vspace{0.2em}
    \end{minipage} & 
    \begin{minipage}{0.7\textwidth}
      Depending on the arrival rate $\lambda$, minimum average queue length either increases to only a finite value, or  is $\Omega\brap{\log\nfrac{1}{{V}}}$
    \end{minipage} \\
    \hline
    \begin{minipage}{0.1\textwidth}
      \vspace{0.2em}
      \mbox{R-model-U}
      \vspace{0.2em}
    \end{minipage} & 
    \begin{minipage}{0.7\textwidth}
      Minimum average queue length is $\Omega\brap{\log\nfrac{1}{{V}}}$
    \end{minipage} \\
    \hline
  \end{tabular}
  \caption{The asymptotic results derived in this paper for discrete time queueing models with fading}
  \label{table:ourlowerbounds}
\end{table*}

We have also presented an intuitive explanation for the behaviour of $Q^*(P_{c})$ in the asymptotic regime $\Re$, using the behaviour of the stationary probability distribution of the queue length for admissible policies.
The intuition for the behaviour of $Q^*(P_{c})$ is obtained by the analysis of a simpler state dependent M/M/1 model in \cite[Chapters 2 and 3]{vineeth_thesis}.
The intuitive approach that we have followed for obtaining lower bounds to $Q^*(P_{c})$, is used in \cite{vineeth_thesis} to obtain asymptotic bounds for the minimum average queue length for other communication scenarios, such as: (i) for a multiple access model (similar to that in \cite{neely_mac}), where we observe that average queue lengths for different transmitters may grow at different rates, (ii) for a model with just admission control, and (iii) for a model with general holding costs.
In \cite[Chapter 6]{vineeth_thesis}, we apply the analysis in this paper to obtain an asymptotic characterization of the tradeoff between average delay and average error probability for a noisy point-to-point link, with a single fade state $h_{0}$.
We assume that the link uses random block codes of length $N_{c}$ to encode and transmit bits which arrive into the transmitter queue.
We consider a particular cost function $P(h_{0}, s)$, which is an approximation of the expected number of bits which are decoded in error when $s$ bits are encoded and transmitted using a random block code with block length $N_{c}$.
The approximation is obtained using Gallager's random coding upper bound \cite{gallager_text}.
The function $P(h_{0}, s)$ is observed to be a non-convex function in $s$, $s \in \brac{0, \dots, S_{max}}$.
However, the minimum average error probability incurred with the queue being mean rate stable, $c(\lambda)$, is a piecewise linear convex function of $\lambda$, as for the I-model.
Then the asymptotic lower bounds on the minimum average delay, derived for I-model, apply as the average error probability approaches $c(\lambda)$.
The approach using bounds on the stationary probability distribution, also has the added advantage of providing asymptotic bounds on the structure of any stationary deterministic policy, as shown in \cite[Propositions 2.3.16 and 2.3.17, Lemma 3.2.18, Lemma 4.3.21]{vineeth_thesis}.
These asymptotic bounds are independent of the exact nature of the service cost function.

We recall that R-model with the strictly convex $P(h,s)$ function, is usually used as an approximation for I-model.
We observe that the behaviour of $Q^*(P_{c})$ in the regime $\Re$ is different for the approximate and original models.
We find that R-model underestimates the behaviour of $Q^*(P_{c})$ for cases 2 and 3.
We conclude that a more appropriate real valued approximate queueing model, would be one for which the service cost function $P(h,s)$ is the piecewise linear lower convex envelope of the service cost function for the original integer valued queueing model, rather than the strictly convex function assumed for R-model.

\bibliographystyle{plain}
\bibliography{journaldt}

\begin{appendices}

  \section{Proofs for Section \ref{sec:upperbounds}}
  \begin{lemma}
  Let $X$ be a random variable, with $\Exp X \geq \overline{x}$ and $X \in [0, x_{max}]$.
  Consider any $\Delta$ such that $0 < \Delta < \overline{x}$.
  Then 
  \[Pr\brac{X \geq \Delta} \geq \frac{\overline{x} - \Delta}{x_{max} - \Delta}.\]
  \label{lemma:serviceprob_bound_general}
\end{lemma}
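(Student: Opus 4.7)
The plan is to prove this via a truncation-style argument, splitting the expectation according to whether $X$ falls below or above the threshold $\Delta$, and then upper-bounding each piece using the natural range bounds on $X$. This is essentially a Markov-type inequality adapted to a bounded random variable, and the asserted ratio has a clean interpretation: it is exactly the fraction required of the maximum possible contribution $(x_{max} - \Delta)$ to account for the excess $(\overline{x} - \Delta)$ of the mean above $\Delta$.

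Concretely, I would set $p \Deq Pr\brac{X \geq \Delta}$ and write
\begin{equation*}
  \Exp X = \Exp\bras{X \mathbb{I}\brac{X < \Delta}} + \Exp\bras{X \mathbb{I}\brac{X \geq \Delta}}.
\end{equation*}
On the event $\brac{X < \Delta}$, the integrand is at most $\Delta$, so the first term is at most $\Delta (1 - p)$. On the event $\brac{X \geq \Delta}$, since $X \in [0, x_{max}]$, the integrand is at most $x_{max}$, so the second term is at most $x_{max} p$. Combining and using the hypothesis $\Exp X \geq \overline{x}$ gives
\begin{equation*}
  \overline{x} \leq \Delta (1 - p) + x_{max} p = \Delta + (x_{max} - \Delta) p.
\end{equation*}
Rearranging and using $\Delta < \overline{x} < x_{max}$ (so $x_{max} - \Delta > 0$) yields $p \geq \frac{\overline{x} - \Delta}{x_{max} - \Delta}$, as required.

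There is essentially no technical obstacle here; the only thing to check carefully is that $x_{max} - \Delta > 0$ so the division is legitimate, which follows from $\Delta < \overline{x} \leq x_{max}$ (the inequality $\overline{x} \leq x_{max}$ is forced by $X \leq x_{max}$ almost surely and $\Exp X \geq \overline{x}$). The lemma is a deterministic consequence of boundedness plus a lower bound on the mean, and its intended use elsewhere in the paper (to lower-bound the probability that a service-rate-type random variable exceeds a threshold) explains why the conclusion is stated in this particular form rather than the more familiar Markov inequality $Pr\brac{X \geq \Delta} \geq (\Exp X - \Delta)/x_{max}$, which would be strictly weaker.
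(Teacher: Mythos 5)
Your proof is correct and follows essentially the same route as the paper's: the paper splits $\int_{0}^{x_{max}} x\, dP(x)$ into the pieces over $[0,\Delta)$ and $[\Delta, x_{max}]$, bounding the integrand by $\Delta$ and $x_{max}$ respectively, which is exactly your indicator-function decomposition. Your added remark that $x_{max} - \Delta > 0$ (so the division is legitimate) is a small point the paper leaves implicit, but otherwise the two arguments coincide.
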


\begin{proof}
  Let $P(x)$ be the CDF of $X$.
  Then we have that
  \begin{eqnarray*}
    \overline{x} & \leq & \int_{0}^{x_{max}} x dP(x), \\
    & \leq & \int_{0}^{\Delta^{-}} \Delta dP(x) + \int_{\Delta}^{x_{max}} x_{max} dP(x), \text{ or},\\
    & = & \Delta\brap{1 - Pr\brac{X \geq \Delta}} + x_{max} Pr\brac{X \geq \Delta}.
  \end{eqnarray*}
  Therefore, we have that
  \[Pr\brac{X \geq \Delta} \geq \frac{\overline{x} - \Delta}{x_{max} - \Delta}.\]
\end{proof}

We note that in addition, if $X$ is a discrete random variable taking values in $\sZ$, then proceeding as above, we have that 
\begin{eqnarray}
  Pr\brac{X \geq 1} \geq \frac{\overline{x}}{x_{max}}.
  \label{eq:serviceprob_bound_discrete}
\end{eqnarray}

\subsection{Proof of Proposition \ref{prelimresults:txprob_real}}
\label{app:prelimresults:txprob_real}
  We derive the upper bound for R-model first.
  As $\gamma$ is admissible we have that
  \[ Pr\brac{Q < q_{1}} = \int_{0}^{\infty} P(q, [0,q_{1})) d\pi(q), \]
  where $P(q,\mathcal{Q})$ is the transition kernel of the Markov chain.
  Hence, we have that
  \small
  \begin{eqnarray*}
    Pr\brac{Q < q_{1}} & \geq & \int_{q_{1}}^{\brap{q_{1} + \Delta}^{-}} P(q, [0, q_{1})) d\pi(q),
  \end{eqnarray*}
  \begin{eqnarray*}
    & \geq & Pr\brac{A[1] = 0} Pr\brac{S(q) \geq \Delta} Pr\brac{q_{1} \leq Q < q_{1} + \Delta}, \\
    & \geq & Pr\brac{A[1] = 0} \delta_{s} Pr\brac{q_{1} \leq Q < q_{1} + \Delta},
  \end{eqnarray*}
  \normalsize
  where $S(q) = S(q, H)$ and $\delta_{s} = \frac{s_{1} - \Delta}{S_{max} - \Delta}$ from Lemma \ref{lemma:serviceprob_bound_general}.
  Let $\rho_{d} = Pr\brac{A[1] = 0} \delta_{s}$.
  Also, for any $q' > q$, let us denote $Pr\brac{q \leq Q < q'}$ by $\pi[q, q')$.
  
  Then we have obtained that
  \[ \pi[0, q_{1}) \geq \rho_{d} \pi\left[q_{1}, q_{1} + \Delta\right). \]
  Similarly, we have that
  \[ \pi\left[0, q_{1} + \Delta\right) \geq \rho_{d} \pi\left[q_{1} + \Delta, q_{1} + 2\Delta\right), \]
  which can be written as
  \begin{eqnarray*}
    \pi[0, q_{1}) + \pi\left[q_{1}, q_{1} + \Delta\right) & \geq & \rho_{d} \pi\left[q_{1} + \Delta, q_{1} + 2\Delta\right), \\
    \pi[0, q_{1})\bras{1 + \frac{1}{\rho_{d}}} & \geq & \rho_{d} \pi\left[q_{1} + \Delta, q_{1} + 2\Delta\right).
  \end{eqnarray*}
  By induction, for $m \geq 0$, we have that
  \begin{eqnarray*}
    \frac{\pi[0, q_{1})}{\rho_{d}} \brap{1 + \frac{1}{\rho_{d}}}^{m} & \geq & \pi\left[q_{1} + m \Delta, q_{1} + (m + 1) \Delta\right).
  \end{eqnarray*}
  To obtain the upper bound for I-model, we proceed as above, with $\Delta = 1$.
  However, we use the bound \eqref{eq:serviceprob_bound_discrete} on $Pr\brac{S(q) \geq 1}$.
  Then we have that $\delta_{s} = \frac{s_{1}}{S_{max}}$.
  We then obtain that for any $k \geq 0$ and $q = q_{1} + k$,
  \begin{equation*}
    \pi(q) \leq Pr\brac{Q < q_{1}} \frac{\left(1 + \frac{1}{\rho_{d}}\right)^{k}}{\rho_{d}},
  \end{equation*}
  where $\rho_{d} \stackrel{\Delta} = \nfrac{s_{1}}{S_{max}}Pr\brac{A[1] = 0}$.\qeda

\subsection{Proof of Proposition \ref{prelimresults:drift_real}}
\label{app:prelimresults:drift_real}
  We derive the lower bound for R-model first.
  We define $\widehat{Q}[m] = \max(q_{1} + \Delta, Q[m])$ and $\widehat{Q} = \max(q_{1} + \Delta, Q)$, where $\Delta > 0$ will be chosen in the following.
  We note that since the policy is admissible and $q_{1}$ is finite, $\Expp \widehat{Q} < \infty$.
  Therefore
  \begin{equation*}
    \int_{0}^{\infty} \Exp \bras{\widehat{Q}[m + 1] - \widehat{Q}[m] \middle \vert Q[m] = q} d\pi(q) = 0.
  \end{equation*}
  As in \cite{gamarnik_1}, we split the above integral into three parts which leads to 
  \begin{eqnarray}
    & 0 = \int_{0}^{q_{1}^{-}} \mathbb{E}\bras{\widehat{Q}[m + 1] - \widehat{Q}[m] \middle | Q[m] = q} d\pi(q) 
    \label{chap5:eq:real0} \\
    & + \int_{q_{1}}^{\brap{q_{1} + \Delta}^{-}} \mathbb{E}\bras{\widehat{Q}[m + 1] - \widehat{Q}[m] \middle | Q[m] = q} d\pi(q) 
    \label{chap5:eq:real1} \\
    & + \int_{q_{1} + \Delta}^{\infty} \mathbb{E}\bras{\widehat{Q}[m + 1] - \widehat{Q}[m] \middle | Q[m] = q } d\pi(q).
    \label{chap5:eq:real2}
  \end{eqnarray}
  We note that for $q \in [0, q_{1})$ we have that $\widehat{Q}[m] = q_{1} + \Delta$ and $\widehat{Q}[m + 1] \geq q_{1} + \Delta$, so that \eqref{chap5:eq:real0} $\geq 0$.
  Suppose $q_{1} + \Delta \leq q_{d}$. 
  Since $\widehat{Q}[m] \geq Q[m]$ we obtain that \eqref{chap5:eq:real2}
  \begin{eqnarray*}
    & \geq & - d Pr\brac{q_{1} + \Delta \leq Q \leq q_{d}} + \\
    & & \int_{q_{d}^{+}}^{\infty} \mathbb{E}\bras{{Q}[m + 1] - {Q}[m] | Q[m] = q} d\pi(q)
  \end{eqnarray*}
  To obtain a lower bound on \eqref{chap5:eq:real1} we note that for $q < q_{1} + \Delta$, $\widehat{Q}[m] = q_{1} + \Delta$ and $\widehat{Q}[m + 1] \geq q_{1} + \Delta$.
  So $\widehat{Q}[m + 1] - \widehat{Q}[m] \geq 0$.
  Then as in \cite[steps (34), (35), and (36)]{berry} we use Markov inequality to lower bound $\mathbb{E}\bras{\widehat{Q}[m + 1] - \widehat{Q}[m] \middle | Q[m] = q} , q \in [q_{1}, q_{1} + \Delta)$.
  \begin{eqnarray*}
    & & \mathbb{E}\bras{\widehat{Q}[{m + 1}] - \widehat{Q}[{m}] \middle | Q[{m}] = q} \geq \\
    & & \delta Pr\brac{\widehat{Q}[{m + 1}] - \widehat{Q}[{m}] \geq \delta \middle |Q[{m}] = q} \geq \\
    & & \delta Pr\brac{{Q}[{m + 1}] - {Q}[{m}] \geq \delta + \Delta \middle | Q[m] = q} = \\
    & & \delta Pr\brac{A[m + 1] - S[m + 1] \geq \delta + \Delta \middle \vert Q[m] = q} \geq \\
    & & \delta Pr\brac{A[m + 1] - S_{max} \geq \delta + \Delta \middle \vert Q[m] = q} \geq \delta \epsilon_{a}.
  \end{eqnarray*}
  We note that $\Delta$ and $\delta$ have to be chosen so that $\Delta + \delta < \delta_{a}$.
  Thus we obtain that \eqref{chap5:eq:real1} $\geq \delta \epsilon_{a} Pr\brac{q_{1} \leq Q < q_{1} + \Delta}$.
  Combining these bounds and using $\mathbb{E}\bras{{Q}[m + 1] - {Q}[m] | Q[m] = q} = \lambda - \sq$, we obtain that 
  \begin{eqnarray*}
    0 \geq \delta \epsilon_{a} Pr\brac{q_{1} \leq Q < q_{1} + \Delta} - d Pr\{q_{1} + \Delta \leq Q \leq q_{d}) + \\
    \int_{q_{d}^{+}}^{\infty} \mathbb{E}\bras{{Q}[m + 1] - {Q}[m] | Q[m] = q} d\pi(q), \\
    Pr\brac{Q \geq q_{1} + \Delta} \geq \depn Pr\brac{Q \geq q_{1}} + \\
    \frac{1}{\delta\epsilon_{a} + d}\bras{d Pr\brac{Q > q_{d}} - \int_{q_{d}^{+}}^{\infty} (\sq - \lambda)d\pi(q)}
  \end{eqnarray*}

  By induction, we obtain that if $k \geq 1$, and $q_{1} + k\Delta \leq q_{d}$, then $Pr\brac{Q \geq q_{1} + k\Delta}$
  \begin{eqnarray*}
    & \geq \brap{\depn}^{k} Pr\brac{Q \geq q_{1}} + \\
    & \frac{1 - \brap{\depn}^{k}}{d} \bras{d Pr\brac{Q > q_{d}} - \int_{q_{d}^{+}}^{\infty} (\sq - \lambda)d\pi(q)}, \\
    & = \brap{\depn}^{k} Pr\brac{Q \geq q_{1}} + \\
    & \brap{1 - \brap{\depn}^{k}} \bras{Pr\brac{Q > q_{d}} - \frac{1}{d}\int_{q_{d}^{+}}^{\infty} (\sq - \lambda)d\pi(q)}
  \end{eqnarray*}
  We note that the bound for $k = 0$ holds trivially.
  For I-model, we proceed similarly, except that we choose $\Delta = 1$, and \eqref{chap5:eq:real1} is bounded below by $\epsilon_{a}\pi(q_{1})$.
\qeda

\subsection{Proof of Lemma \ref{lemma:prelimresults:serviceratebound}}
\label{app:prelimresults:serviceratebound}

We consider the I-model first.
We note that for $\mathcal{S}$ as in the Lemma, $Pr\brac{\sQ \in \mathcal{S}} \leq Pr\brac{\sQ \in \overline{\mathcal{S}}}$, where $\overline{\mathcal{S}} = [0, s_{l} - \epsilon_{V}) \bigcup (s_{u} + \epsilon_{V}, S_{max}]$ for positive $\epsilon_{V}$.
From the definition of $l(s)$ we have that $\mathbb{E}_{\pi} \Exp \bras{c(S(Q, H)) - l(S(Q, H))} \leq V$.
From the convexity of $c(s)$ and the linearity of $l(s)$ we have
\begin{eqnarray*}
  \sum_{q = 0}^{\infty} \pi(q) \bras{ c(\sq) - l(\sq) } \leq V.
\end{eqnarray*}
Let $Q_{s_{l}} = \brac{q : \sq \in [0, s_{l} - \epsilon_{V})}$ and $Q_{s_{u}} = \brac{q : \sq \in (s_{u} + \epsilon_{V}, S_{max}]}$.
Then we have that\footnote{If $s_{l} = 0$ or $s_{u} = S_{max}$, then the corresponding sum in the following is zero.}
\begin{eqnarray*}
  & & \sum_{q \in Q_{s_{l}}} \pi(q) \bras{ c(\sq) - l(\sq) } + \\ 
  & & \sum_{q \in Q_{s_{u}}} \pi(q) \bras{ c(\sq) - l(\sq) } \leq V.
\end{eqnarray*}
Let $a_{p_{l}} = s_{l}$ and $a_{p_{u}} = s_{u}$.
We note that for $q \in Q_{s_{l}}$, $c(\sq) - l(\sq) \geq m_{l}\brap{s_{l}  - \sq}$, where $m_{l}$ is the tangent of the angle made by the line through $\brap{a_{p_{l} - 1}, c(a_{p_{l} - 1})}$ and $\brap{s_{l}, c(s_{l})}$ with $l(s)$.
Similarly, for $q \in Q_{s_{u}}$, $c(\sq) - l(\sq) \geq m_{u} \brap{\sq - s_{u}}$, where $m_{u}$ is the tangent of the angle made by the line through $\brap{a_{p_{u} + 1}, c(a_{p_{u} + 1})}$ and $\brap{s_{u}, c(s_{u})}$ with $l(s)$.
Then, we have that
\begin{eqnarray*}
  m_{l} \sum_{q \in Q_{s_{l}}} \pi(q) \bras{ s_{l} - \sq } + m_{u} \sum_{q \in Q_{s_{u}}} \pi(q) \bras{ \sq - s_{u} } \leq V.
\end{eqnarray*}
Let $m = \min(m_{l}, m_{u})$.
Then by definition of the sets $Q_{s_{l}}$ and $Q_{s_{u}}$, we have that
\begin{eqnarray*}
  m \epsilon_{V} \bras{\sum_{q \in Q_{s_{l}}} \pi(q) + \sum_{q \in Q_{s_{u}}} \pi(q)} \leq V, \text{ or}, \\
  Pr\brac{Q \in \mathcal{Q}_{\overline{\mathcal{S}}}} \leq \frac{V}{m \epsilon_{V}}.
\end{eqnarray*}
Therefore, $\Pr\brac{Q \in \mathcal{Q}_{\mathcal{S}}} \leq \frac{V}{m \epsilon_{V}}$.
We proceed similarly for R-model.
We note that $\Exp_{\pi} \bras{c_{R}(\sQ) - l(\sQ)} \leq V$.
Then, as in \cite[step (41)]{berry}, we have that $c_{R}(\sq) - l(\sq) = G(\sq - \lambda)$, where $G(x)$ is a strictly convex function such that $G(0) = 0, G'(0) = 0$ and $G''(0) > 0$.
Therefore, there exists a positive $a$ such that $ax^{2} \leq G(x)$.
Then, we have that 
\begin{eqnarray*}
  a\int_{0}^{\infty} \brap{\sq - \lambda}^{2} d\pi(q) \leq V, \text{ or}, \\
  \int_{q \in \mathcal{Q}_{\overline{\mathcal{S}}}} \brap{\sq - \lambda}^{2} d\pi(q) \leq \frac{V}{a}.
\end{eqnarray*}
Since for $q \in \mathcal{Q}_{\overline{\mathcal{S}}}$, $|\sq - \lambda| > \epsilon_{V}$, we have that $\Pr\brac{Q \in \mathcal{Q}_{\overline{\mathcal{S}}}} \leq \frac{V}{a\epsilon_{V}^{2}}$. Therefore, $Pr\brac{Q \in Q_{\mathcal{S}}} \leq \frac{V}{a\epsilon_{V}^{2}}$. \qeda

  \section{Proofs for Sections \ref{sec:asymp_analysis} and \ref{sec:systemmodel_modelus}}
  \subsection{Proof of Proposition \ref{lemma:case2}}
\label{appendix:proof:case2lb}

We consider a particular policy $\gamma$ in the sequence $\gamma_{k}$ with $V_{k} = V$.
Let $q_{1} \Deq \inf\brac{q : \sq \geq s_{l} - \epsilon_{1}}$ for a positive $\epsilon_{1} < s_{l}$.
From Proposition \ref{prelimresults:txprob_real}, we have that for $q = q_{1} + k, k \geq 0$
\begin{equation*}
  \pi(q) \leq Pr\brac{Q < q_{1}} \frac{\left(1 + \frac{1}{\rho_{d}}\right)^{k}}{\rho_{d}} = Pr\brac{Q < q_{1}} \frac{\rho^{k}}{\rho_{d}},
\end{equation*}
where $\rho_{d} = \nfrac{s_{l} - \epsilon_{1}}{S_{max}}Pr\brac{A[1] = 0}$ and $\rho = 1 + \frac{1}{\rho_{d}}$.

We now obtain a lower bound $\frac{\bar{q}}{2}$ on the average queue length for a policy $\gamma$, where $\bar{q}$ is such that $Pr\brac{Q \leq \bar{q}} \leq \frac{1}{2}$.
In fact, let $\bar{q} = \sup\{q : \sum_{q' = 0}^{q} \pi(q') \leq \frac{1}{2}\}$.
From Lemma \ref{lemma:prelimresults:serviceratebound}, we have that $Pr\brac{Q < q_{1}} \leq \frac{V}{m\epsilon_{1}}$.
Using the above bound on $Pr\brac{Q < q_{1}}$, we have that for sufficiently small $V$, $Pr\brac{Q < q_{1}}\brap{1 + \frac{\rho}{\rho_{d}}} < \frac{1}{2}$.
Let $\bar{q}_{1}$ be the largest integer such that
\begin{equation}
  Pr\brac{Q < q_{1}} + Pr\brac{Q < q_{1}}\sum_{q = q_{1}}^{\bar{q}_{1}} \frac{\rho^{q - q_{1}}}{\rho_{d}} \leq \frac{1}{2}.
  \label{chap5:eq:barpiq1}
\end{equation}
Then $\bar{q}_{1} \leq \bar{q}$.
We note that \eqref{chap5:eq:barpiq1} is equivalent to finding the largest $\bar{q}_{1}$ such that
\begin{eqnarray*}
  Pr\brac{Q < q_{1}}\left[ 1 + \frac{1}{\rho_{d}} \frac{\rho^{\bar{q}_{1} - q_{1} + 1} - 1}{\rho - 1} \right] \leq \frac{1}{2}, \\
  \text{or } \bar{q}_{1} \leq \log_{\rho} \left[ 1 + \rho_{d}\brap{\rho - 1} \left( \frac{1}{2Pr\brac{Q < q_{1}}} - 1 \right)\right].
\end{eqnarray*}
Hence we obtain that the $\bar{q}_{1}$ is at least
\begin{equation*}
  \log_{\rho} \left[ \frac{1}{2Pr\brac{Q < q_{1}}}\right] - 1.
\end{equation*}
Since $\overline{Q}(\gamma) \geq \frac{\bar{q}}{2} \geq \frac{\bar{q}_{1}}{2}$, we have that
\begin{equation*}
  \overline{Q}(\gamma) \geq \frac{1}{2} \left[ \log_{\rho} \left[ \frac{1}{2Pr\brac{Q < q_{1}}}\right] - 1 \right].
\end{equation*}
Since $Pr\brac{Q < q_{1}} \leq \frac{V}{m\epsilon_{1}}$,  we have that $\overline{Q}(\gamma_{k}) = \Omega\left(\log\left(\frac{1}{V_{k}}\right)\right)$, for the sequence of policies $\gamma_{k}$ with $V_{k} \downarrow 0$.
Let $\gamma'_{k}$ be a sequence of $\epsilon$-optimal policies for TRADEOFF for the sequence $P_{c,k}$.
Then we have that $\overline{P}(\gamma'_{k}) \downarrow c(\lambda)$ and $\overline{Q}(\gamma'_{k}) = \Omega\brap{\log\nfrac{1}{{P_{c,k} - c(\lambda)}}}$.
Since $\gamma'_{k}$ is $\epsilon$-optimal, we have that $Q^*(P_{c,k}) \geq \overline{Q}(\gamma'_{k}) - \epsilon$.
Therefore, $Q^*(P_{c,k}) = \Omega\brap{\log\nfrac{1}{{P_{c,k} - c(\lambda)}}}$ as $P_{c,k} \downarrow c_{R}(\lambda)$.\qeda

\subsection{Proof of Proposition \ref{lemma:case3lb}}
\label{appendix:proof:case3lb}

We consider a particular policy $\gamma$ in the sequence $\gamma_{k}$ with $V_{k} = V$.
Let $q_{d} \Deq \sup\brac{q : \sq \leq \lambda + \epsilon_{V}}$, where $\epsilon_{V} > 0$ is chosen as a function of $V$ in the following.
We note that as $\overline{s}(0) = 0$, the above set is non-empty.
Suppose $q_{d}$ is finite.

We note that by the admissibility of $\gamma$, $\forall q \in \brac{0,\dots, q_{d}}$, $\sq \leq \lambda + \epsilon_{V}$.
Hence, using $d = \epsilon_{V}$, we have from Proposition \ref{prelimresults:drift_real}, for a $\bar{q} \leq q_{d}$:
\begin{eqnarray*}
  & Pr\brac{Q \geq \bar{q}} \geq \brap{\frac{\epsilon_{a}}{\epsilon_{a} + \epsilon_{V}}}^{\bar{q}} \\
  & + \brap{1 - \brap{\frac{\epsilon_{a}}{\epsilon_{a} + \epsilon_{V}}}^{\bar{q}}}\bigg [Pr\brac{Q \geq q_{d} + 1} + \\
    & \frac{1}{\epsilon_{V}} \sum_{q = q_{d} + 1}^{\infty} \pi(q) \bras{\lambda - \sq}\bigg].
\end{eqnarray*}
Or 
\begin{eqnarray}
  & Pr\brac{Q < \bar{q}} \leq 1 - \brap{\frac{\epsilon_{a}}{\epsilon_{a} + \epsilon_{V}}}^{\bar{q}} - \brap{1 - \brap{\frac{\epsilon_{a}}{\epsilon_{a} + \epsilon_{V}}}^{\bar{q}}} \times \nonumber \\
  & \brap{\frac{1}{\epsilon_{V}}\sum_{q =  q_{d} + 1}^{\infty} \bras{\lambda - \sq} \pi(q)},
  \label{eq:case2lb1}
\end{eqnarray}
as $Pr\brac{Q \geq q_{d} + 1} \geq 0$.
For brevity, let $D_{t} \Deq -\brap{\frac{1}{\epsilon_{V}}\sum_{q =  q_{d} + 1}^{\infty} \bras{\lambda - \sq} \pi(q)}$.
We note that $D_{t}$ is positive, as for $q \geq q_{d} + 1$, $\lambda - \sq < -\epsilon_{V}$.

We recall that $c(s)$ is piecewise linear.
Let $\lambda = a_{p}$ for some $p > 1$.
Let $m$ be the tangent of the angle between (i) the line passing through $(a_{p + 1}, c(a_{p + 1}))$ and  $(\lambda, c(\lambda))$, and (ii) $l(s)$.
Then $m \sum_{q = q_{d} + 1}^{\infty} \pi(q) \brap{\sq - \lambda} \leq \sum_{q = q_{d} + 1}^{\infty} \pi(q) \bras{c(\sq) - l(\sq)}$.
Furthermore from the convexity of $c(s)$, linearity of $l(s)$, and as $c(s) - l(s) \geq 0$, we have that 
\begin{eqnarray*}
  & \sum_{q = q_{d} + 1}^{\infty} \pi(q) \bras{c(\sq) - l(\sq)} \leq \Expp \bras{c(\sQ) - l(\sQ)} \\
  & \leq \Exp \bras{c(S(Q)) - l(S(Q))} \leq V.
\end{eqnarray*}
Therefore \[ D_{t}\ \leq\ \frac{V}{m\epsilon_{V}}. \]

Now, we find a lower bound $\frac{\bar{q}}{2}$ on $\overline{Q}(\gamma)$ by finding the largest $\bar{q}$ such that $Pr\brac{Q < \bar{q}} \leq \frac{1}{2}$. 
A lower bound $\bar{q}_{1}$ to $\bar{q}$ can be obtained by using the upper bound \eqref{eq:case2lb1} on $Pr\brac{Q \leq \bar{q}}$.
Let $\bar{q}_{1}$ be the largest integer, if one exists, such that
\begin{eqnarray*}
  & 1 - \brap{\frac{\epsilon_{a}}{\epsilon_{a} + \epsilon_{V}}}^{\bar{q}_{1}} - \brap{1 - \brap{\frac{\epsilon_{a}}{\epsilon_{a} + \epsilon_{V}}}^{\bar{q}_{1}}}\times\\
  & \brap{\frac{1}{\epsilon_{V}}\sum_{q =  q_{d} + 1}^{\infty} \bras{\lambda - \sq} \pi(q)} \leq \frac{1}{2}.
\end{eqnarray*}
Then $\bar{q}_{1} \leq \bar{q}$.
Then we have to find $\bar{q}_{1}$ such that 
\begin{eqnarray*}
  1 - \brap{\frac{\epsilon_{a}}{\epsilon_{a} + \epsilon_{V}}}^{\bar{q}_{1}} + \brap{1 - \brap{\frac{\epsilon_{a}}{\epsilon_{a} + \epsilon_{V}}}^{\bar{q}_{1}}}D_{t} \leq \frac{1}{2}, \\
  \text{or } \frac{1 + 2D_{t}}{2 + 2D_{t}} \leq \nfrac{\epsilon_{a}}{\epsilon_{a} + \epsilon_{V}}^{\bar{q}_{1}},\\
  \text{or } \brap{1 + \frac{\epsilon_{V}}{\epsilon_{a}}}^{\bar{q}_{1}} \leq \frac{2 + 2D_{t}}{1 + 2D_{t}}
\end{eqnarray*}
We note that if $q_{d} = \infty$, then $D_{t} = 0$.
However, $\bar{q}_{1}$ satisfying the above inequality for finite $q_{d}$ is a lower bound for $\bar{q}_{1}$ for $q_{d} = \infty$.
Hence, we proceed with finding the above $\bar{q}_{1}$.
Let $\bar{q}_{2}$ be the largest integer such that 
\begin{eqnarray}
  \brap{1 + \frac{\epsilon_{V}}{\epsilon_{a}}}^{\bar{q}_{2}} \leq \frac{2}{1 + 2D_{t}}.
  \label{prop:case31}
\end{eqnarray}
Then $\bar{q}_{2} \leq \bar{q}_{1}$.
From \eqref{prop:case31} and the upper bound $\frac{V}{m\epsilon_{V}}$ on $D_{t}$, if $\bar{q}_{3}$ is the largest integer such that 
\begin{eqnarray*}
  & & \brap{1 + \frac{\epsilon_{V}}{\epsilon_{a}}}^{\bar{q}_{3}} \leq \frac{2}{1 + 2\frac{V}{m\epsilon_{V}}}, \\
\end{eqnarray*}
then $\bar{q}_{3} \leq \bar{q}_{2}$.
Or, we have that $\bar{q}_{3}$ is the largest integer such that
\begin{eqnarray*}
  & & \bar{q}_{3} \leq \log_{\brap{1 + \frac{\epsilon_{V}}{\epsilon_{a}}}} \brap{ \frac{2}{1 + \frac{2V}{m\epsilon_{V}}}}.
\end{eqnarray*}
We note that, as $V \downarrow 0$, if $\frac{V}{\epsilon_{V}} \rightarrow \infty$, then the bound will be negative.
We choose $\epsilon_{V} = aV$, where $a > \frac{2}{m}$.
Then we obtain that 
\begin{eqnarray*}
  & & \bar{q}_{3} \leq \log_{\brap{1 + \frac{a{V}}{\epsilon_{a}}}} \brap{ \frac{2}{1 + \frac{2}{ma}}},
\end{eqnarray*}
where the RHS is positive as $V \downarrow 0$.
Therefore the maximum $\bar{q}_{3}$ is atleast
\begin{eqnarray*}
  & & \floor{\log_{\brap{1 + \frac{a{V}}{\epsilon_{a}}}} \brap{ \frac{2}{1 + \frac{2}{ma}}}}.
\end{eqnarray*}
Since $\overline{Q}(\gamma) \geq \frac{\bar{q}}{2} \geq \frac{\bar{q}_{1}}{2}  \geq \frac{\bar{q}_{2}}{2}  \geq \frac{\bar{q}_{3}}{2}$ and $\log \brap{1 + \frac{a{V}}{\epsilon_{a}}} = \Theta\brap{V}$, we have that for the sequence of policies $\gamma_{k}$, $\overline{Q}(\gamma_{k}) = \Omega\nfrac{1}{V_{k}}$.
Let $\gamma'_{k}$ be a sequence of $\epsilon$-optimal policies for TRADEOFF for the sequence $P_{c,k}$.
Then we have that $\overline{P}(\gamma'_{k}) \downarrow c(\lambda)$ and $\overline{Q}(\gamma'_{k}) = \Omega\nfrac{1}{{P_{c,k} - c(\lambda)}}$.
Since $\gamma'_{k}$ is $\epsilon$-optimal, we have that $Q^*(P_{c,k}) \geq \overline{Q}(\gamma'_{k}) - \epsilon$.
Therefore, $Q^*(P_{c,k}) = \Omega\nfrac{1}{{P_{c,k} - c(\lambda)}}$ as $P_{c,k} \downarrow c_{R}(\lambda)$.\qeda

\subsection{Proof of Proposition \ref{lemma:realvalued}}
\label{appendix:proof:realvalued}
We consider a particular policy $\gamma$ in the sequence $\gamma_{k}$ with $V_{k} = V$.
Let $q_{d} = \sup\brac{ q : \sq \leq \lambda + \epsilon_{V}}$, where $\epsilon_{V}$ is chosen as a function of $V$ in the following.
Suppose $q_{d}$ is finite.
From the admissibility of $\gamma$, we have that $\forall q \in [0, q_{d}], \sq \leq \lambda + \epsilon_{V}$.
Using $d = \epsilon_{V}$ in Proposition \ref{prelimresults:drift_real}, we have for a $\bar{q} = k\Delta \leq q_{d}$, $k \geq 0$,
\begin{eqnarray*}
  Pr\brac{Q \geq \bar{q}} \geq \brap{\dep}^{k} + \brap{1 - \brap{\dep}^{k}} \times \\ 
  \bras{Pr\brac{Q > q_{d}} - \frac{1}{\epsilon_{V}}\int_{q_{d}^{+}}^{\infty} (\sq - \lambda)d\pi(q)}.
\end{eqnarray*}
Or we have that
\begin{eqnarray}
  & Pr\brac{Q < \bar{q}} \leq \brap{1 - \brap{\dep}^{k}} \times \nonumber \\
  & \bras{1 + \frac{1}{\epsilon_{V}} \int_{q_{d}^{+}}^{\infty} (\sq - \lambda)d\pi(q)},
  \label{eq:realvalevol}
\end{eqnarray}
as $Pr\brac{Q > q_{d}} \geq 0$.
We note that for $q \in [q_{d},\infty)$, $\sq - \lambda \geq \epsilon_{V}$.
  For brevity, we denote $\frac{1}{\epsilon_{V}}\int_{q_{d}^{+}}^{\infty} (\sq - \lambda)d\pi(q)$ by $D_{t}$. 
  We note that $D_{t}$ is positive.
  Now we note that for the policy $\gamma$, $\Exp c_{R}(S(Q, H)) - c_{R}(\lambda) \leq V$.
  Define $l(s)$ as the tangent to the curve $c_{R}(s)$ at $(\lambda, c_{R}(\lambda))$.
  Then we have that $\Exp [c_{R}(S(Q, H)) - l(S(Q, H))] \leq V$.
  Now as $c_{R}(s)$ is convex and $l(s)$ is linear, using Jensen's inequality we have that $\mathbb{E}_{\pi}[c_{R}(\sQ) - l(\sQ)] \leq V$.
  As in \cite[step (41)]{berry}, $c_{R}(s) - l(s) = G(s - \lambda)$ where $G(x)$ is a strictly convex function with $G(0) = 0$, $G'(0) = 0$, and $G''(0) > 0$.
  Thus we have that $\mathbb{E}_{\pi} G(\sQ - \lambda) \leq V$.
  Using the sequence of steps (45), (46), (47), and (48) of Berry and Gallager \cite{berry}, we obtain that
  \begin{eqnarray*}
    \bras{\int_{q_{d}^{+}}^{\infty} (\sq - \lambda) d\pi(q)}^{2} \leq \frac{V}{a_{1}},
  \end{eqnarray*}
  where $a_{1} > 0$ is such that $G(x) \geq a_{1} x^{2}$, for $x \in [-\lambda, S_{max} - \lambda]$.
  We note that then $D_{t} \leq \frac{1}{\epsilon_{V}} \sqrt{\frac{V}{a_{1}}}$.
  Choosing $\epsilon_{V} = 4\sqrt{\frac{V}{a_{1}}}$ we obtain that $D_{t} \leq \frac{1}{4}$.

  We note that $\overline{Q}(\gamma) \geq \frac{\bar{q}}{2}$, where $\bar{q} = \sup\brac{q : Pr\brac{Q < q} \leq \frac{1}{2}}$.
  Using the upper bound \eqref{eq:realvalevol}, if $\bar{q}_{1} = k_{1}\Delta$, where $k_{1}$ is the largest integer such that
  \begin{eqnarray*}
    \brap{1 - \brap{\dep}^{k_{1}}}\bras{1 + D_{t}} \leq \frac{1}{2},
  \end{eqnarray*}
  then $\bar{q}_{1} \leq \bar{q}$.
  We have that 
  \begin{eqnarray*}
    \frac{1 + 2D_{t}}{2 + 2D_{t}} \leq \nfrac{\delta\epsilon_{a}}{\delta\epsilon_{a} + \epsilon_{V}}^{k_{1}},\\
    \brap{1 + \frac{\epsilon_{V}}{\delta\epsilon_{a}}}^{k_{1}} \leq \frac{2 + 2D_{t}}{1 + 2D_{t}}
  \end{eqnarray*}
  Let $k_{2}$ be the largest integer such that 
  \begin{eqnarray}
    \brap{1 + \frac{\epsilon_{V}}{\delta\epsilon_{a}}}^{k_{2}} \leq \frac{2}{1 + 2D_{t}}.
    \label{eq:real5}
  \end{eqnarray}
  Then $k_{2} \leq k_{1}$.
  We note that even if $q_{d}$ is infinite, $ k_{2} \Delta$ is a lower bound to $\bar{q}$, since $D_{t}$ is $0$ in that case.
  The rest of the proof holds irrespective of whether $q_{d}$ is finite or infinite.

  Then, from \eqref{eq:real5} and using the upper bound $\frac{1}{4}$ on $D_{t}$, if $k_{3}$ is the largest integer such that 
  \begin{eqnarray*}
    \brap{1 + \frac{\epsilon_{V}}{\delta\epsilon_{a}}}^{k_{3}} \leq \frac{2}{1 + \frac{1}{2}},
  \end{eqnarray*}
  then $k_{3} \leq k_{2}$.
  We obtain that $k_{3}$ is atleast 
  \begin{eqnarray*}
    \log_{\brap{1 + \frac{\epsilon_{V}}{\delta\epsilon_{a}}}} \brap{\frac{4}{3}} - 1.
  \end{eqnarray*}
  Since $\overline{Q}(\gamma) \geq \frac{\bar{q}}{2} \geq \frac{\Delta k_{1}}{2} \geq \frac{\Delta k_{2}}{2} \geq \frac{\Delta k_{3}}{2}$, we have that $\overline{Q}(\gamma) \geq \frac{\Delta}{2} \brap{\log_{\brap{1 + \frac{\epsilon_{V}}{\delta\epsilon_{a}}}} \brap{\frac{4}{3}} - 1}$.
  Since $\log\brap{1 + \frac{\epsilon_{V}}{\delta\epsilon_{a}}} = \Theta\brap{\sqrt{V}}$, we have that for the sequence $\gamma_{k}$ as $V_{k} \downarrow 0$, $\overline{Q}(\gamma_{k}) = \Omega\nfrac{1}{\sqrt{V_{k}}}$.

  Let $\gamma'_{k}$ be a sequence of $\epsilon$-optimal policies for TRADEOFF for the sequence $P_{c,k}$.
  Then we have that $\overline{P}(\gamma'_{k}) \downarrow c_{R}(\lambda)$ and $\overline{Q}(\gamma'_{k}) = \Omega\nfrac{1}{\sqrt{P_{c,k} - c_{R}(\lambda)}}$.
  Since $\gamma'_{k}$ is $\epsilon$-optimal, we have that $Q^*(P_{c,k}) \geq \overline{Q}(\gamma'_{k}) - \epsilon$.
  Therefore, $Q^*(P_{c,k}) = \Omega\nfrac{1}{\sqrt{P_{c,k} - c(\lambda)}}$ as $P_{c,k} \downarrow c_{R}(\lambda)$.\qeda

  \subsection{Outline of proof for Proposition \ref{prop:rmodel_piecewisecost}}
  \label{appendix:proof:rmodel_piecewisecost}
  We consider Case 2 first.
  We define $s_{l}, s_{u}$, and the line $l(s)$ as in Section \ref{sec:differentcases}.
  Consider any sequence of admissible policies $\gamma_{k}$ with $\Pgk - c_{R}(\lambda) = V_{k} \downarrow 0$.
  Then we have that $\Exp c_{R}(\sQ) - c_{R}(\lambda) \leq V_{k}$.
  For a particular policy $\gamma$ in the sequence, $0 < \epsilon < s_{l}$, and $q_{1} \Deq s_{l} - \epsilon$ we have that $Pr\brac{Q < q_{1}} \leq \frac{V}{m\epsilon}$ as in the proof of Proposition \ref{lemma:case2}.
  We note that for the R-model, the queue evolution is on $\sR$.
  We discretize $\sR$ into a countable number of intervals $\brap{[0, \Delta), [\Delta, 2 \Delta), \dots}$.
  Using the upper bound on the stationary probability distribution from Proposition \ref{prelimresults:txprob_real}, and proceeding as in the proof of Proposition \ref{lemma:case2}, we can show that $\Qgk = \Omega\brap{\log\nfrac{1}{V_{k}}}$.
  A complete illustration of this proof technique is given in the proof of Proposition \ref{lemma:tradeoffutilitylb}.
  The asymptotic lower bound on $Q^*(P_{c})$ can be derived as in the proof of Proposition \ref{lemma:case2}.

  For Case 3, we proceed as in the proof of Proposition \ref{lemma:realvalued} by defining $q_{d}$ to be $\sup\brac{q : \sq \leq \lambda + \epsilon_{V}}$, where $\epsilon_{V}$ is a function of $V$ to be chosen in the following.
  We note that $\lambda = a_{p}$ for some $p > 1$.
  We recall that $D_{t} \Deq \frac{1}{\epsilon_{V}} \int_{q_{d}}^{\infty} \brap{\sq - \lambda} d\pi(q)$.
  For the policy $\gamma$ we have that
  \begin{eqnarray*}
    \Exp \bras{c_{R}(\sQ) - l(\sQ)} & \leq & V, \\
    \int_{q_{d}}^{\infty} \brap{c_{R}(\sq) - l(\sq)} d\pi(q) & \leq & V, \text{ or,} \\
    \frac{1}{\epsilon_{V}} \int_{q_{d}}^{\infty} \brap{\sq - \lambda} d\pi(q) & \leq & \frac{V}{m\epsilon_{V}},
  \end{eqnarray*}
  where $m$ is the tangent of angle made by the line passing through $( a_{p - 1}, c_{R}(a_{p - 1}))$ and $(\lambda, c_{R}(\lambda))$ with $l(s)$.
  Now we choose $\epsilon_{V} = \frac{4V}{m}$ to obtain that $D_{t} \leq \frac{1}{4}$.
  Then we proceed as in the proof of Proposition \ref{lemma:realvalued} to obtain that $\Qgk = \Omega\nfrac{1}{V_{k}}$.
  The asymptotic lower bound on $Q^*(P_{c})$ can be derived as in the proof of Proposition \ref{lemma:case3lb}.
  
  \subsection{Proof of Proposition \ref{lemma:tradeoffutilitylb}}
  \label{appendix:proof:tradeoffutilitylb}
  For the policy $\gamma$, let $q_{1} \stackrel{\Delta} = \sup\brac{q : \sq \leq \rho\lambda - \epsilon_{1}}$.
  Let $\Delta > 0$ and $\epsilon_{1} > 0$ be chosen such that $0 < \epsilon_{1} < \rho\lambda - \Delta$.
  We have that $\Expp c_{R}(\sQ) = \Expp \bras{ c_{R}(\rho\lambda) + \frac{dc_{R}(x)}{dx} \vert_{\rho\lambda} (\sQ - \rho\lambda) + G(\sQ - \rho\lambda)}$,
  where $G(x)$ is a strictly convex function as in \cite[eq (41)]{berry}, with $G(0) = 0$, $G'(0) = 0$, and $G''(0) > 0$.
  Since $\Sg = \Expp \sQ \geq \rho\lambda$ we have that
  \begin{eqnarray*}
    \Expp c_{R}(\sQ) - c_{R}(\rho\lambda) \geq \Expp G(\sQ - \rho\lambda).
  \end{eqnarray*}
  Thus, we have that $\Expp G(\sQ - \rho\lambda) \leq V$.
  Therefore, for $q_{1}$ as defined above, proceeding as in the proof of Lemma \ref{lemma:prelimresults:serviceratebound}, we have that there exists a positive $a_{1}$ such that
  \begin{eqnarray}
    Pr\brac{Q < q_{1}} \leq \frac{V}{a_{1} \epsilon_{1}^{2}}.
    \label{eq:q1upperb}
  \end{eqnarray}
  From Proposition \ref{prelimresults:txprob_real}, we have that for $m \geq 0$, 
  \begin{eqnarray*}
    \pi\bigg[q_{1} + m\Delta, q_{1} + (m + 1)\Delta\bigg) \leq \frac{\pi[0, q_{1})}{\rho_{d}}\brap{1 + \frac{1}{\rho_{d}}}^{m},
  \end{eqnarray*}
  where $\rho_{d} = Pr\brac{A[1] = 0}\frac{\rho\lambda - \epsilon_{1} - \Delta}{S_{max} - \Delta}$.
  Since $Pr\brac{A[1] = 0} \geq Pr\brac{R[1] = 0}$, with $\rho_{a} = Pr\brac{R[1] = 0}\frac{\rho\lambda - \epsilon_{1} - \Delta}{S_{max} - \Delta}$, we have that 
  \begin{eqnarray*}
    \pi\bigg[q_{1} + m\Delta, q_{1} + (m + 1)\Delta\bigg) \leq \frac{\pi[0, q_{1})}{\rho_{a}}\brap{1 + \frac{1}{\rho_{a}}}^{m},
  \end{eqnarray*}
  Hence, we have that for $m \geq 0$,
  \begin{eqnarray}
    \pi\left[q_{1}, q_{1} + m \Delta\right) & \leq & \frac{\pi[0, q_{1})}{\rho_{a}} \frac{\brap{1 + \frac{1}{\rho_{a}}}^{m} - 1}{1 + \frac{1}{\rho_{a}} - 1} \nonumber \\
        & = & \pi[0, q_{1}) \bras{\brap{1 + \frac{1}{\rho_{a}}}^{m} - 1}.
          \label{eq:probboundonq1toq1m}
  \end{eqnarray}
  Since $Pr\brac{Q < q_{1}} \leq \frac{V}{a_{1} \epsilon^{2}}$ (from \eqref{eq:q1upperb}), if $m$ is the largest integer such that  
  \begin{eqnarray*}
    \pi\left[0, q_{1}\right) + \pi\left[q_{1}, q_{1} + m \Delta\right) \leq \frac{1}{2},
  \end{eqnarray*}
  then $\overline{Q}(\gamma) \geq \frac{m\Delta}{2}$.
  From \eqref{eq:probboundonq1toq1m}, we have that
  \begin{eqnarray*}
    \pi\left[0, q_{1}\right) + \pi[0, q_{1}) \bras{\brap{1 + \frac{1}{\rho_{a}}}^{m_{1}} - 1} = \pi\left[0, q_{1}\right)\brap{1 + \frac{1}{\rho_{a}}}^{m_{1}}
  \end{eqnarray*}
  Suppose $m_{1}$ is the largest integer such that 
  \begin{eqnarray}
    \pi\left[0, q_{1}\right)\brap{1 + \frac{1}{\rho_{a}}}^{m_{1}} \leq \frac{1}{2}.
      \label{eq:lbut0}
  \end{eqnarray}
  Then $m_{1} \leq m$.
  Using \eqref{eq:q1upperb}, if $m_{2}$ is the largest integer such that
  \begin{eqnarray*}
    \brap{1 + \frac{1}{\rho_{a}}}^{m_{2}} & \leq & \frac{a_{1}\epsilon_{1}^{2}}{2V}, \text{ or }, \\
    m_{2} & \leq & \log_{\brap{1 + \frac{1}{\rho_{a}}}} \brap{\frac{a_{1}\epsilon_{1}^{2}}{2V}},
  \end{eqnarray*}
  then $m_{2} \leq m_{1}$.
  We have that 
  \[ m_{2} = \floor{\log_{\brap{1 + \frac{1}{\rho_{a}}}} \brap{\frac{a_{1}\epsilon_{1}^{2}}{2V}}}. \]
  Since $\overline{Q}(\gamma) \geq \frac{m\Delta}{2} \geq \frac{m_{1}\Delta}{2} \geq \frac{m_{2}\Delta}{2}$, we obtain that 
  \[ \overline{Q}(\gamma) \geq \frac{\Delta}{2} \brap{\log_{\brap{1 + \frac{1}{\rho_{a}}}} \brap{\frac{a_{1}\epsilon_{1}^{2}}{2V}} - 1}. \]
  So for the sequence of policies $\gamma_{k}$ with $V_{k} \downarrow 0$ we have that $\overline{Q}(\gamma_{k}) = \Omega\brap{\log\nfrac{1}{V_{k}}}$.
  Let $\gamma'_{k}$ be a sequence of $\epsilon$-optimal policies for TRADEOFF for the sequence $P_{c,k}$.
  Then we have that $\overline{P}(\gamma'_{k}) \downarrow c_{R}(\rho\lambda)$ and $\overline{Q}(\gamma'_{k}) = \Omega\brap{\log\nfrac{1}{{P_{c,k} - c_{R}(\lambda)}}}$.
  Since $\gamma'_{k}$ is $\epsilon$-optimal, we have that $Q^*(P_{c,k},\rho) \geq \overline{Q}(\gamma'_{k}) - \epsilon$.
  Therefore, $Q^*(P_{c,k}) = \Omega\brap{\log\nfrac{1}{{P_{c,k} - c_{R}(\lambda)}}}$ as $P_{c,k} \downarrow c_{R}(\lambda)$.\qeda

\end{appendices}

\end{document}